\documentclass[11pt]{article}
\usepackage[T1]{fontenc}
\usepackage{amsfonts}
\usepackage{amsmath}
\usepackage{amssymb}
\usepackage{amsthm}
\usepackage{bbm}
\usepackage{bm}
\usepackage{mathrsfs}
\usepackage{verbatim}
\usepackage{setspace}
\usepackage{color}
\usepackage{pdfsync}
\usepackage{enumitem}
\usepackage{graphicx}
\usepackage{tikz}
\usetikzlibrary{patterns}
\usepackage[margin=31 mm]{geometry}

\usepackage{tocloft}

\usepackage[pdfborder={0 0 0}]{hyperref}
\hypersetup{
  colorlinks=true,
  linkcolor=blue,
  citecolor=blue,
  urlcolor=blue,
  pdfauthor={Marcel Nutz, Moritz Voss},
  pdfkeywords={Stochastic optimal tracking, convergence rate, optimal execution, transient price impact},
  pdftitle={The Convergence Rate of Stochastic Tracking with Application to Optimal Execution},
  pdfsubject={Stochastic optimal tracking and optimal execution with transient price impact},
  pdfpagemode=UseNone
}

\usepackage[capitalize,noabbrev]{cleveref} %
\usepackage[showonlyrefs=true]{mathtools}

\theoremstyle{plain}
\newtheorem{theorem}{Theorem}[section]
\newtheorem{proposition}[theorem]{Proposition}
\newtheorem{lemma}[theorem]{Lemma}
\newtheorem{corollary}[theorem]{Corollary}
\theoremstyle{definition}

\newtheorem{remark}[theorem]{Remark}
\newtheorem{example}[theorem]{Example}
\newtheorem{assumption}[theorem]{Assumption}
\crefname{assumption}{Assumption}{Assumptions}
\Crefname{assumption}{Assumption}{Assumptions}
\theoremstyle{remark}
\AddToHook{env/theorem/begin}{\crefalias{section}{theorem}}
\AddToHook{env/proposition/begin}{\crefalias{theorem}{proposition}}
\AddToHook{env/lemma/begin}{\crefalias{theorem}{lemma}}
\AddToHook{env/corollary/begin}{\crefalias{theorem}{corollary}}
\AddToHook{env/definition/begin}{\crefalias{theorem}{definition}}
\AddToHook{env/remark/begin}{\crefalias{theorem}{remark}}
\AddToHook{env/example/begin}{\crefalias{theorem}{example}}
\AddToHook{env/assumption/begin}{\crefalias{theorem}{assumption}}
\crefname{theorem}{Theorem}{Theorems}
\crefname{proposition}{Proposition}{Propositions}
\crefname{lemma}{Lemma}{Lemmas}
\crefname{corollary}{Corollary}{Corollaries}
\crefname{definition}{Definition}{Definitions}
\crefname{remark}{Remark}{Remarks}
\crefname{example}{Example}{Examples}
\crefname{assumption}{Assumption}{Assumptions}

\renewenvironment{thebibliography}[1]{%
\begin{oldthebibliography}{#1}%
\linespread{1}
\small
\setlength{\parskip}{.35ex}%
\setlength{\itemsep}{.35em}%
}%
{%
\end{oldthebibliography}%
}
\newcommand{\eps}{\varepsilon}

\newcommand{\E}{\mathbb{E}}

\newcommand{\R}{\mathbb{R}}

\newcommand{\cF}{\mathcal{F}}

\newcommand{\cH}{\mathcal{H}}
\newcommand{\cY}{\mathcal{Y}}

\newcommand{\cQ}{\mathcal{Q}}
\newcommand{\cR}{\mathcal{R}}
\newcommand{\cS}{\mathcal{S}}

\newcommand{\cJ}{\mathcal{J}}

\newcommand{\sU}{\mathscr{U}}

\newcommand{\1}{\mathbf{1}}

\newcommand{\mysnorm}{[\xi]^2_{\rm{tr}}}

\newcommand{\mykill}[1]{}
\numberwithin{equation}{section}

\begin{document}

\title{\vspace{-0em}
  The Convergence Rate of Stochastic Tracking\\with Application to Optimal Execution
}
\author{Marcel Nutz\footnote{Departments of Statistics and Mathematics, Columbia University, mnutz@columbia.edu. Research supported by NSF Grants DMS-2106056, DMS-2407074, DMS-2606731.} \hspace{3ex} Moritz Voss\footnote{Department of Mathematics, University of California Los Angeles, voss@math.ucla.edu.}}
\date{\today}

\maketitle

\begin{abstract}
We study the quadratic tracking problem of a general stochastic target process with absolutely continuous controls, with and without terminal constraint. We derive explicit, non-asymptotic upper bounds in terms of a Besov-type modulus of the target. These bounds yield sharp explicit rates that specialize to the square-root order for semimartingale targets. We then apply these results to a generalized Obizhaeva--Wang execution model with random terminal inventory. We first develop a Hilbert-space approach to characterize its optimal strategy, which includes jumps. To avoid such trading spikes, one regularizes the problem by a quadratic trading-rate penalty with coefficient~$\varepsilon$. We then show that the regularized optimal execution cost---and therefore the excess price impact cost of the regularized optimal strategy---converges at the sharp rate \(O(\sqrt{\varepsilon})\). Since the regularized optimal strategy is not available in closed form, we further construct a nearly optimal strategy which is readily implementable and shares the same approximation rate.
\end{abstract}

\begin{description}
{\small \item[Mathematics Subject Classification (2020):] 93E20, 91G80, 60H30.    
\item[JEL Classification:] G11, C61.	

\item[Keywords:] Stochastic optimal tracking, convergence rate, optimal execution, transient price impact.}
\end{description}

\setcounter{tocdepth}{1}
\tableofcontents

\section{Introduction}

We consider stochastic optimal tracking problems of the form
\begin{equation}\label{def:tracking-intro}
 \underset{u}{\operatorname{min}}\;
 \E\left[
 \frac12\int_0^T (X_t^u-\xi_t)^2\,dt
 +
 \frac{\kappa}{2}\int_0^T u_t^2\,dt
 \right],
\end{equation}
in which a controller chooses the velocity
\(u=(u_t)_{0\le t\le T}\) of an absolutely continuous state process
\[
  X_t^u=x+\int_0^t u_s\,ds,
  \qquad
  0\le t\le T,
\]
so as to remain close to a prescribed progressively measurable target process
\(\xi=(\xi_t)_{0\le t\le T}\), while penalizing rapid adjustments through the
quadratic cost parameter \(\kappa>0\). In the constrained version, the
controller is additionally required to satisfy
\[
  X_T^u=\Xi_T
  \qquad \mathbb P\text{-a.s.}
\]
for a given random target position \(\Xi_T\) at time \(T\). Our main interest
is the small-friction regime \(\kappa\downarrow0\), in which we investigate how
fast the optimal tracking cost vanishes as adjustment costs disappear. In our
optimal execution application, the running target \(\xi\) will be the
inventory of the optimal execution strategy in a generalized Obizhaeva--Wang model. The constrained tracking
problem will quantify the cost of replacing this strategy, which generally has infinite variation and includes block trades, by a smoother strategy while preserving the
random terminal position~\(\Xi_T\).

For fixed \(\kappa>0\), semi-explicit solutions for both the unconstrained and
constrained versions of~\eqref{def:tracking-intro} were derived by Bank,
Soner, and Vo{\ss}~\cite{bank.al.17} in terms of certain kernels. These
formulas provide useful insights into the stochastic tracking
problem. It is, however, not straightforward to derive general properties, such as asymptotic bounds that can be read from regularity properties of the target (or at least we have not succeeded in doing so).

Our first main contribution is to derive constructive, non-asymptotic upper
bounds for the value function~\eqref{def:tracking-intro}. For the unconstrained problem, the bound is expressed in
terms of an exponentially weighted average of the squared \(L^2\)
time-translation modulus of the target and immediately yields a generic
\(O(\sqrt{\kappa})\) rate whenever the associated seminorm
in~\eqref{eq:seminorm} is finite. We show that this includes
all square-integrable c\`adl\`ag semimartingales, as well as certain non-semimartingale
targets such as fractional Brownian motion with Hurst parameter \(H>1/2\). In general, a lower bound shows that whenever the modulus behaves approximately like a power law, it fully determines the convergence rate up to multiplicative constants. 
We show that the generic square-root rate is sharp for continuous
martingales and for certain deterministic targets with
jumps. By contrast, for fractional Brownian motion with \(H<1/2\), the
time-translation seminorm is infinite and~\eqref{def:tracking-intro} vanishes at the
slower rate \(\kappa^H\).

For the constrained problem with prescribed terminal state $\Xi_T$, the upper bound is
additionally governed by the short-horizon mismatch between the running
target \(\xi\) and the terminal-target martingale
\[
  \Xi_t:=\E[\Xi_T\mid\cF_t],
  \qquad 0\le t\le T,
\]
as well as by the local cost of learning \(\Xi_T\) near maturity. The latter
is related to the reachability criterion identified
in~\cite{bank.al.17} as the necessary and sufficient admissibility condition for a random
terminal position. Again, the typical rate is \(O(\sqrt{\kappa})\) in the most relevant cases, and sharpness is established in benchmark examples.
Importantly, all our upper bounds are constructive. In the unconstrained
problem, the bound follows from a suboptimal state $X^u$ that exponentially filters the target at relaxation
scale \(\sqrt{\kappa}\). In the constrained problem, this tracker is
concatenated with a terminal bridge that enforces the random terminal
position. This construction will also be used to produce an explicit, implementable execution strategy in the second part of the paper.

Indeed, our second main contribution concerns optimal execution with a stochastic
terminal target. The model is motivated by the optimal execution problem of
a central risk book (CRB) as in Nutz, Webster, and
Zhao~\cite{NutzWebsterZhao.26}. A CRB receives random order flow while
unwinding its inventory, so that the total quantity to be executed is learned
over time and the terminal constraint is inherently stochastic. If the order flow is truth-telling (cf.~\cite{NutzWebsterZhao.26}), the time-$t$ inventory is precisely $\Xi_t$. We study a
generalized Obizhaeva--Wang model~\cite{ObizhaevaWang.13} of transient price impact with deterministic time-varying resilience and depth, random terminal position \(\Xi_T\), and
c\`adl\`ag semimartingale execution strategies. The central financial
question is the cost of suppressing the block trades and infinite variation of the optimal singular
strategy by imposing absolute continuity and a quadratic penalty on the
trading rate. This question was first posed by Ulrich Horst in
connection with the regularized model that was used in~\cite{NutzWebsterZhao.26}.

We first solve the unregularized execution problem explicitly by a direct
Hilbert-space argument. This yields the unique optimal c\`adl\`ag
semimartingale strategy without relying on the backward stochastic
differential equation (BSDE) techniques, and the corresponding mathematical restrictions,
employed in previous works (see also below). The
optimal inventory is driven by the terminal-target martingale
\((\Xi_t)_{0\le t\le T}\) and therefore typically has infinite variation, in addition to block trades. To produce an absolutely continuous optimal execution strategy, the problem is often regularized by adding an instantaneous
quadratic penalty on the trading rate (as in, for example, Graewe and
Horst~\cite{GraeweHorst.17}), forcing execution strategies to avoid spikes in
trading activity that would leak too much information to the market.

Our Hilbert-space formulation provides the connection between the
execution and tracking parts of the paper. Writing $J_0(Q)$ for the execution cost of an inventory strategy $Q$ in the (unregularized) price impact model and \(Q^0\) for the optimal inventory, the excess cost of $Q$ is precisely the squared distance between the corresponding impact processes $Y^Q$ and $Y^{Q^0}$ in the Hilbert space~\(\cH\) that we introduce:
\[
  J_0(Q)-J_0(Q^0)
  =
  \|Y^Q-Y^{Q^0}\|_{\cH}^2.
\]
By the impact dynamics of the Obizhaeva--Wang model, the right-hand side is further bounded by a constant times $\E\int_0^T |Q_t-Q_t^0|^2\,dt$. As a result, the approximation of the regularized execution problem reduces to constrained tracking of the
unregularized optimizer \(Q^0\) by an absolutely continuous strategy~$Q$ with terminal position $Q_T=Q_T^0$.

We verify that \(Q^0\) has a finite squared \(L^2\) time-translation seminorm
and satisfies a uniform \(L^2\) bound relative to the terminal-target
martingale. These estimates allow us to specialize the constructive tracking
mechanism developed in the first part of the paper. The resulting strategy
\(Q^\eps\)  exponentially tracks \(Q^0\) with relaxation scale
\(\sqrt{\eps}\) until the deterministic time \(T-\sqrt{\eps}\), and then
switches to a terminal bridge enforcing \(Q_T^\eps=\Xi_T\). The feedback form~\eqref{eq:Qeps-rate} of \(Q^\eps\) is simple, explicit, and implementable; it does not require solving the less tractable regularized optimization problem.

We derive non-asymptotic bounds showing that the excess price
impact cost
\(
  J_0(Q^\eps)-J_0(Q^0)
\)
is of order \(O(\sqrt{\eps})\). Thus, \(Q^\eps\) is an implementable,
nearly optimal alternative to the generally unknown regularized optimizer. Indeed, it approximates $J_0(Q^0)$ at the same rate as the true optimizer: we also show that the difference between the regularized cost of $Q^\eps$ and the unregularized value function is of order \(O(\sqrt{\eps})\), which is a stronger result and implies that the difference between the value functions of the regularized and unregularized problems is of order \(O(\sqrt{\eps})\). Finally, we show that this rate is sharp in the benchmark case of constant model parameters and deterministic terminal position.

\bigskip

Our paper sits at the intersection of two strands of the literature. The
first concerns optimal tracking of a general target
process with absolutely continuous controls and a random terminal state
constraint, as introduced by Bank, Soner, and
Vo\ss{}~\cite{bank.al.17}, whose setup we adopt. Bank and
Vo\ss{}~\cite{BankVoss.18} study a more general framework with stochastic
coefficients. Related examples of targeting problems and terminal-state
constraints appear in Ankirchner and Kruse~\cite{annkirchner.kruse.15},
Graewe and Horst~\cite{GraeweHorst.17}, and Dolinsky, Gottesman, and
Gurel-Gurevich~\cite{DolinskyGottesmanGurelGurevich.20}, among others. These
works, however, do not study the optimal tracking of a general
stochastic target process. 
Compared with~\cite{bank.al.17}, our contribution
differs in emphasis: rather than seeking exact solution formulas, we derive
non-asymptotic bounds tied directly to the roughness of the target and, in the
constrained problem, to the information flow associated with the random
terminal position. Cai, Rosenbaum, and Tankov derive asymptotic lower bounds
for a class of tracking objectives and control types
in~\cite{CaiRosenbaumTankov.17} and construct asymptotically optimal feedback
strategies for many examples in the companion paper
\cite{CaiRosenbaumTankov.17feedback}. Their framework is restricted to
continuous It\^o-semimartingale targets and does not include random terminal
state constraints, and therefore does not apply directly to our problem.

The second strand concerns transient price impact and optimal execution. Variants of the Obizhaeva--Wang model~\cite{ObizhaevaWang.13} with
deterministic time-varying illiquidity parameters were analyzed, among others,
by Bank and Fruth~\cite{BankFruth.14}, Alfonsi and
Acevedo~\cite{AlfonsiAcevedo.14}, and Fruth, Sch\"oneborn, and
Urusov~\cite{FruthSchonebornUrusov.13}. Extensions with stochastic liquidity
include Fruth, Sch\"oneborn, and Urusov~\cite{FruthSchonebornUrusov.19} and
Ackermann, Kruse, and
Urusov~\cite{AckermannKruseUrusov.21disc,AckermannKruseUrusov.21cont,AckermannKruseUrusov.22}.
In contrast to these works, we derive the optimal strategy for our model
through a direct Hilbert-space argument rather than through BSDEs. See also Webster~\cite{Webster.23} and Cartea, Jaimungal, and Penalva~\cite{CarteaJaimungalPenalva.15} for extensive references.

The regularized problem with quadratic instantaneous costs is related to the
optimal liquidation problems studied by Graewe and
Horst~\cite{GraeweHorst.17}, Nutz, Webster, and
Zhao~\cite{NutzWebsterZhao.26}, and Chen, Horst, and
Tran~\cite{ChenHorstTran.25}. The latter obtain an explicit solution in the
benchmark case of constant coefficients and a deterministic terminal
position. Following~\cite{NutzWebsterZhao.26}, we cover deterministic time-varying coefficients and a general
stochastic terminal target. 
The work of Horst and Kivman~\cite{HorstKivman.24} is comparable to our asymptotic analysis; they study the convergence of the optimal execution strategy in the optimal liquidation framework of~\cite{GraeweHorst.17} when the instantaneous cost parameter vanishes. The convergence result is qualitative, it holds in probability and uniformly in time over time intervals bounded away from the initial and terminal times. At $t=0$ and $t=T$, this convergence fails as the limiting strategy has jumps. They show that the limiting c\`adl\`ag semimartingale strategy solves an optimal liquidation problem of Obizhaeva–Wang-type as in~\cite{AckermannKruseUrusov.21cont} with a deterministic liquidation constraint, but the speed of convergence is not considered. Our focus is on random terminal
targets and quantitative results, both for the optimizer and the directly implementable strategy that we propose. In a different direction, related asymptotic questions are also considered
by Dolinskyi and Dolinsky~\cite{DolinskyiDolinsky.24}, who study
exponential-utility indifference prices of European options under a
liquidation constraint in a regime of small linear price impact and high risk
aversion.

\bigskip

The remainder of the paper is organized as follows.
\Cref{sec:unconstrained} studies the asymptotics of the unconstrained tracking
problem, and \cref{sec:constrained} treats the version with a random terminal
state constraint. In \cref{sec:ow}, we introduce and explicitly solve the
unregularized Obizhaeva--Wang execution problem. Finally,
\cref{sec:ow-asymptotics} combines the Hilbert-space structure of the
execution problem with the tracking mechanisms from
\cref{sec:unconstrained,sec:constrained} to derive an explicit absolutely
continuous execution strategy with non-asymptotic cost guarantees, as well as the sharp
square-root convergence rate for the regularized value.

\section{Unconstrained optimal tracking}\label{sec:unconstrained}

Fix $T>0$, a filtered probability space $(\Omega,\mathcal{F},(\mathcal{F}_t)_{0\le t\le T},\mathbb{P})$ satisfying the usual conditions, and an initial position $x\in\R$. For any process $u$ in the class of admissible controls
\begin{equation}\label{eq:controls}
\sU
=
\left\{
u=(u_t)_{0\le t\le T}:\;
u \text{ is progressively measurable and }
\E\int_0^T u_t^2\,dt<\infty
\right\},
\end{equation}
we define the associated state process
\begin{equation} \label{def:stateprocess}
X_t^u=x+\int_0^t u_s\,ds,
\qquad
0\le t\le T.
\end{equation}
Throughout, we fix a progressively measurable target process
\(\xi=(\xi_t)_{0\le t\le T}\) satisfying
\begin{equation}\label{eq:xi-L2}
\xi_0\in L^2(\mathbb{P}),
\qquad
\E\int_0^T \xi_t^2\,dt<\infty,
\end{equation}
and consider the quadratic tracking objective 
\begin{equation}\label{eq:objective}
J_\kappa(u)
=
\E\left[
\frac12\int_0^T (X_t^u-\xi_t)^2\,dt
+
\frac{\kappa}{2}\int_0^T u_t^2\,dt
\right],
\qquad
\kappa\ge 0.
\end{equation}
The associated value is
\begin{equation}\label{eq:value-function}
v(\kappa):=\inf_{u\in\sU}J_\kappa(u).
\end{equation}

Our aim is to quantify the behavior of $v(\kappa)$ as $\kappa\downarrow0$, which clearly depends on the regularity of the target~$\xi$. Our key insight is that the relevant roughness can be quantified by the following (squared) $L^2$ time-translation modulus of $\xi$:
\begin{equation}\label{eq:omega}
\omega_\xi(h)
:=
\E\int_0^T \left|\xi_t-\xi_{(t-h)^+}\right|^2\,dt,
\qquad
h>0,
\end{equation}
where $(t-h)^+:=\max\{t-h,0\}$. By definition,
\begin{align*}
\omega_\xi(h) & = \E\int_0^h \left|\xi_t-\xi_{0}\right|^2\,dt + \E\int_h^T \left|\xi_t-\xi_{t-h}\right|^2 dt, \qquad  0 < h \leq T,  \\ 
\omega_\xi(h) & = \omega_\xi(T) = \E\int_0^T \left|\xi_t-\xi_{0}\right|^2\,dt, \qquad
h\ge T,
\end{align*}
and $\lim_{h \downarrow 0} \omega_\xi(h) = 0$. In particular, note that $\omega_\xi(h) < \infty$ for all $h > 0$, for any target process $\xi$ satisfying~\eqref{eq:xi-L2}. Moreover, we introduce
\begin{equation} \label{eq:seminorm}
   \mysnorm :=\sup_{h>0}\frac{\omega_\xi(h)}{h}    
\end{equation}
and refer to it as the squared $L^2$ time-translation seminorm of $\xi$. %

The following theorem provides a general upper bound on the value~\eqref{eq:value-function} in terms of an exponentially weighted average of the time-translation modulus $\omega_\xi(\cdot)$ of $\xi$.

\begin{theorem}\label{thm:master}
For every $\kappa>0$,
\begin{equation}\label{eq:master}
v(\kappa)
\le
\sqrt{\kappa}\,\E[(x-\xi_0)^2]
+
\frac{2}{\sqrt{\kappa}}
\int_0^\infty e^{-h/\sqrt{\kappa}}\,\omega_\xi(h)\,dh.
\end{equation}
\end{theorem}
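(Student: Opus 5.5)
The plan is to exhibit an explicit suboptimal control: the exponential filter of the target with relaxation scale $\delta:=\sqrt{\kappa}$. Concretely, let $X$ solve the linear ODE $dX_t=\delta^{-1}(\xi_t-X_t)\,dt$ with $X_0=x$, i.e.
\[
X_t=e^{-t/\delta}x+\int_0^t \frac1\delta e^{-(t-s)/\delta}\,\xi_s\,ds,
\qquad
u_t:=\frac{\xi_t-X_t}{\delta}.
\]

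\emph{Admissibility.} $X$ is continuous and adapted, and $u$ is progressively measurable since $\xi$ is. By Jensen's inequality and \eqref{eq:xi-L2}, $\E\int_0^T X_t^2\,dt<\infty$. Hence $\E\int_0^T u_t^2\,dt<\infty$, so $u\in\sU$ and $X=X^u$.

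\emph{Reduction of the cost.} With the choice $\delta^2=\kappa$, the two terms of $J_\kappa(u)$ combine:
\[
J_\kappa(u)=\Bigl(\tfrac12+\tfrac{\kappa}{2\delta^2}\Bigr)\E\int_0^T (X_t-\xi_t)^2\,dt=\E\int_0^T (X_t-\xi_t)^2\,dt .
\]
Since $v(\kappa)\le J_\kappa(u)$, it remains to bound this tracking error.

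\emph{Representation of the error.} Substitute $h=t-s$ and use that $\frac1\delta e^{-h/\delta}\,dh$ is a probability measure on $(0,\infty)$. Also use $\int_t^\infty \frac1\delta e^{-h/\delta}\,dh=e^{-t/\delta}$, and extend $\xi$ by $\xi_s:=\xi_0$ for $s<0$, which is encoded by $(t-h)^+$. This gives the identity
\[
X_t-\xi_t=e^{-t/\delta}(x-\xi_0)+\int_0^\infty \frac1\delta e^{-h/\delta}\bigl(\xi_{(t-h)^+}-\xi_t\bigr)\,dh .
\]

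\emph{Estimates.} Apply $(a+b)^2\le 2a^2+2b^2$, and bound the square of the second term by Jensen's inequality for the exponential probability measure:
\[
\Bigl(\int_0^\infty \tfrac1\delta e^{-h/\delta}(\xi_{(t-h)^+}-\xi_t)\,dh\Bigr)^2\le \int_0^\infty \tfrac1\delta e^{-h/\delta}\,|\xi_{(t-h)^+}-\xi_t|^2\,dh.
\]
Take expectations, integrate over $t\in[0,T]$, and use Tonelli to swap the $h$-integral with $\E\int_0^T\cdot\,dt$. This turns the second contribution into $\frac{2}{\delta}\int_0^\infty e^{-h/\delta}\omega_\xi(h)\,dh$, by the definition \eqref{eq:omega}. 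The first contribution is
\[
2\,\E[(x-\xi_0)^2]\int_0^T e^{-2t/\delta}\,dt\le \delta\,\E[(x-\xi_0)^2].
\]
Substituting $\delta=\sqrt\kappa$ yields exactly \eqref{eq:master}.

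\emph{Main obstacle.} The only delicate point is the bookkeeping in the representation of $X_t-\xi_t$. The initial mismatch $x-\xi_0$ must be split off so that the remaining kernel integral runs over the full half-line with a probability weight; this is what allows Jensen's inequality to be applied directly and makes the constants come out as stated. Everything else is routine.
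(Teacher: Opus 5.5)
Your proposal is correct and follows the paper's proof essentially step by step. It uses the same exponential filter at relaxation scale $\sqrt{\kappa}$ and the same cancellation $J_\kappa(u)=\E\int_0^T (X_t-\xi_t)^2\,dt$. It also uses the same error representation via the exponential probability measure with the $(t-h)^+$ convention, and the same estimate combining $(a+b)^2\le 2a^2+2b^2$, Jensen, and Fubini.
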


\begin{proof}
The idea of the proof is to construct a control $u \in\sU$ such that $J_\kappa(u)$ is bounded by the right-hand side of~\eqref{eq:master}. It turns out that this is achieved by a myopic control which steers $X^u_t$ toward $\xi_t$ at every instant $t \in [0,T]$ at a suitably chosen rate.

Specifically, set $a=\sqrt{\kappa}$, consider the progressively measurable process $X$ defined by
\begin{equation}\label{eq:benchmarkX}
X_t
=
x+\frac1a\int_0^t (\xi_s-X_s)\,ds,
\qquad
0\le t\le T,
\end{equation}
and define
\[
u_t:=\frac1a(\xi_t-X_t), \qquad 0\le t\le T.
\]
The linear equation \eqref{eq:benchmarkX} has the explicit solution
\begin{equation}\label{eq:Xexplicit}
X_t
=
e^{-t/a}x+\frac1a\int_0^t e^{-(t-s)/a}\xi_s\,ds.
\end{equation}
By Jensen's inequality and \eqref{eq:xi-L2}, we have \(X\in L^2(\Omega\times[0,T])\), hence \(u\in\sU\) and \(X=X^u\).

Consider the error
\[
e_t:=X_t-\xi_t.
\]
Since \(u_t=-a^{-1}e_t\) and \(a^2=\kappa\), the tracking objective in~\eqref{eq:objective} can be written as
\begin{equation}\label{eq:cost-equals-error}
J_\kappa(u)=\E\int_0^T e_t^2\,dt.
\end{equation}

Moreover, let \(\mu_a(dh)=a^{-1}e^{-h/a}\,dh\), a probability measure on \([0,\infty)\).
By the change of variables \(h=t-s\) in \eqref{eq:Xexplicit}, we can write
\[
X_t
=
e^{-t/a}x+\int_0^t \xi_{t-h}\,\mu_a(dh).
\]
Since \(\int_t^\infty \mu_a(dh)=e^{-t/a}\), we obtain
\begin{align} 
e_t = & \, e^{-t/a} x + \int_0^t \xi_{t-h} \, \mu_a(dh) - \int_0^{+\infty} \xi_t\,\mu_a(dh) \nonumber \\
= & \, e^{-t/a}(x-\xi_0) - \int_0^\infty \bigl(\xi_t-\xi_{(t-h)^+}\bigr)\,\mu_a(dh). \label{eq:error}    
\end{align}
Hence, by \((r+s)^2\le 2r^2+2s^2\) and Jensen's inequality,
\[
|e_t|^2
\le
2e^{-2t/a}(x-\xi_0)^2
+
2\int_0^\infty \left|\xi_t-\xi_{(t-h)^+}\right|^2\,\mu_a(dh).
\]
Integrating over \(t\in[0,T]\), taking expectations, and applying Fubini's
theorem yield
\begin{align*}
\E\int_0^T e_t^2\,dt
&\le
2\E[(x-\xi_0)^2]\int_0^T e^{-2t/a}\,dt
+
2\int_0^\infty \omega_\xi(h)\,\mu_a(dh)\\
&\le
a\,\E[(x-\xi_0)^2]
+
\frac{2}{a}\int_0^\infty e^{-h/a}\omega_\xi(h)\,dh.
\end{align*}
Combining this estimate with \eqref{eq:cost-equals-error} and
\(v(\kappa)\le J_\kappa(u)\) proves \eqref{eq:master}.
\end{proof}

\begin{remark}\label{rem:initial-match}
If $x=\xi_0$ almost surely, the proof of \cref{thm:master} gives the
sharper estimate
\begin{equation} \label{eq:sharper}
v(\kappa)
\le
\frac{1}{\sqrt{\kappa}}
\int_0^\infty e^{-h/\sqrt{\kappa}}\,\omega_\xi(h)\,dh.    
\end{equation}
Indeed, in that case, the initial displacement term vanishes, and Jensen's inequality can be applied directly in~\eqref{eq:error} to omit the factor two.
\end{remark}

The next corollary shows that the $L^2$ time-translation seminorm of $\xi$ in~\eqref{eq:seminorm} provides the natural constant associated with the generic rate $\sqrt{\kappa}$ in the limiting behavior of $v(\kappa)$. We write $\Gamma(\cdot)$ for the usual Gamma function.

\begin{corollary}\label{cor:rates}
Suppose that, for some constants \(L>0\) and \(\alpha>0\),
\begin{equation}\label{eq:holder-modulus}
\omega_\xi(h)\le Lh^\alpha,
\qquad
h>0.
\end{equation}
Then, for every \(\kappa>0\),
\begin{equation}\label{eq:rate-alpha}
v(\kappa)
\le
\sqrt{\kappa}\,\E[(x-\xi_0)^2]
+
2L\,\kappa^{\alpha/2}\,\Gamma(\alpha+1).
\end{equation}
Consequently, for \(0<\kappa\le1\),
\begin{equation}\label{eq:rate-min-alpha-one}
v(\kappa)
\le
\kappa^{(\alpha\wedge1)/2}\left(\E[(x-\xi_0)^2]+2L\,\Gamma(\alpha+1)\right).
\end{equation}
In particular, if \( \mysnorm <\infty\), it holds that
\begin{equation}\label{eq:rate-sqrtkappa}
v(\kappa)
\le
\sqrt{\kappa}\,\bigl(\E[(x-\xi_0)^2]+2 \mysnorm \bigr)
\end{equation}
for all \(\kappa>0\). Finally, if \(x=\xi_0\) a.s., we have the sharper estimate
\begin{equation}\label{eq:rate-alpha-initial-match}
v(\kappa)
\le
L\kappa^{\alpha/2}\,\Gamma(\alpha+1), \qquad \kappa>0.
\end{equation}
\end{corollary}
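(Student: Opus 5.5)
The plan is to plug the power-law bound \eqref{eq:holder-modulus} directly into the master estimate of \cref{thm:master} and evaluate the resulting Laplace-type integral in closed form. With $a=\sqrt{\kappa}$, the change of variables $h=as$ gives
\[
\frac{2}{a}\int_0^\infty e^{-h/a}\,L h^\alpha\,dh
=2L a^{\alpha}\int_0^\infty e^{-s}s^\alpha\,ds
=2L\kappa^{\alpha/2}\Gamma(\alpha+1).
\]
Inserted into \eqref{eq:master}, this yields \eqref{eq:rate-alpha}. Since $\omega_\xi$ is nonnegative, monotone integration is legitimate, so no integrability issue arises.

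For \eqref{eq:rate-min-alpha-one}, I will use that $\kappa^{1/2}\le\kappa^{(\alpha\wedge1)/2}$ and $\kappa^{\alpha/2}\le\kappa^{(\alpha\wedge1)/2}$ whenever $0<\kappa\le1$, because the exponent $(\alpha\wedge1)/2$ is at most both $1/2$ and $\alpha/2$. Factoring out $\kappa^{(\alpha\wedge1)/2}$ then gives the claim.

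For \eqref{eq:rate-sqrtkappa}, I will take $\alpha=1$ and $L=\mysnorm$. By the definition \eqref{eq:seminorm}, we have $\omega_\xi(h)\le \mysnorm\, h$ for all $h>0$. If $\mysnorm=0$, then $\omega_\xi\equiv0$ and the integral term in \eqref{eq:master} vanishes, so the bound holds trivially. Otherwise, \eqref{eq:rate-alpha} applies with $\Gamma(2)=1$, giving $v(\kappa)\le\sqrt{\kappa}\,\E[(x-\xi_0)^2]+2\mysnorm\sqrt{\kappa}$ for every $\kappa>0$, with no restriction $\kappa\le1$.

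Finally, for \eqref{eq:rate-alpha-initial-match}, I will start from the sharper bound \eqref{eq:sharper} of \cref{rem:initial-match}. There the initial term is absent and the factor $2$ is missing, so the same substitution yields $L\kappa^{\alpha/2}\Gamma(\alpha+1)$.

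There is no real obstacle here. The only points needing care are the degenerate case $L=0$ in the seminorm step, which is handled above, and checking the exponent comparison for $\kappa\le1$.
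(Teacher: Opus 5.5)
Your proposal is correct and follows the paper's proof essentially step by step: you insert the power-law bound into \cref{thm:master} and apply the substitution \(h=\sqrt{\kappa}\,r\), then compare exponents for \(\kappa\le1\), specialize to \(\alpha=1\), \(L=\mysnorm\), and use \eqref{eq:sharper} from \cref{rem:initial-match} for the matched initial condition. Your separate treatment of \(\mysnorm=0\) is a small extra bit of care that the paper skips, since the hypothesis \(L>0\) is not actually needed in the computation.
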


\begin{proof}
Insert \eqref{eq:holder-modulus} into \eqref{eq:master}. With the change of
variables \(h=\sqrt{\kappa}\,r\),
\[
\frac1{\sqrt{\kappa}}
\int_0^\infty e^{-h/\sqrt{\kappa}}h^\alpha\,dh
=
\kappa^{\alpha/2}\int_0^\infty e^{-r}r^\alpha\,dr
=
\Gamma(\alpha+1)\kappa^{\alpha/2}.
\]
This proves \eqref{eq:rate-alpha}. For \(0<\kappa\le1\), the two terms in
\eqref{eq:rate-alpha} are both bounded by the right-hand side of
\eqref{eq:rate-min-alpha-one}, depending on whether \(\alpha < 1\) or
\(\alpha \geq 1\). Moreover, \eqref{eq:rate-sqrtkappa} is the special case with
\(L=\mysnorm\) and \(\alpha=1\). Lastly, if \(x=\xi_0\) a.s., the sharper estimate in~\eqref{eq:sharper} from~\cref{rem:initial-match} applies, and the same change of variables from above yields \eqref{eq:rate-alpha-initial-match}. 
\end{proof}

The preceding upper bounds admit a complementary lower bound in terms of the same time-translation modulus, showing that under mild scale regularity of \(\omega_\xi\), the modulus determines the convergence rate up to multiplicative constants.

\begin{proposition}[Lower bound]
\label{prop:general-lower-bound}
For every \(0<\kappa\le T^2/2\),
\begin{equation}\label{eq:general-lower-bound}
  v(\kappa)
  \ge
  \frac{1}{12}\,
  \omega_\xi\!\left(\sqrt{2\kappa}\right)
  -
  \frac{\sqrt{2}}{4}\sqrt{\kappa}\,
  \E\bigl[(x-\xi_0)^2\bigr].
\end{equation}
In particular, if \(x=\xi_0\) almost surely, then
\begin{equation}\label{eq:tracking-sandwich}
  \frac{1}{12}\,
  \omega_\xi\!\left(\sqrt{2\kappa}\right)
  \le
  v(\kappa)
  \le
  \frac{1}{\sqrt{\kappa}}
  \int_0^\infty
  e^{-h/\sqrt{\kappa}}\omega_\xi(h)\,dh,
  \qquad
  0<\kappa\le T^2/2.
\end{equation}
If \(x=\xi_0\) almost surely and 
\begin{equation}\label{eq:two-sided-modulus}
  c h^\alpha
  \le
  \omega_\xi(h)
  \le
  C h^\alpha,
  \qquad
  0<h\le T
\end{equation}
for some \(c,C,\alpha>0\), then
\begin{equation}\label{eq:two-sided-rate}
  \frac{c\,2^{\alpha/2}}{12}\,
  \kappa^{\alpha/2}
  \le
  v(\kappa)
  \le
  C\Gamma(\alpha+1)\kappa^{\alpha/2},
  \qquad
  0<\kappa\le T^2/2
\end{equation}
and thus
\[
  v(\kappa)
  \asymp
  \kappa^{\alpha/2}
  \asymp
  \omega_\xi(\sqrt{\kappa}),
  \qquad
  \kappa\downarrow0.
\]
\end{proposition}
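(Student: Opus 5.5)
The plan is a pathwise comparison: for an arbitrary admissible control $u\in\sU$, decompose the increments of the target $\xi$ into tracking errors and increments of the state $X^u$. Then bound the state increments by the control energy through Cauchy--Schwarz. Choosing the lag $h=\sqrt{2\kappa}$ balances the two contributions so that they combine exactly into $J_\kappa(u)$. Taking the infimum over $u$ then gives \eqref{eq:general-lower-bound}.

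\textbf{Step 1: decomposition.} Fix $u\in\sU$, write $X=X^u$ and $e_t:=X_t-\xi_t$, and fix $0<h\le T$.
\begin{itemize}
\item For $t\in[h,T]$,
\[
\xi_t-\xi_{t-h}=-e_t+(X_t-X_{t-h})+e_{t-h}.
\]
\item For $t\in[0,h)$,
\[
\xi_t-\xi_0=-e_t+(X_t-x)+(x-\xi_0).
\]
\end{itemize}
In both cases $(a+b+c)^2\le 3(a^2+b^2+c^2)$ applies.

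\textbf{Step 2: bounding the state increments.} By Cauchy--Schwarz,
\[
|X_t-X_{t-h}|^2\le h\int_{t-h}^t u_s^2\,ds .
\]
Integrating over $t\in[h,T]$ and using Fubini (each $s$ lies in at most an interval of length $h$ of $t$'s) gives at most $h^2\int_0^T u_s^2\,ds$. Similarly, $|X_t-x|^2\le t\int_0^t u_s^2\,ds\le h\int_0^T u^2$ for $t<h$, and integrating over $[0,h)$ again gives at most $h^2\int_0^T u^2$. These two pieces live on disjoint $t$-ranges, so together they contribute at most $h^2\,\E\int_0^T u^2$. The shifted errors satisfy $\int_h^T e_{t-h}^2\,dt\le \int_0^T e_t^2\,dt$. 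The constant term contributes $h\,\E[(x-\xi_0)^2]$.

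\textbf{Step 3: choosing the lag.} Collecting Steps 1 and 2,
\[
\omega_\xi(h)\le 3\Bigl(2\,\E\!\int_0^T e_t^2\,dt+h^2\,\E\!\int_0^T u_t^2\,dt+h\,\E[(x-\xi_0)^2]\Bigr).
\]
With $h=\sqrt{2\kappa}$, which is admissible ($h\le T$) precisely when $\kappa\le T^2/2$, the first two terms become
\[
6\Bigl(\E\!\int e^2+\kappa\,\E\!\int u^2\Bigr)=12\,J_\kappa(u).
\]
Hence
\[
\omega_\xi(\sqrt{2\kappa})\le 12J_\kappa(u)+3\sqrt{2}\sqrt{\kappa}\,\E[(x-\xi_0)^2].
\]
Dividing by $12$ and taking the infimum over $u$ yields \eqref{eq:general-lower-bound}, since $3\sqrt{2}/12=\sqrt{2}/4$.

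\textbf{Step 4: the remaining claims.}
\begin{itemize}
\item The sandwich \eqref{eq:tracking-sandwich} combines \eqref{eq:general-lower-bound}, with the initial term vanishing when $x=\xi_0$, and the upper bound \eqref{eq:sharper} of \cref{rem:initial-match}.
\item For \eqref{eq:two-sided-rate}, the lower half follows from $\omega_\xi(\sqrt{2\kappa})\ge c(2\kappa)^{\alpha/2}$.
\item For the upper half of \eqref{eq:two-sided-rate}, \eqref{eq:rate-alpha-initial-match} requires the power bound for all $h>0$, not just $h\le T$. For $h\ge T$ one has $\omega_\xi(h)=\omega_\xi(T)\le CT^\alpha\le Ch^\alpha$, so the bound extends to all $h>0$.
\item The asymptotic equivalences follow directly from \eqref{eq:two-sided-rate} together with \eqref{eq:two-sided-modulus}.
\end{itemize}

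The only delicate point is the bookkeeping in Step 2. The lag $h$ must be chosen so that the error and energy terms carry coefficients in the ratio $1:\kappa$, which forces $h^2=2\kappa$. The boundary region $t<h$ must also be handled separately, and this is where the initial-mismatch correction term arises.
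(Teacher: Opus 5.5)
Your overall route is the same as the paper's. The two cases of your decomposition are exactly $\xi_t-\xi_{(t-h)^+}=(X_t-X_{(t-h)^+})-e_t+e_{(t-h)^+}$, since $e_0=x-\xi_0$. The target inequality $\omega_\xi(h)\le 6A+3h^2B+3h\,\E[(x-\xi_0)^2]$, the choice $h=\sqrt{2\kappa}$, and Step 4 also coincide with the paper.

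However, the combination step at the end of Step 2 does not follow as written. You bound each of the two pieces separately by $h^2\int_0^T u_s^2\,ds$. On $[0,h)$ you even enlarge $t\int_0^t u_s^2\,ds$ to $h\int_0^T u_s^2\,ds$, so that piece alone already uses the full $h^2\int_0^T u_s^2\,ds$. Disjointness of the $t$-ranges only lets you add the two bounds, which gives $2h^2\int_0^T u_s^2\,ds$, not $h^2\int_0^T u_s^2\,ds$. With that constant you get $6A+6h^2B$, and $h=\sqrt{2\kappa}$ no longer produces $12J_\kappa(u)$. You would be forced to take $h=\sqrt{\kappa}$, which gives a different inequality (with $\omega_\xi(\sqrt{\kappa})$ and coefficient $1/4$), not the stated one. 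Since the exact constants $1/12$, $\sqrt{2\kappa}$ and $\sqrt{2}/4$ hinge on this bookkeeping, the step needs repair.

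\textbf{Fix 1 (joint Fubini count, as in the paper).} For every $t\in[0,T]$, Cauchy--Schwarz gives $|X_t-X_{(t-h)^+}|^2\le h\int_{(t-h)^+}^t u_s^2\,ds$, because the interval has length at most $h$. For each fixed $s\in[0,T]$, the set of $t\in[0,T]$ with $(t-h)^+\le s\le t$ is $[s,(s+h)\wedge T]$, which has length at most $h$. Hence $\E\int_0^T|X_t-X_{(t-h)^+}|^2\,dt\le h^2B$.

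\textbf{Fix 2 (keeping your split).} On $[0,h)$, keep the bound $h\int_0^t u_s^2\,ds$ instead of enlarging it to $[0,T]$, and track the weight each $s$ receives.
\begin{itemize}
\item A point $s<h$ receives weight $h(h-s)$ from $t\in[s,h)$ and at most $hs$ from $t\in[h,s+h]$, for a total of $h^2$.
\item A point $s\ge h$ receives weight only from the second range, at most $h^2$.
\end{itemize}
With either correction, the rest of your argument goes through unchanged and coincides with the paper's proof.
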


\begin{proof}
Fix \(u\in\sU\), and write
\[
  X_t:=x+\int_0^t u_s\,ds,
  \qquad
  e_t:=X_t-\xi_t, \qquad A:=\E\int_0^T e_t^2\,dt,
  \qquad
  B:=\E\int_0^T u_t^2\,dt,
\]
so that
\(
  J_\kappa(u)=\frac12(A+\kappa B).
\)
Fix \(0<h\le T\). Since
\[
  \xi_t-\xi_{(t-h)^+}
  =
  \bigl(X_t-X_{(t-h)^+}\bigr)
  -
  e_t
  +
  e_{(t-h)^+},
\]
the inequality \((a+b+c)^2\le3(a^2+b^2+c^2)\) gives
\begin{align*}
  \omega_\xi(h)
  &\le
  3\,\E\int_0^T
  |X_t-X_{(t-h)^+}|^2\,dt
  +
  3A
  +
  3\,\E\int_0^T e_{(t-h)^+}^2\,dt.
\end{align*}
Because \(e_0=x-\xi_0\), a change of variables yields
\begin{align*}
  \E\int_0^T e_{(t-h)^+}^2\,dt
  &=
  h\,\E[(x-\xi_0)^2]
  +
  \E\int_0^{T-h} e_t^2\,dt
  \le
  h\,\E[(x-\xi_0)^2]+A.
\end{align*}
Moreover, by Cauchy--Schwarz and Fubini,
\begin{align*}
  \E\int_0^T
  |X_t-X_{(t-h)^+}|^2\,dt
  &=
  \E\int_0^T
  \left|
    \int_{(t-h)^+}^t u_s\,ds
  \right|^2dt
  \le
  h\,\E\int_0^T
  \int_{(t-h)^+}^t u_s^2\,ds\,dt
  \le
  h^2 B.
\end{align*}
Consequently,
\begin{equation}\label{eq:omega-lower-proof}
  \omega_\xi(h)
  \le
  6A+3h^2B
  +
  3h\,\E[(x-\xi_0)^2].
\end{equation}

Now let \(0<\kappa\le T^2/2\) and choose
\(h=\sqrt{2\kappa}\). Then \(h\le T\), and
\[
  6A+3h^2B
  =
  6A+6\kappa B
  =
  12J_\kappa(u).
\]
Thus, \eqref{eq:omega-lower-proof} implies
\[
  J_\kappa(u)
  \ge
  \frac{1}{12}\,
  \omega_\xi\!\left(\sqrt{2\kappa}\right)
  -
  \frac{\sqrt{2}}{4}\sqrt{\kappa}\,
  \E[(x-\xi_0)^2].
\]
Taking the infimum over \(u\in\sU\) proves
\eqref{eq:general-lower-bound}. 
If \(x=\xi_0\) almost surely, the lower bound in
\eqref{eq:tracking-sandwich} follows immediately, while the upper bound is
\eqref{eq:sharper} from \cref{rem:initial-match}.

Finally, assume \eqref{eq:two-sided-modulus}. The lower bound in
\eqref{eq:two-sided-rate} follows from
\eqref{eq:general-lower-bound}. Since
\(\omega_\xi(h)=\omega_\xi(T)\) for \(h\ge T\), the upper bound in
\eqref{eq:two-sided-modulus} extends to all \(h>0\). Hence
\eqref{eq:rate-alpha-initial-match}, with \(L=C\), gives the upper bound in
\eqref{eq:two-sided-rate}. The final equivalences follow from
\eqref{eq:two-sided-modulus}.
\end{proof}

\subsection{Examples and special cases}\label{sec:examples-satisfying}

We first show that every square-integrable c\`adl\`ag semimartingale target process $\xi$ has a finite $L^2$ time-translation seminorm, as defined in~\eqref{eq:seminorm}, and thus has the generic $\sqrt{\kappa}$ rate established in~\eqref{eq:rate-sqrtkappa}.   

\begin{proposition}[Semimartingale target]\label{prop:semimart}
Suppose
\[
\xi_t=\xi_0+M_t+A_t,
\qquad
0\le t\le T,
\]
where \(M\) is a square-integrable c\`adl\`ag martingale with \(M_0=0\) and
\(A\) is an adapted c\`adl\`ag finite-variation process with \(A_0=0\), such
that
\[
\E[\langle M\rangle_T]+\E[|A|_T^2]<\infty.
\]
Then, for every \(0<h\le T\),
\begin{equation}\label{eq:semimart-bound}
\omega_\xi(h)
\le
2h\,\E[\langle M\rangle_T]
+
2h\,\E[|A|_T^2].
\end{equation}
In particular, \( \mysnorm <\infty\), and therefore 
\(v(\kappa)=O(\sqrt{\kappa})\).
\end{proposition}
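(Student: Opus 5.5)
We split the increment $\xi_t-\xi_{(t-h)^+}$ into its martingale and finite-variation parts and treat each separately. For $0<h\le T$ and $t\in[0,T]$, write $s:=(t-h)^+$, so that
\[
\xi_t-\xi_s=(M_t-M_s)+(A_t-A_s).
\]
The inequality $(r+q)^2\le 2r^2+2q^2$ then gives
\[
\omega_\xi(h)\le 2\,\E\int_0^T |M_t-M_s|^2\,dt+2\,\E\int_0^T|A_t-A_s|^2\,dt ,
\]
and it suffices to bound each integral by $h$ times the corresponding constant.

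\emph{Martingale part.} By optional sampling and the orthogonality of martingale increments, $\E[|M_t-M_s|^2]=\E[\langle M\rangle_t-\langle M\rangle_s]$. Here $\langle M\rangle$ is the predictable compensator of $M^2$, which exists since $M$ is square integrable; the identity also holds with $[M]$ in place of $\langle M\rangle$. We then integrate in $t$ and use Fubini. Writing $\langle M\rangle_t-\langle M\rangle_s=\int_{(s,t]}d\langle M\rangle_r$, we get
\[
\int_0^T (\langle M\rangle_t-\langle M\rangle_{(t-h)^+})\,dt=\int_{(0,T]}\lambda\{t\in[0,T]: r\le t<r+h\}\,d\langle M\rangle_r\le h\,\langle M\rangle_T .
\]
The key observation is that each $r$ lies in the window $((t-h)^+,t]$ only for $t\in[r,r+h)$, an interval of length at most $h$.

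\emph{Finite-variation part.} We use $|A_t-A_s|\le |A|_t-|A|_s\le |A|_T$, which gives
\[
|A_t-A_s|^2\le |A|_T\,(|A|_t-|A|_s).
\]
The same Fubini argument with respect to the measure $d|A|$ yields
\[
\int_0^T |A_t-A_s|^2\,dt\le |A|_T\cdot h\,|A|_T=h\,|A|_T^2 .
\]
Taking expectations and combining with the martingale bound gives \eqref{eq:semimart-bound}.

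\emph{Conclusion.} For $h\ge T$ we have $\omega_\xi(h)=\omega_\xi(T)$, which is bounded by $2T(\E\langle M\rangle_T+\E|A|_T^2)\le 2h(\cdots)$. Hence $\omega_\xi(h)/h\le 2\E\langle M\rangle_T+2\E|A|_T^2$ for all $h>0$, so $\mysnorm<\infty$, and \eqref{eq:rate-sqrtkappa} of \cref{cor:rates} gives $v(\kappa)=O(\sqrt\kappa)$. Condition \eqref{eq:xi-L2} holds because $\xi_0\in L^2$ is implicit in the setup and $\sup_t|\xi_t|\le|\xi_0|+\sup_t|M_t|+|A|_T$, with $\sup_t|M_t|$ square integrable by Doob's inequality.

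The main obstacle is modest: in the Fubini step, one has to account carefully for the measure of the set of $t$ with $r\in((t-h)^+,t]$, including the boundary effect near $0$ where $s=0$. There we only use $r>0$, since $\langle M\rangle_0=0$ and $|A|_0=0$, and the set of $t$ is still contained in $[r,r+h)$. One also has to justify the isometry $\E|M_t-M_s|^2=\E[\langle M\rangle_t-\langle M\rangle_s]$ for the deterministic times $s\le t$, which is standard.
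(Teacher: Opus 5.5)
Your proof is correct and follows essentially the same route as the paper. You split via $(r+q)^2\le 2r^2+2q^2$, bound the finite-variation part by $|A|_T\int_0^T(|A|_t-|A|_{(t-h)^+})\,dt\le h|A|_T^2$ using Fubini, and extend to $h\ge T$ through $\omega_\xi(h)=\omega_\xi(T)$. For the martingale term, the paper uses the telescoping identity $\E\int_0^T(\langle M\rangle_t-\langle M\rangle_{(t-h)^+})\,dt=\E\int_{T-h}^T\langle M\rangle_t\,dt$ where you apply Fubini against $d\langle M\rangle$, but both give the same bound $h\,\E[\langle M\rangle_T]$.
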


\begin{proof}
Write
\[
\xi_t-\xi_{(t-h)^+}
=
\bigl(M_t-M_{(t-h)^+}\bigr)+\bigl(A_t-A_{(t-h)^+}\bigr).
\]
Hence
\[
\omega_{\xi}(h)\le 2\omega_M(h)+2\omega_A(h),
\]
where
\[
\omega_M(h):=\E\int_0^T \left|M_t-M_{(t-h)^+}\right|^2\,dt,
\qquad
\omega_A(h):=\E\int_0^T \left|A_t-A_{(t-h)^+}\right|^2\,dt.
\]

For the martingale part, using Fubini's theorem and the properties of the predictable quadratic variation $\langle M \rangle$ of $M$ yields 
\begin{equation} \label{eq:martingale-increment}
\begin{aligned}
\omega_M(h)
&=
\E\int_0^h M_t^2\,dt
+
\E\int_h^T (M_t-M_{t-h})^2\,dt\\
&=
\E\int_0^h \langle M\rangle_t\,dt
+
\E\int_h^T \bigl(\langle M\rangle_t-\langle M\rangle_{t-h}\bigr)\,dt\\
&=
\E\int_{T-h}^T \langle M\rangle_t\,dt
\le
h\,\E[\langle M\rangle_T].
\end{aligned}
\end{equation}
For the finite-variation part, let \(V_t:=|A|_t\) and note that
\[
|A_t-A_{(t-h)^+}|
\le
\int_{((t-h)^+,\,t]} dV_s
=:D_t^{(h)}.
\]
Moreover, since \(D_t^{(h)}\le V_T\), we obtain via Fubini's theorem that
\begin{equation} \label{eq:finite-variation-increment}
\begin{aligned}
\int_0^T |A_t-A_{(t-h)^+}|^2\,dt
&\le
V_T\int_0^T D_t^{(h)}\,dt
=
V_T\int_0^T\int_{((t-h)^+,\,t]} dV_s\,dt\\
&=
V_T\int_0^T\int_s^{(s+h)\wedge T} dt\,dV_s
\le
hV_T^2.
\end{aligned}
\end{equation}
Taking expectations and applying once more Fubini's theorem yields \(\omega_A(h)\le h\,\E[|A|_T^2]\). Combining the two estimates proves \eqref{eq:semimart-bound}. Lastly, since \(\omega_\xi(h)=\omega_\xi(T)\) for
\(h\ge T\), this also implies \(\mysnorm<\infty\), and thus $v(\kappa)=O(\sqrt{\kappa})$ by virtue of~\eqref{eq:rate-sqrtkappa}.
\end{proof}

Note that the bounds of~\cref{thm:master,cor:rates} are one-sided. To see that the $\sqrt{\kappa}$ rate for semimartingale target processes (\cref{prop:semimart}) is sharp in specific cases, it is convenient to use the exact formula of Bank et al.~\cite[Theorem~3.1, formula~(8)]{bank.al.17} for fixed~$\kappa$. This formula simplifies substantially in the two regimes considered below: continuous martingale targets and deterministic finite-variation targets.

\begin{proposition}[Continuous martingale target]\label{prop:martingale-sharp}
Assume that
\[
\xi_t=\xi_0+M_t,
\qquad
0\le t\le T,
\]
where \(\xi_0\in\R\) is deterministic and \(M\) is a continuous
square-integrable martingale with \(M_0=0\). Then
\begin{equation}\label{eq:martingale-exact}
v(\kappa)
=
\frac{\sqrt{\kappa}}{2}\tanh\!\left(\frac{T}{\sqrt{\kappa}}\right)(x-\xi_0)^2
+
\frac12\,\E\int_0^T \sqrt{\kappa}\,
\tanh\!\left(\frac{T-t}{\sqrt{\kappa}}\right)\,d\langle M\rangle_t.
\end{equation}
In particular,
\begin{equation}\label{eq:martingale-asymptotic}
v(\kappa)
=
\frac{\sqrt{\kappa}}{2}\Bigl((x-\xi_0)^2+\E[\langle M\rangle_T]\Bigr)
+
o(\sqrt{\kappa}),
\qquad
\kappa\downarrow0,
\end{equation}
so that the $\sqrt{\kappa}$ rate is sharp whenever $\xi$ is not identically equal to $x$.
\end{proposition}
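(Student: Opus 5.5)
Since the paper points to the exact formula of Bank et al., the plan is to solve the problem directly by a completion-of-squares / martingale-optimality argument. This avoids relying on the precise form of their kernel representation.

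\textbf{Candidate value process.} Set $a=\sqrt{\kappa}$ and $\tau_t:=T-t$. For the candidate value process I take
\[
V_t:=\tfrac{a}{2}\tanh(\tau_t/a)\,(X_t-\xi_t)^2+\tfrac12\,\E\Bigl[\int_t^T a\tanh(\tau_s/a)\,d\langle M\rangle_s\,\Big|\,\cF_t\Bigr].
\]
The natural feedback control is $u_t=-\tfrac1a\tanh(\tau_t/a)(X_t-\xi_t)$. This is suggested by the Riccati equation $\dot P=P^2/\kappa-1$ with $P(T)=0$, whose solution is $P(t)=a\tanh(\tau_t/a)$.

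\textbf{Verification.} Fix $u\in\sU$ and write $e=X^u-\xi$, so that $de_t=u_t\,dt-dM_t$ and $d\langle e\rangle=d\langle M\rangle$ by continuity of $M$. Itô's formula applied to $\tfrac12P(t)e_t^2$ gives
\[
d\bigl(\tfrac12 Pe^2\bigr)=\tfrac12\dot P e^2\,dt+Pe\,u\,dt-Pe\,dM+\tfrac12P\,d\langle M\rangle .
\]
I then add the running cost $\tfrac12(e^2+\kappa u^2)\,dt$ and use $\dot P=P^2/\kappa-1$. The $dt$ terms become
\[
\tfrac12\bigl(P^2e^2/\kappa+2Peu+\kappa u^2\bigr)=\tfrac{\kappa}{2}\bigl(u+Pe/\kappa\bigr)^2\ge0,
\]
with equality exactly for the feedback control. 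The term $\tfrac12P\,d\langle M\rangle$ is absorbed by the second part of $V$.

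Taking expectations from $0$ to $T$ and using $P(T)=0$ gives $J_\kappa(u)=\E[V_0]+\tfrac{\kappa}{2}\E\int_0^T(u_t+P e_t/\kappa)^2dt\ge V_0$. Since $x,\xi_0$ are deterministic, $V_0$ equals the right-hand side of \eqref{eq:martingale-exact}. Equality is attained by the feedback control, which is admissible: it solves a linear SDE with bounded coefficients ($P/\kappa\le 1/a$), hence is in $L^2$. This proves \eqref{eq:martingale-exact}.

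\textbf{Main obstacle.} The delicate step is the vanishing expectation of the local martingale $\int Pe\,dM$. The bound $\E\int_0^T P^2e^2\,d\langle M\rangle<\infty$ is not automatic, since $e$ is only in $L^2(dt\otimes d\mathbb P)$. I would first localize with stopping times $\sigma_n\uparrow T$, apply the identity up to $\sigma_n$, and pass to the limit. Monotone convergence handles the nonnegative square term and the $d\langle M\rangle$ integral. For $\E[P(\sigma_n)e_{\sigma_n}^2]$, I would use that $\sup_t|e_t|^2$ is integrable: indeed $|X_t|\le|x|+\sqrt{T}\|u\|_{L^2}$ and Doob's inequality applies to $M$. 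Then $P(\sigma_n)\to0$ together with dominated convergence gives the limit. Alternatively, I could restrict to controls with $J_\kappa(u)<\infty$, which is harmless.

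\textbf{Asymptotics.} For \eqref{eq:martingale-asymptotic}, note that $\tanh(T/a)\to1$. Moreover $\tanh((T-t)/a)\uparrow1$ for $t<T$, and $\langle M\rangle$ has no atom at $T$ by continuity. Monotone convergence then gives $\E\int_0^T\tanh((T-t)/a)\,d\langle M\rangle_t\to\E\langle M\rangle_T$, which yields the expansion. Sharpness follows because $(x-\xi_0)^2+\E\langle M\rangle_T>0$ unless $\xi\equiv x$.
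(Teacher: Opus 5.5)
Your argument is correct, but it takes a different route from the paper. The paper does not derive \eqref{eq:martingale-exact} itself. It invokes the general exact value formula of Bank et al.\ \cite[Theorem~3.1, formula~(8)]{bank.al.17}, which applies because continuous adapted targets are predictable, and notes that for a martingale target this formula simplifies to the $\tanh$ expression. The asymptotics then follow by dominated convergence, exactly as in your last step; your monotone-convergence version is equivalent.

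You instead verify the value directly by a Riccati/completion-of-squares argument with $P(t)=\sqrt{\kappa}\tanh((T-t)/\sqrt{\kappa})$. The computations hold up:
\begin{itemize}
\item $\dot P=P^2/\kappa-1$, and the $dt$-terms combine to $\frac{\kappa}{2}(u+Pe/\kappa)^2$.
\item Your bound $\E[\sup_t e_t^2]<\infty$, from $|X_t|\le|x|+\sqrt{T}\,(\int_0^T u_s^2\,ds)^{1/2}$ and Doob, is right.
\item The limit passage after localization is valid.
\item The feedback control $u_t=-\frac{P(t)}{\kappa}(X_t-\xi_t)$ is admissible, since $P/\kappa\le 1/\sqrt{\kappa}$.
\end{itemize}

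Your route is self-contained and exhibits the optimal feedback explicitly, without checking the hypotheses of \cite{bank.al.17} and simplifying its kernel formula. The paper's route is shorter and stays within a single framework that is reused for non-martingale targets (e.g.\ \cref{prop:piecewise-sharp}). There, a purely quadratic ansatz in $X_t-\xi_t$ is no longer exact, and one needs the additional forecast-of-the-target term that Bank et al.\ supply.

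Two small remarks.
\begin{itemize}
\item Since $\cF_0$ need not be trivial, your inequality should read $J_\kappa(u)\ge \E[V_0]$, not $\ge V_0$. In fact the conditional-expectation process $V_t$ is unnecessary; the pathwise It\^o identity followed by taking expectations suffices.
\item You can skip localization. Since $P\le\sqrt{\kappa}$, Cauchy--Schwarz gives $\E\bigl[(\int_0^T P^2 e_t^2\,d\langle M\rangle_t)^{1/2}\bigr]\le\sqrt{\kappa}\,(\E\sup_t e_t^2)^{1/2}(\E\langle M\rangle_T)^{1/2}<\infty$. The Burkholder--Davis--Gundy inequality then shows that $\int P e\,dM$ is a uniformly integrable martingale with zero expectation.
\end{itemize}
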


\begin{proof}
Continuous adapted processes are predictable, so the exact unconstrained value
formula of Bank et al.~\cite[Theorem~3.1, formula~(8)]{bank.al.17}
applies, and since $\xi$ is a martingale, the formula simplifies to the expression in~\eqref{eq:martingale-exact}. Hence
\begin{equation*} 
    \frac{2 v(\kappa)}{\sqrt{\kappa}} = \tanh\!\left(\frac{T}{\sqrt{\kappa}}\right)(x-\xi_0)^2 + \E\int_0^T \tanh\!\left(\frac{T-t}{\sqrt{\kappa}}\right)\,d\langle M\rangle_t.
\end{equation*}
Since \(\tanh(T/\sqrt{\kappa})\to1\) as $\kappa \downarrow 0$, the first term converges to $(x-\xi_0)^2$. For the second term, since \(0\le \tanh((T-t)/\sqrt{\kappa})\le 1\) for all $0 \leq t \leq T$, \(\tanh((T-t)/\sqrt{\kappa})\to1\) for all $t<T$, and $\langle M \rangle$ is continuous, dominated convergence yields  
\begin{equation*}
    \mathbb E\int_0^T \tanh\!\left(\frac{T-t}{\sqrt{\kappa}}\right) \,d\langle M\rangle_t \longrightarrow \mathbb E\int_0^T d\langle M\rangle_t = \mathbb E[\langle M\rangle_T]
\end{equation*}
for $\kappa \downarrow 0$ and we conclude~\eqref{eq:martingale-asymptotic}.
\end{proof}

\begin{remark}[Brownian motion target]\label{rem:brownian-special}
If \(M=W\) is a standard Brownian motion, then
\(\langle M\rangle_t=t\) for all $0 \leq t \leq T$, and \eqref{eq:martingale-exact} reduces to
\begin{align*}
    v(\kappa) = & \, \frac{\sqrt{\kappa}}{2}\tanh\!\left(\frac{T}{\sqrt{\kappa}}\right)(x-\xi_0)^2 + \frac{\kappa}{2}\log\!\left(\cosh\!\left(\frac{T}{\sqrt{\kappa}}\right)\right) \\
    = & \, \frac{\sqrt{\kappa}}{2}\tanh\!\left(\frac{T}{\sqrt{\kappa}}\right)(x-\xi_0)^2 + \frac{\sqrt\kappa \, T}{2}- \frac{\kappa}{2} \log 2+ \frac{\kappa}{2} \log(1+e^{-2T/\sqrt{\kappa}}).
\end{align*}
Hence, using the fact that $\tanh(T/\sqrt{\kappa}) = 1+O(e^{-2T/\sqrt{\kappa}})$ and $\log(1+e^{-2T/\sqrt{\kappa}}) = O(e^{-2T/\sqrt{\kappa}})$ for $\kappa \downarrow 0$, we obtain
\begin{equation*}
v(\kappa) = \frac{\sqrt{\kappa}}{2} \left((x-\xi_0)^2+T\right) - \frac{\kappa}{2}\log 2 + O\!\left(\sqrt{\kappa}\,e^{-2T/\sqrt{\kappa}}\right), \qquad
\kappa\downarrow 0.   
\end{equation*}
In particular,
\[
v(\kappa)
=
\frac{\sqrt{\kappa}}{2}\bigl((x-\xi_0)^2+T\bigr)
+
O(\kappa),
\qquad
\kappa\downarrow0,
\]
showing most transparently that the $\sqrt{\kappa}$ upper bound has the correct order.
\end{remark}

For the next example, it is notationally convenient to introduce the Sobolev space %
\begin{align*}
H^1(0,T)=\bigg\{ g\in L^2(0,T):\;& g \text{ admits a representative } \tilde g \text{ with }
\tilde g(t)=\tilde g(s)+\int_s^t \dot g(r)\,dr \\
&\text{for all } s,t\in[0,T], \text{ for some } \dot g\in L^2(0,T)\bigg\}.
\end{align*}

\begin{proposition}[Deterministic targets with finitely many jumps]\label{prop:piecewise-sharp}
Assume that the target process is deterministic and of the form
\begin{equation}\label{eq:piecewise-target}
\xi_t
=
\xi_0+g(t)+\sum_{j=1}^N \Delta_j\,\1_{\{t\ge t_j\}},
\qquad
0\le t\le T,
\end{equation}
where \(N\ge 0\), $g \in H^1(0,T)$, \(g(0)=0\), and the
jump times satisfy $0<t_1<\cdots<t_N<T$. Then
\begin{equation}\label{eq:piecewise-asymptotic}
v(\kappa)
=
\sqrt{\kappa}\left(
\frac12(x-\xi_0)^2+\frac14\sum_{j=1}^N \Delta_j^2
\right)
+
o(\sqrt{\kappa}),
\qquad
\kappa\downarrow0.
\end{equation}
In particular, whenever \(x\ne \xi_0\) or at least one jump is nonzero, the
generic $\sqrt{\kappa}$ rate cannot be improved.
\end{proposition}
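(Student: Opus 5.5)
The plan is to exploit the fact that for a deterministic target the problem is a deterministic linear--quadratic problem, so $v(\kappa)$ is a quadratic form in the displacement. Write $f:=\xi-x$ and $Q_\kappa(f):=\inf_{w}\bigl\{\tfrac12\int_0^T(w_t-f_t)^2dt+\tfrac\kappa2\int_0^T\dot w_t^2dt:\ w\in H^1(0,T),\ w_0=0\bigr\}$, so that $v(\kappa)=Q_\kappa(f)$. Random controls do not help here, since replacing $u$ by $\E[u]$ lowers both terms by Jensen's inequality.

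\textbf{Step 1: $\sqrt{Q_\kappa}$ is a seminorm.} The map $f\mapsto \sqrt{Q_\kappa(f)}$ is the distance in a Hilbert norm from $(f,0)$ to the closed subspace $\{(w,\sqrt\kappa\,\dot w)\}$ of $L^2(0,T)^2$, up to the factor $\tfrac12$. Hence it satisfies the triangle inequality.

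\textbf{Step 2: remove the smooth part $g$.} Decompose $f=f^{\rm step}+g$, where
\[
f^{\rm step}_t=(\xi_0-x)+\sum_j\Delta_j\1_{\{t\ge t_j\}}.
\]
Taking $w=g$ (admissible since $g(0)=0$) gives $Q_\kappa(g)\le \tfrac\kappa2\|\dot g\|_{L^2}^2=O(\kappa)$. Therefore
\[
\bigl|\sqrt{v(\kappa)}-\sqrt{Q_\kappa(f^{\rm step})}\bigr|=O(\sqrt\kappa).
\]
Once we know $Q_\kappa(f^{\rm step})=O(\sqrt\kappa)$, so that $\sqrt{Q_\kappa(f^{\rm step})}=O(\kappa^{1/4})$, squaring gives
\[
v(\kappa)-Q_\kappa(f^{\rm step})=O(\kappa^{3/4})=o(\sqrt\kappa).
\]
It thus suffices to prove \eqref{eq:piecewise-asymptotic} for $g\equiv0$.

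\textbf{Step 3: explicit solution for a pure step target.} Set $a=\sqrt\kappa$. The Euler--Lagrange equation is $\kappa\ddot X=X-\xi$ with $X_0=x$ and $\dot X_T=0$. On each interval between jumps $\xi$ is constant, so $X$ is that constant plus a combination of $e^{\pm t/a}$, and $X,\dot X$ are continuous across the $t_j$. One can then either solve the resulting linear system exactly (hyperbolic functions) or, more robustly, sandwich the value.

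For the \emph{upper bound}, use the superposition of local profiles. At $t=0$ take $X$ equal to $\xi_0$ plus $(x-\xi_0)e^{-t/a}$. At each $t_j$ take the whole-line optimal response to a step of size $\Delta_j$, namely the convolution with $\frac1{2a}e^{-|s|/a}$. The resulting error is $\mp\tfrac{\Delta_j}2e^{-|t-t_j|/a}$ and the rate is $\tfrac{\Delta_j}{2a}e^{-|t-t_j|/a}$.

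A direct computation gives the costs of the individual profiles:
\begin{itemize}[nosep]
\item the initial offset costs $\tfrac a2(x-\xi_0)^2$;
\item each jump costs $\tfrac{\Delta_j^2a}8+\tfrac{\Delta_j^2a}8=\tfrac a4\Delta_j^2$.
\end{itemize}
The cross terms and the boundary truncations at $0$ and $T$ are $O(e^{-d/a})$, where $d>0$ is the minimal separation among $0,t_1,\dots,t_N,T$.

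For the \emph{lower bound}, fix disjoint windows $I_0=[0,\delta)$ and $I_j=(t_j-\delta,t_j+\delta)$ with $\delta<d/2$, and restrict both integrals to these windows. On each window the cost is at least the value of a local problem with free endpoints. These local values are computable in closed form: for a centered step of size $\Delta$ on an interval of half-length $\delta$ with both endpoints free, the value is $\tfrac a4\Delta^2\bigl(1+O(e^{-2\delta/a})\bigr)$, and similarly for the one-sided initial problem. Summing gives the matching lower bound up to exponentially small errors.

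The main obstacle I expect is this lower-bound computation: solving the two-sided free-endpoint local problem explicitly and confirming that its value is $\tfrac a4\Delta^2$ up to exponentially small corrections. If that gets messy, the fallback is to solve the full boundary-value problem on $[0,T]$ in closed form and expand the hyperbolic functions directly.
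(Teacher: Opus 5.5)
Your proof is correct, and it follows a genuinely different route from the paper's. The paper starts from the exact value formula of Bank et al.\ for deterministic targets, $v(\kappa)=\frac a2\tanh(T/a)(x-\hat\xi^\kappa_0)^2+\frac12\int_0^T(\xi_t-\hat\xi^\kappa_t)^2\,dt$, where $\hat\xi^\kappa$ is the average of $\xi$ under an explicit hyperbolic kernel $K_a$. It then expands this term by term:
\begin{itemize}
\item Young's convolution inequality shows that the $H^1$ part contributes $O(a^2)$.
\item Explicit kernel computations show that the steps contribute $\frac a2\sum_j\Delta_j^2+o(a)$ to $\int_0^T(\xi_t-\hat\xi^\kappa_t)^2\,dt$.
\item Cauchy--Schwarz bounds the interaction between the two parts by $O(a^{3/2})$.
\end{itemize}
You never use a solution formula. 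After reducing to deterministic controls, the seminorm property of $\sqrt{Q_\kappa}$ removes $g$ at cost $O(\kappa^{3/4})$. This is the same order as the paper's interaction term, but you get it in one line. The pure step target is then handled by a variational sandwich.

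What your route buys:
\begin{itemize}
\item It is self-contained and quantitative, with an exponentially small error for the step part.
\item The localization lower bound is robust, since it only uses that the target is constant near each jump.
\item It explains the constant $\frac14$. Your jump profiles anticipate the jump symmetrically, whereas the causal filter of \cref{thm:master} would cost $\frac a2\Delta_j^2$ per jump.
\end{itemize}
The paper's route, in exchange, is a direct expansion of an exact identity and needs no separate upper and lower constructions.

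Two minor points. First, the local two-sided problem you flag as the main obstacle is immediate. Fixing $m=X_{t_j}$ splits it into two one-sided problems with free far endpoints. Each has value $\frac a2\tanh(\delta/a)(m-c_\pm)^2$, which is the case $M\equiv0$ of \cref{prop:martingale-sharp} or a direct computation. Minimizing over $m$ gives exactly $\frac a4\tanh(\delta/a)\Delta_j^2$. Second, in the upper bound the superposed profile misses $X_0=x$ by $\sum_j\frac{\Delta_j}{2}e^{-t_j/a}$. Adjust the coefficient of $e^{-t/a}$ to compensate; this changes the cost only by an exponentially small amount.
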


\begin{proof}
Set \(a:=\sqrt{\kappa}\). Since \(\xi\) differs at most at the deterministic
times \(t_1,\dots,t_N\) from a predictable deterministic process, the exact
unconstrained value formula of Bank et al.~\cite[Theorem~3.1, formula~(8)]{bank.al.17} applies after changing \(\xi\) on a Lebesgue-null set. Because the target process is deterministic, this formula simplifies to
\begin{equation}\label{eq:det-value-bank}
v(\kappa)
=
\frac{a}{2}\tanh\!\left(\frac{T}{a}\right)(x-\hat\xi_0^\kappa)^2
+
\frac12\int_0^T (\xi_t -\hat\xi_t^\kappa)^2\,dt,
\end{equation}
where
\[
\hat\xi_t^\kappa:=\int_t^T \xi_u K_a(t,u)\,du,
\qquad
K_a(t,u):=\frac{\cosh((T-u)/a)}{a\sinh((T-t)/a)},
\quad 0\le t\le u<T.
\]
Note that for all $0 \leq t < T$, the kernel \(K_a(t,\cdot)\) integrates to one over \([t,T]\). Let us decompose \(\xi=g+J\) with
\[
J(t):=\xi_0+\sum_{j=1}^N \Delta_j\,\1_{\{t\ge t_j\}},
\qquad
0\le t\le T.
\]
By linearity, \(\hat\xi^\kappa=\hat g^\kappa+\hat J^\kappa\), where
\(\hat g^\kappa_t:=\int_t^T g(u)K_a(t,u)\,du\) and
\(\hat J^\kappa_t:=\int_t^T J(u)K_a(t,u)\,du\).

For the $H^1$ part, using \(g(t)-g(u)=-\int_t^u \dot g(r)\,dr\) for $u > t$ and
\[
\int_r^T K_a(t,u)\,du
=
\frac{\sinh((T-r)/a)}{\sinh((T-t)/a)}
\le e^{-(r-t)/a},
\qquad
t\le r < T,
\]
we obtain with Fubini's theorem that
\[
|g(t)-\hat g_t^\kappa| = \left| \int_t^T (g(t) - g(u)) K_a(t,u) \, du \right| \leq \int_t^T |\dot g(r)|e^{-(r-t)/a}\,dr.
\]
Writing 
\begin{equation*}
    \int_t^T |\dot g(r)| e^{-(r-t)/a} \,dr = \int_{\mathbb R} k_a(t-r) |\dot g(r)| \,dr = (k_a* |\dot g|)(t)
\end{equation*}
with $k_a(s):=e^{s/a} \mathbf 1_{(-\infty,0]}(s)$ (and extending $\dot g$ by zero outside $[0,T]$), Young's convolution inequality gives
\[
\|g-\hat g^\kappa\|_{L^2(0,T)}
\le \| k_a * |\dot g| \|_{L^2(\mathbb{R})} \leq \| k_a \|_{L^1(\mathbb{R})} \| \dot g\|_{L^2(0,T)} \leq 
a\|\dot g\|_{L^2(0,T)},
\]
and thus
\begin{equation}\label{eq:g-part-o-a}
\int_0^T (g(t)-\hat g_t^\kappa)^2\,dt=O(a^2).
\end{equation}
Similarly,
\[
|\hat g_0^\kappa|
=
\left|\int_0^T g(u)K_a(0,u)\,du\right|
\le
\int_0^T |\dot g(r)|e^{-r/a}\,dr
\le
\sqrt{\frac a2}\,\|\dot g\|_{L^2(0,T)},
\]
where we used Cauchy–Schwarz in the last inequality, and hence
\begin{equation}\label{eq:g-zero-small}
\hat g_0^\kappa=O(a^{1/2}).
\end{equation}

For the pure jump part, a direct computation gives
\begin{align*}
    \hat J_t^\kappa = & \, \int_t^T J(u) K_a(t,u)\,du = \xi_0 + \sum_{j: t \geq t_j} \Delta_j + \sum_{j: t < t_j} \Delta_j \int_{t_j}^T K_a(t,u)\,du,
\end{align*}
and therefore
\[
\hat J_t^\kappa-J(t)
=
\sum_{j = 1}^N \Delta_j q_j^a(t) \1_{\{t < t_j\}},
\qquad
q_j^a(t):=\frac{\sinh((T-t_j)/a)}{\sinh((T-t)/a)}.
\]
Consequently,
\begin{equation} \label{eq:J-square-asymptotic}
    \int_0^T (\hat J_t^\kappa - J(t))^2\,dt = \sum_{j=1}^N \Delta_j^2 \int_{0}^{t_j} (q_j^a(t))^2 dt + 2 \sum_{1 \leq j < k \leq N} \Delta_j \Delta_k \int_0^{t_j} q_j^a(t) q_k^a(t) dt.
\end{equation}
Next, since
\[
q_j^a(t)
=
e^{-(t_j-t)/a}
\frac{1-e^{-2(T-t_j)/a}}{1-e^{-2(T-t)/a}},
\qquad 0\le t<t_j < T,
\]
and
\begin{equation*}
    \frac{1}{1-e^{-2(T-t)/a}} = 1 + \frac{e^{-2(T-t)/a}}{1-e^{-2(T-t)/a}} = 1 + O(e^{-2(T-t)/a}), 
\end{equation*}
we get
\begin{equation*}
    \frac{1-e^{-2(T-t_j)/a}}{1-e^{-2(T-t)/a}} = (1-e^{-2(T-t_j)/a}) \bigl(1 + O(e^{-2(T-t)/a})\bigr) = 1+O(e^{-2(T-t_j)/a})
\end{equation*}
for all $0\leq t \leq t_j$, because $2(T-t_j) \leq 2 (T-t)$. This yields
\[
q_j^a(t)=e^{-(t_j-t)/a}\bigl(1+O(e^{-2(T-t_j)/a})\bigr)
\]
uniformly on \([0,t_j]\) and thus
\begin{align}
\int_0^{t_j} (q_j^a(t))^2\,dt = & \, \bigl(1 + O(e^{-2(T-t_j)/a})\bigr) \int_0^{t_j} e^{-2(t_j-t)/a} \,dt \nonumber 
\\
= & \, \frac{a}{2} (1-e^{-2t_j/a}) \bigl(1 + O(e^{-2(T-t_j)/a})\bigr) = \frac{a}{2} + o(a), \qquad a \downarrow 0. \label{eq:q-square-asymptotic}
\end{align}
Similarly, for \(j<k\) with $t_j < t_k$, we obtain
\begin{equation} \label{eq:qq-square-asymptotic}
\int_0^{t_j} q_j^a(t)q_k^a(t)\,dt
= e^{-(t_k-t_j)/a} \bigl(1 + O(e^{-c/a})\bigr) \int_0^{t_j} e^{-2(t_j-t)/a} \,dt = 
o(a),   
\end{equation}
where $c>0$ is some constant. Plugging~\eqref{eq:q-square-asymptotic} and~\eqref{eq:qq-square-asymptotic} back into~\eqref{eq:J-square-asymptotic} yields
\begin{equation}\label{eq:J-mismatch-asymptotic}
\int_0^T (\hat J_t^\kappa-J(t))^2\,dt
=
\frac a2\sum_{j=1}^N \Delta_j^2+o(a).
\end{equation}
Moreover, note that
\[
\hat J_0^\kappa-\xi_0
=
\sum_{j=1}^N \Delta_j\,\frac{\sinh((T-t_j)/a)}{\sinh(T/a)}
=
o(1), \qquad a \downarrow 0.
\]
Hence, together with~\eqref{eq:g-zero-small}, we can write
\begin{equation}\label{eq:xi-zero-small}
\hat\xi_0^\kappa = \hat{g}_0^\kappa + \hat{J}_0^\kappa - \xi_0 + \xi_0 = \xi_0 + o(1), \qquad a \downarrow 0.
\end{equation}

Finally, we can put everything together. Regarding the second term in~\eqref{eq:det-value-bank}, using $\xi-\hat\xi^\kappa=(J-\hat J^\kappa)+(g-\hat g^\kappa)$,
we obtain
\begin{align}
    \int_0^T (\xi_t - \hat{\xi}^\kappa_t)^2 \, dt = & \, \int_0^T (J(t) - \hat{J}^\kappa_t)^2 \, dt + \int_0^T (g(t) - \hat{g}^\kappa_t)^2 \, dt \nonumber \\
    & + 2 \int_0^T (J(t) - \hat{J}^\kappa_t)(g(t) - \hat{g}^\kappa_t)\, dt 
    \frac a2\sum_{j=1}^N \Delta_j^2+o(a) \label{eq:L2error-determ}
\end{align}
due to~\eqref{eq:J-mismatch-asymptotic},~\eqref{eq:g-part-o-a}, and the Cauchy–Schwarz estimate
\begin{equation}
    \left\vert \int_0^T (J(t) - \hat{J}^\kappa_t)(g(t) - \hat{g}^\kappa_t)\, dt \right\vert \leq \|J-\hat J^\kappa\|_{L^2(0,T)} \|g-\hat g^\kappa\|_{L^2(0,T)} = O(a^{3/2}) = o(a), 
\end{equation}
since \(\|J-\hat J^\kappa\|_{L^2(0,T)}=O(a^{1/2})\) and \(\|g-\hat g^\kappa\|_{L^2(0,T)}=O(a)\). In particular, we observe that the jump component determines the leading $O(a)$ error in~\eqref{eq:L2error-determ}, while the smoother $g$-component contributes only $O(a^2)$, and its interaction with the jump component is $O(a^{3/2})$. Regarding the first term in~\eqref{eq:det-value-bank}, we can write
\begin{equation}
    \frac{a}{2}\tanh\!\left(\frac{T}{a}\right)(x-\hat\xi_0^\kappa)^2 = \frac{a}{2}(x-\xi_0)^2 +o(a),
\end{equation}
because $\tanh(T/a) = 1 + o(1)$ and $(x-\hat{\xi}_0^\kappa)^2=(x-\xi_0)^2+o(1)$ by~\eqref{eq:xi-zero-small} as $a \downarrow 0$, as well as the fact that $a \, o(1) = o(a)$. Thus,
\begin{equation*}
    v(\kappa) = \frac{a}{2}\tanh\!\left(\frac{T}{a}\right)(x-\hat\xi_0^\kappa)^2 + \frac12\int_0^T (\xi_t -\hat\xi_t^\kappa)^2\,dt = \frac{a}{2}(x-\xi_0)^2 + \frac a4\sum_{j=1}^N \Delta_j^2+o(a),
\end{equation*}
which is \eqref{eq:piecewise-asymptotic}.
\end{proof}

\begin{remark}
In the setting of \cref{prop:piecewise-sharp}, if there are no jumps and \(x=\xi_0\), then the choice
\(u_t=\dot g(t)\) for a deterministic, absolutely continuous target process $\xi_t = \xi_0 + g(t)$ with $g \in H^1(0,T)$ tracks the target in~\eqref{eq:objective} exactly, and thus
\[
v(\kappa)\leq\frac{\kappa}{2}\int_0^T \dot \xi_t^2\,dt. 
\]
In particular, $v(\kappa) = O(\kappa)$. Thus, once the initial mismatch is removed, the \(\sqrt{\kappa}\) contribution in \cref{prop:piecewise-sharp} is produced entirely by the jumps.
\end{remark}

Our final example is concerned with the case where the target process follows a fractional Brownian motion with Hurst parameter $H \in (0,1)$. It is well known that the fractional Brownian motion is a semimartingale with respect to its natural filtration only in the Brownian case $H=0.5$. Hence, when $H \neq 0.5$, it serves as a natural example of a non-semimartingale target process which satisfies \eqref{eq:xi-L2}. 

We first observe that the fractional Brownian motion has finite $L^2$ time-translation seminorm, as defined in~\eqref{eq:seminorm}, only if $H \geq 0.5$. Then, we show that the limiting behavior of $v(\kappa)$ is governed by the Hurst parameter $H$ through the scaling parameter $\kappa^H$.

\begin{lemma} \label{lem:fbm}
Let \(B^H=(B_t^H)_{0\le t\le T}\) be a fractional Brownian motion with Hurst
parameter \(H\in(0,1)\), considered in its natural filtration. Its $L^2$ time-translation modulus is given by
\begin{equation} \label{eq:fBMModulus}
    \omega_{B^H}(h) = T h^{2H} - \frac{2H}{2H+1} h^{2H+1}, \qquad 0 < h \leq T.
\end{equation}
Moreover, 
\begin{equation} \label{eq:fBMSemiNorm}
[B^H]_{\rm{tr}}^2 = \sup_{h>0} \frac{\omega_{B^H}(h)}{h} =
\begin{cases}
\infty, & 0<H<\frac12, \\
T, & H=\frac12, \\
\frac{T^{2H}}{2H} \left(1-\frac{1}{4H^2}\right)^{2H-1}, & H > \frac12.
\end{cases}
\end{equation}
\end{lemma}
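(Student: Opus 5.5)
The plan is a direct computation from the covariance of fractional Brownian motion, followed by an elementary one-variable optimization of $\omega_{B^H}(h)/h$.

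\emph{Step 1: the modulus.} Recall that $B^H_0=0$ and $\E[|B^H_t-B^H_s|^2]=|t-s|^{2H}$. For $0<h\le T$, use the decomposition of $\omega_\xi(h)$ given after \eqref{eq:omega}. The first piece is
\[
\E\int_0^h |B^H_t|^2\,dt=\int_0^h t^{2H}\,dt=\frac{h^{2H+1}}{2H+1}.
\]
The second piece is
\[
\int_h^T h^{2H}\,dt=(T-h)h^{2H}.
\]
Adding them gives
\[
Th^{2H}-h^{2H+1}+\frac{h^{2H+1}}{2H+1}=Th^{2H}-\frac{2H}{2H+1}h^{2H+1},
\]
which is \eqref{eq:fBMModulus}. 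For $h\ge T$, the modulus is constant and equals $\omega(T)=T^{2H+1}/(2H+1)$.

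\emph{Step 2: the seminorm.} On $h\ge T$ the ratio $\omega/h$ is decreasing, so the supremum is attained on $(0,T]$. There we study
\[
f(h):=\frac{\omega_{B^H}(h)}{h}=Th^{2H-1}-\frac{2H}{2H+1}h^{2H}.
\]
The cases split as follows.
\begin{itemize}
\item If $H<1/2$, then $Th^{2H-1}\to\infty$ as $h\downarrow0$ while the second term vanishes, so the supremum is $\infty$.
\item If $H=1/2$, then $f(h)=T-h/2$, whose supremum is $T$, approached as $h\downarrow0$.
\item If $H>1/2$, then $f(0^+)=0$ and $f$ is smooth. Setting
\[
f'(h)=(2H-1)Th^{2H-2}-\frac{4H^2}{2H+1}h^{2H-1}=0
\]
gives the unique critical point $h^*=\frac{(2H-1)(2H+1)}{4H^2}\,T=\bigl(1-\tfrac1{4H^2}\bigr)T$. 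This point lies in $(0,T)$, and $f$ increases on $(0,h^*)$ and decreases after it, so $h^*$ is the maximizer.
\end{itemize}

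\emph{Step 3: evaluation at $h^*$.} Write $h^*=cT$ with $c=1-\frac1{4H^2}$. Then
\[
f(h^*)=T^{2H}c^{2H-1}\Bigl(1-\frac{2H}{2H+1}c\Bigr).
\]
Since
\[
\frac{2H}{2H+1}\,c=\frac{2H}{2H+1}\cdot\frac{(2H-1)(2H+1)}{4H^2}=\frac{2H-1}{2H},
\]
the bracket equals $\frac1{2H}$. Hence
\[
f(h^*)=\frac{T^{2H}}{2H}\Bigl(1-\frac1{4H^2}\Bigr)^{2H-1},
\]
which matches \eqref{eq:fBMSemiNorm}.

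Only the case bookkeeping needs care, in particular checking that $h^*\le T$ and that the region $h\ge T$ never contributes a larger value.
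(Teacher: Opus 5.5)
Your proposal is correct and follows the paper's proof: you compute the modulus exactly as the paper does, using $B^H_0=0$ and $\E|B^H_t-B^H_s|^2=|t-s|^{2H}$. Your case analysis of $f(h)=Th^{2H-1}-\frac{2H}{2H+1}h^{2H}$, with maximizer $h^*=(1-\frac{1}{4H^2})T$ for $H>\frac12$, simply spells out the ``direct computations'' that the paper leaves to the reader. One cosmetic point: for $H\le\frac12$ the supremum over $(0,T]$ is approached (or infinite) rather than attained, so ``attained on $(0,T]$'' should read ``reduces to $(0,T]$''.
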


\begin{proof}
Let \(0<h\le T\). Since \(B_0^H=0\) and 
\(
\E|B_t^H-B_s^H|^2=|t-s|^{2H},
\)
we obtain
\begin{align*}
\omega_{B^H}(h)
&=
\E\int_0^h |B_t^H|^2\,dt
+
\E\int_h^T |B_t^H-B_{t-h}^H|^2\,dt\\
&=
\int_0^h t^{2H}\,dt+(T-h)h^{2H}
=
Th^{2H}-\frac{2H}{2H+1}h^{2H+1}.
\end{align*}
For \(h \ge T\), we get
\[
\omega_{B^H}(h)=\omega_{B^H}(T)=\frac{T^{2H+1}}{2H+1}.
\]
The claim~\eqref{eq:fBMSemiNorm} then follows from direct computations. 
\end{proof}

\begin{proposition}[Fractional Brownian motion]\label{pr:fbm-with-sharpness}
Let \(B^H=(B_t^H)_{0\le t\le T}\) be fractional Brownian motion with Hurst
parameter \(H\in(0,1)\), considered in its natural filtration. Let
\[
\xi_t=B_t^H,\qquad 0\le t\le T,
\qquad\text{and}\qquad x=0.
\]
Then
\[
v(\kappa)\asymp \kappa^H,
\qquad
\kappa\downarrow0.
\]
More explicitly, for \(0<\kappa\le T^2/2\),
\begin{equation} \label{eq:fBM-asymp}
\frac{\kappa^H}{12}
\left(
T\,2^H-\frac{2H}{2H+1}2^{H+\frac12}\sqrt{\kappa}
\right)
\le
v(\kappa)
\le
T\Gamma(2H+1)\kappa^H .    
\end{equation}
\end{proposition}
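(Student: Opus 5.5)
The plan is to sandwich $v(\kappa)$ by combining the explicit modulus from \cref{lem:fbm} with the two general bounds already available: the lower bound of \cref{prop:general-lower-bound} and the upper bound \eqref{eq:rate-alpha-initial-match} of \cref{cor:rates}. Both apply because $x=0=B^H_0$ almost surely.

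\emph{Lower bound.} For $0<\kappa\le T^2/2$ we have $h=\sqrt{2\kappa}\le T$, so \eqref{eq:fBMModulus} applies at this $h$. Since $x=\xi_0$, the lower bound in \eqref{eq:tracking-sandwich} gives $v(\kappa)\ge \tfrac1{12}\omega_{B^H}(\sqrt{2\kappa})$. Substituting $h=\sqrt{2\kappa}$ yields
\[
\omega_{B^H}(\sqrt{2\kappa})=T\,2^H\kappa^H-\tfrac{2H}{2H+1}\,2^{H+\frac12}\kappa^{H+\frac12}.
\]
Factoring out $\kappa^H$ gives exactly the left-hand side of \eqref{eq:fBM-asymp}.

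\emph{Upper bound.} We want a global power-law bound $\omega_{B^H}(h)\le T h^{2H}$ for all $h>0$, so that \eqref{eq:rate-alpha-initial-match} can be used with $L=T$ and $\alpha=2H$. On $0<h\le T$ this follows from \eqref{eq:fBMModulus}, because the subtracted term is nonnegative. For $h\ge T$ we have $\omega_{B^H}(h)=T^{2H+1}/(2H+1)\le T\cdot T^{2H}\le T h^{2H}$. Then \eqref{eq:rate-alpha-initial-match} gives $v(\kappa)\le T\Gamma(2H+1)\kappa^H$ for all $\kappa>0$.

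\emph{Asymptotics.} It remains to conclude $v(\kappa)\asymp\kappa^H$. As $\kappa\downarrow0$, the bracket in the lower bound tends to $T2^H>0$. For instance, it is at least $T2^{H-1}$ once $\sqrt{\kappa}$ is small enough. Hence both sides are constant multiples of $\kappa^H$.

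There is no real obstacle here. The only point needing care is extending the modulus bound beyond $h=T$, which is required because the corollary's hypothesis \eqref{eq:holder-modulus} is imposed for all $h>0$; the case $h\ge T$ handled above takes care of it. Note that the filtration (natural filtration of $B^H$) plays no role, since the general bounds hold for any progressively measurable target satisfying \eqref{eq:xi-L2}.
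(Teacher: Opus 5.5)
Your proposal is correct and follows essentially the same route as the paper. You get the upper bound from \eqref{eq:rate-alpha-initial-match} with \(L=T\), \(\alpha=2H\), after extending \(\omega_{B^H}(h)\le Th^{2H}\) to \(h\ge T\). You get the lower bound from \cref{prop:general-lower-bound} at \(h=\sqrt{2\kappa}\le T\), using \(x=\xi_0=0\).
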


\begin{proof}
By \cref{lem:fbm}, the time-translation modulus of \(\xi=B^H\) satisfies
\[
  \omega_\xi(h)\le Th^{2H},
  \qquad h>0.
\]
Indeed, this follows directly from \eqref{eq:fBMModulus} for \(0<h\le T\),
whereas for \(h\ge T\),
\[
  \omega_\xi(h)
  =
  \frac{T^{2H+1}}{2H+1}
  \le
  Th^{2H}.
\]
Thus, \eqref{eq:holder-modulus} holds with \(L=T\) and \(\alpha=2H\).
Since \(x=\xi_0=0\), the sharper estimate
\eqref{eq:rate-alpha-initial-match} yields
\[
  v(\kappa)
  \le
  T\Gamma(2H+1)\kappa^H,
  \qquad
  \kappa>0.
\]

For the lower bound, let \(0<\kappa\le T^2/2\). Since \(x=\xi_0\),
\cref{prop:general-lower-bound} gives
\[
  v(\kappa)
  \ge
  \frac{1}{12}\,
  \omega_\xi\!\left(\sqrt{2\kappa}\right).
\]
Moreover, \(\sqrt{2\kappa}\le T\), so \eqref{eq:fBMModulus} implies
\begin{align*}
  v(\kappa)
  &\ge
  \frac{1}{12}
  \left(
    T(2\kappa)^H
    -
    \frac{2H}{2H+1}(2\kappa)^{H+\frac12}
  \right)\\
  &=
  \frac{\kappa^H}{12}
  \left(
    T\,2^H
    -
    \frac{2H}{2H+1}
    2^{H+\frac12}\sqrt{\kappa}
  \right).
\end{align*}
This proves \eqref{eq:fBM-asymp}. Since the expression in parentheses in
the lower bound converges to \(T\,2^H>0\) as \(\kappa\downarrow0\), the
two bounds also yield
\(
  v(\kappa)\asymp\kappa^H.
\)
\end{proof}

\section{Optimal tracking with terminal state constraint}\label{sec:constrained}

In this section, we study the asymptotic behavior of the optimal tracking problem from Section~\ref{sec:unconstrained} with an additional $\mathbb{P}$-almost sure state constraint at terminal time $T$. %

Recall the admissible class \eqref{eq:controls}, the state dynamics of $X^u = (X^u_t)_{0 \leq t \leq T}$ in~\eqref{def:stateprocess} and, for a given progressively measurable target process $\xi$ satisfying~\eqref{eq:xi-L2}, the tracking objective $J_\kappa(u)$ in~\eqref{eq:objective}. In addition, we let \(\Xi_T\in L^2(\mathcal{F}_T)\) denote a predetermined terminal position and define its c\`adl\`ag square-integrable martingale $\Xi = (\Xi_t)_{0\leq t \leq T}$ via
\[
\Xi_t:=\E[\Xi_T\mid\mathcal{F}_t],\qquad 0\le t\le T.
\]
We assume that
\begin{equation}\label{eq:constrained-reachability}
\int_0^T \frac{d\E[\Xi_t^2]}{T-t}<\infty,
\end{equation}
introduce the constrained admissible set
\begin{equation}\label{eq:constrained-admissible}
\sU^{c}
:=
\left\{u\in\sU:\;
X_T^u = x + \int_0^T u_s\,ds = \Xi_T\quad \mathbb{P}\text{-a.s.}
\right\}
\end{equation}
and consider the value of the constrained optimization problem
\begin{equation}\label{eq:constrained-value}
v^{c}(\kappa):=\inf_{u\in\sU^{c}} J_\kappa(u).
\end{equation}

The integral in~\eqref{eq:constrained-reachability} is understood
in the Stieltjes sense, with the convention that any atom of
\(d\E[\Xi_t^2]\) at \(T\) contributes \(+\infty\). The condition~\eqref{eq:constrained-reachability} was derived in Bank et al.~\cite{bank.al.17} as the relevant reachability criterion; indeed, \cite[Lemma~5.4]{bank.al.17} proves that~\eqref{eq:constrained-reachability} is equivalent to \(\sU^{c}\neq\varnothing\), so that it characterizes the
terminal positions that can be reached by absolutely continuous state processes
with square-integrable controls. We note that \eqref{eq:constrained-reachability} rules out a
terminal jump of \(\Xi\), implies \(\Xi_T\in\mathcal F_{T-}\), and may be
interpreted as a condition on the rate at which the terminal target is
revealed as \(t\uparrow T\).

As in the previous section, our aim is to quantify the behavior of $v^c(\kappa)$ as $\kappa\downarrow0$. The relevant objects are the following: For \(0<a\le T/2\), define
\begin{equation}\label{eq:constrained-Omega}
\Omega(a)
:=
a\,\E[(x-\xi_0)^2]
+
\frac{2}{a}\int_0^\infty e^{-h/a}\,\omega_\xi(h)\,dh,
\end{equation}
\begin{equation}\label{eq:constrained-Theta}
\Theta(a)
:=
\E\int_{T-2a}^T |\xi_t-\Xi_t|^2\,dt,
\end{equation}
and
\begin{equation}\label{eq:constrained-Lambda}
\Lambda(a)
:=
\int_{T-2a}^T \frac{d\E[\Xi_t^2]}{T-t}.
\end{equation}
The term \(\Omega(a)\) is the familiar right-hand side of \eqref{eq:master} evaluated at scale \(a\), while \(\Theta(a)\) measures the short-horizon mismatch
between the running target \(\xi\) and the terminal martingale \(\Xi\), and
\(\Lambda(a)\) records the local cost of learning \(\Xi_T\) near maturity. Note that, under our assumptions, we have $\Theta(a) < \infty$ and $\Lambda(a) < \infty$.

The next theorem is the counterpart to~\cref{thm:master}, providing a general upper bound on the constrained value~\eqref{eq:constrained-value}. 

\begin{theorem}\label{thm:constrained-master}
Let \(\xi\) be progressively measurable and satisfy \eqref{eq:xi-L2}, and let
\(\Xi_T\in L^2(\mathcal{F}_T)\) satisfy \eqref{eq:constrained-reachability}.
Then, for every \(\kappa\in(0,T^2/4]\), with \(a:=\sqrt{\kappa}\),
\begin{equation}\label{eq:constrained-master}
v^{c}(\kappa)
\le
6\bigl(\Omega(a)+\Theta(a)+a^2\Lambda(a)\bigr).
\end{equation}
\end{theorem}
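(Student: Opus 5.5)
The plan is to mimic the constructive proof of \cref{thm:master}: exhibit an explicit control in $\sU^c$ whose cost is bounded by the right-hand side of \eqref{eq:constrained-master}. Set $a=\sqrt{\kappa}\le T/2$ and fix a switching time $\tau\in[T-2a,T-a]$.

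\textbf{The control.} On $[0,\tau]$ we use the exponential tracker \eqref{eq:benchmarkX}, with $u_t=a^{-1}(\xi_t-X_t)$. On $[\tau,T]$ we switch to a Brownian-bridge-type feedback toward the terminal-target martingale,
\[
u_t=\frac{\Xi_t-X_t}{T-t},\qquad \tau\le t<T .
\]
Write $D_t:=X_t-\Xi_t$. On $[\tau,T)$ this gives the linear dynamics $dD_t=-\frac{D_t}{T-t}\,dt-d\Xi_t$, with explicit solution
\[
D_t=(T-t)\Bigl(\frac{D_\tau}{T-\tau}-\int_\tau^t\frac{d\Xi_s}{T-s}\Bigr).
\]

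\textbf{Admissibility.} By the It\^o isometry for the square-integrable martingale $\Xi$,
\[
\E\Bigl[\frac{D_t^2}{(T-t)^2}\Bigr]=\frac{\E D_\tau^2}{(T-\tau)^2}+\int_\tau^t\frac{d\E[\Xi_s^2]}{(T-s)^2}.
\]
Combined with \eqref{eq:constrained-reachability}, this shows that the bracket converges in $L^2$. Hence $D_t\to0$ and $X_T=\Xi_T$ a.s. (possibly after passing to a subsequence), and $u$ is square-integrable. This step, which uses the reachability criterion, is where the real care is needed.

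\textbf{Cost of the bridge.} The control cost on $[\tau,T]$ is
\[
\frac{\kappa}{2}\,\E\int_\tau^T\frac{D_t^2}{(T-t)^2}\,dt .
\]
Using the identity above and Fubini, this is at most
\[
\frac{a^2}{2}\Bigl(\frac{\E D_\tau^2}{T-\tau}+\int_\tau^T\frac{d\E[\Xi_s^2]}{T-s}\Bigr)\le \frac{\E D_\tau^2}{2}\,a+\frac{a^2}{2}\Lambda(a),
\]
since $T-\tau\ge a$ and $\tau\ge T-2a$. For the tracking cost on $[\tau,T]$ we use $(X_t-\xi_t)^2\le 2D_t^2+2(\Xi_t-\xi_t)^2$. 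The second piece contributes at most $\Theta(a)$. For the first, $(T-t)^2\le 4a^2$ together with the same identity gives a bound of the form $C\bigl(a\,\E D_\tau^2+a^2\Lambda(a)\bigr)$.

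\textbf{Cost of the tracker.} On $[0,\tau]$, the proof of \cref{thm:master} (with $e_t=X_t-\xi_t$, where tracking plus control cost equals $\E\int_0^\tau e_t^2\,dt\le\E\int_0^T e_t^2\,dt$) gives a bound of $\Omega(a)$.

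\textbf{Main obstacle: the pointwise term $\E D_\tau^2$.} This term involves values at the single time $\tau$, whereas $\Omega$ and $\Theta$ only control time integrals. The remedy is to average over the switching time. Since $v^c(\kappa)\le J_\kappa(u^{(\tau)})$ for every $\tau\in[T-2a,T-a]$, we may average this bound over $\tau$ in that interval. Using $D_\tau^2\le 2e_\tau^2+2(\xi_\tau-\Xi_\tau)^2$, we get
\[
a\cdot\frac1a\int_{T-2a}^{T-a}\E D_\tau^2\,d\tau\le 2\,\E\int_0^T e_t^2\,dt+2\,\Theta(a)\le 2\,\Omega(a)+2\,\Theta(a).
\]
The exponential tracker does not depend on $\tau$, so the averaging is legitimate. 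Collecting all terms yields a bound of the form $C\bigl(\Omega+\Theta+a^2\Lambda\bigr)$. I expect bookkeeping of the numerical factors to give $C=6$. If the $L^2$ limit argument at $t=T$ turns out to be delicate, the fallback is to invoke \cite[Lemma~5.4]{bank.al.17} directly.
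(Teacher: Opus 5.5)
Your construction and your averaging over the switching time are the paper's: the paper selects one $s\in[T-2a,T-a]$ whose pointwise error lies below the average, which is equivalent to averaging the bound. The gap is in the admissibility step, and it sits exactly at the point you flag as delicate.

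\textbf{The $L^2$-convergence claim is false.} The bracket $\frac{D_\tau}{T-\tau}-\int_{(\tau,t]}\frac{d\Xi_s}{T-s}$ equals $-u_t$. By your own identity, its second moment is $\frac{\E D_\tau^2}{(T-\tau)^2}+\int_{(\tau,t]}\frac{d\E[\Xi_s^2]}{(T-s)^2}$. Reachability \eqref{eq:constrained-reachability} controls the integral against $(T-s)^{-1}$, not against $(T-s)^{-2}$. So the bracket need not converge, or even stay bounded, in $L^2$. For example, $\Xi_t=\int_0^t(T-s)^{1/4}\,dW_s$ is reachable, yet $\int_{(\tau,t]}(T-s)^{-3/2}\,ds\to\infty$. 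Reachability says only that $u\in L^2(\Omega\times[\tau,T])$ after integrating in time, which is your Fubini computation. The conclusion $D_t\to0$ survives, but you need a different estimate. Using $(T-t)^2/(T-s)^2\le (T-t)/(T-s)$ for $s\le t$,
\[
\E D_t^2\le \frac{(T-t)^2}{(T-\tau)^2}\,\E D_\tau^2+(T-t)\int_{(\tau,t]}\frac{d\E[\Xi_s^2]}{T-s}\longrightarrow 0,\qquad t\uparrow T.
\]

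\textbf{The terminal jump is not addressed.} The estimate above only gives $X_T=\lim_{t\uparrow T}\Xi_t=\Xi_{T-}$; no subsequence is needed, since $X$ is continuous. You still must show $\Xi_{T-}=\Xi_T$ a.s., which is a second use of reachability that you do not mention. The convention that an atom of $d\E[\Xi_t^2]$ at $T$ counts as $+\infty$ forces $\E[(\Delta\Xi_T)^2]=\E[\Xi_T^2]-\E[\Xi_{T-}^2]=0$. Without this, your bridge lands at $\Xi_{T-}$, not at $\Xi_T$. The paper argues pathwise instead: martingale convergence, $\Xi_T\in\cF_{T-}$, and a splitting estimate for $(T-t)\int_s^t\frac{\Xi_r-\Xi_T}{(T-r)^2}\,dr$. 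Your fallback to \cite[Lemma~5.4]{bank.al.17} would not repair this. That lemma only gives $\sU^c\neq\varnothing$, whereas your cost bound needs this particular bridge to reach $\Xi_T$.

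\textbf{The constant needs exact integration.} Your use of the orthogonality of $D_\tau$ and the stochastic integral is sharper than the paper's $(x+y)^2\le 2x^2+2y^2$. It correctly gives control cost $\frac a2\E D_\tau^2+\frac{a^2}{2}\Lambda(a)$ on $[\tau,T]$. However, if you apply $(T-t)^2\le 4a^2$ to the $\E D_\tau^2$ part of the tracking cost, you get $4a\,\E D_\tau^2$. After averaging, the total becomes $10\,\Omega(a)+10\,\Theta(a)+\frac92 a^2\Lambda(a)$, which misses the constant $6$. Integrating exactly instead gives
\[
\E\int_\tau^T D_t^2\,dt=\frac{T-\tau}{3}\,\E D_\tau^2+\frac13\int_{(\tau,T)}(T-s)\,d\E[\Xi_s^2]\le \frac{2a}{3}\,\E D_\tau^2+\frac{4a^2}{3}\Lambda(a).
\]
This leads to $\frac{10}{3}\Omega(a)+\frac{10}{3}\Theta(a)+\frac{11}{6}a^2\Lambda(a)$. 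That is within the claimed bound and improves on the paper's $\frac{17}{3}$.
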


\begin{proof}
Fix \(\kappa\in(0,T^2/4]\) and set \(a:=\sqrt{\kappa}\). The idea of the proof is to construct a control $u \in \sU^{c}$ such that $J_\kappa(u)$ is bounded by the right-hand side of~\eqref{eq:constrained-master}. It turns out that this is achieved by the concatenation of a myopic control targeting $\xi$ up to some suitable time $s \in (0,T)$, and then an asymptotically singular control forcing $X^{u}_t$ toward the terminal value $\Xi_T$ by time $T$.

\medskip
\noindent\emph{The myopic phase.}
On \([0,T-a]\), let \(Y\) solve
\[
Y_t
=
x+\frac{1}{a}\int_0^t (\xi_r-Y_r)\,dr,
\qquad
0\le t\le T-a,
\]
and put \(e_t:=Y_t-\xi_t\). Exactly as in the proof of \cref{thm:master}, we have
\[
\E\int_0^{T-a} e_t^2\,dt\le \Omega(a).
\]
Moreover, introducing \(u_t^{(1)}:=a^{-1}(\xi_t-Y_t) \in L^2(\Omega \times [0,T-a])\), we obtain
\[
\frac12\int_0^{T-a} (Y_t-\xi_t)^2\,dt
+
\frac{\kappa}{2}\int_0^{T-a} (u_t^{(1)})^2\,dt
=
\int_0^{T-a} e_t^2\,dt,
\]
and for every \(s\in[T-2a,T-a]\), we can conclude that
\begin{equation}\label{eq:constrained-stage1}
\E\left[
\frac12\int_0^s (Y_t-\xi_t)^2\,dt
+
\frac{\kappa}{2}\int_0^s (u_t^{(1)})^2\,dt
\right]
\le
\Omega(a).
\end{equation}
Now, choose \(s\in[T-2a,T-a]\) such that
\[
\E\bigl[e_s^2+|\Xi_s-\xi_s|^2\bigr]
\le
\frac{1}{a}
\int_{T-2a}^{T-a}\E\bigl[e_t^2+|\Xi_t-\xi_t|^2\bigr]\,dt.
\]
Since
\[
\int_{T-2a}^{T-a}\E[e_t^2]\,dt\le \Omega(a),
\qquad
\int_{T-2a}^{T-a}\E[|\Xi_t-\xi_t|^2]\,dt\le \Theta(a),
\]
it follows that
\begin{equation} \label{eq:proof-B}
\E\bigl[e_s^2+|\Xi_s-\xi_s|^2\bigr]
\le
\frac{\Omega(a)+\Theta(a)}{a}.
\end{equation}
We also introduce the constant 
\begin{equation} \label{eq:proof-constrained-B1}
B:=\E[|\Xi_s-Y_s|^2]
\end{equation}
and note that
\begin{equation} \label{eq:proof-constrained-B2}
B = \E[|\Xi_s-\xi_s + \xi_s - Y_s|^2]
\le
2\,\E\bigl[e_s^2+|\Xi_s-\xi_s|^2\bigr]
\le
\frac{2}{a}\bigl(\Omega(a)+\Theta(a)\bigr).
\end{equation}

\medskip
\noindent\emph{The terminal bridge.}
On \([s,T)\), define \(X\) by
\[
dX_t=\frac{\Xi_t-X_t}{T-t}\,dt,
\qquad
X_s=Y_s.
\]
Its explicit solution is given by
\begin{equation} \label{eq:proof-terminal-state1}
X_t
=
\frac{T-t}{T-s}Y_s
+
(T-t)\int_s^t \frac{\Xi_r}{(T-r)^2}\,dr,
\qquad
s\le t<T.
\end{equation}

We first argue that $\lim_{t \uparrow T} X_t = \Xi_T$ $\mathbb{P}$-a.s. To this end, note that due to the martingale convergence theorem, we have
\begin{equation}
    \lim_{t \uparrow T} \Xi_t = \mathbb{E}[\Xi_T \mid \cF_{T-}] = \Xi_T \qquad \mathbb{P}\text{-a.s.},
\end{equation}
because~\eqref{eq:constrained-reachability} implies $\Xi_{T} \in \cF_{T-}$. Moreover, we can write
\begin{equation} \label{eq:proof-terminal-state2}
X_t-\Xi_T =
\frac{T-t}{T-s}(Y_s-\Xi_T) + (T-t) \int_s^t \frac{\Xi_r-\Xi_T}{(T-r)^2}\,dr, \qquad
s\le t<T,   
\end{equation}
since
\begin{equation*}
(T-t) \int_s^t\frac{1}{(T-r)^2} \,dr = 1 - \frac{T-t}{T-s}.    
\end{equation*}
The first term in~\eqref{eq:proof-terminal-state2} converges to zero as $t \uparrow T$. For the second term, fix an outcome $\omega \in \Omega$ for which $\Xi_t(\omega)\to\Xi_T(\omega)$ for $t \uparrow T$. Given $\varepsilon > 0$, choose $s < u <T$ such that
\begin{equation*}
|\Xi_r(\omega)-\Xi_T(\omega)|\le \varepsilon, \qquad u\le r<T.    
\end{equation*}
Then, for $t > u$, we obtain the estimate
\begin{equation*}
    (T-t) \int_s^t \frac{\vert \Xi_r-\Xi_T \vert}{(T-r)^2}\,dr \leq (T-t) \int_s^u \frac{\vert \Xi_r-\Xi_T \vert}{(T-r)^2}\,dr + \varepsilon (T-t) \int_u^t \frac{1}{(T-r)^2}\,dr.
\end{equation*}
The first term tends to zero as $t \uparrow T$, while the second term is bounded by $\varepsilon$. Hence, $\limsup_{t\uparrow T}|X_t-\Xi_T|\le \varepsilon$ in~\eqref{eq:proof-terminal-state2}. Since $\varepsilon>0$ is arbitrary, we get that $\lim_{t \uparrow T} X_t = \Xi_T$ $\mathbb P$-a.s.

Next, we define
\begin{equation*}
    u_t^{(2)} := \frac{\Xi_t-X_t}{T-t}, \qquad s \le t < T.
\end{equation*}
Using integration by parts in~\eqref{eq:proof-terminal-state1} yields
\begin{equation*}
    X_t = \frac{T-t}{T-s}Y_s + \Xi_t - \frac{T-t}{T-s} \Xi_s - (T-t) \int_{(s,t]} \frac{1}{T-r} \, d\Xi_r, \qquad s \le t<T,
\end{equation*}
and hence
\begin{equation} \label{eq:proof-terminal-state3}
    \Xi_t-X_t = \frac{T-t}{T-s}\bigl(\Xi_s-Y_s\bigr) + (T-t)\int_{(s,t]} \frac{1}{T-r}\,d\Xi_r, \qquad s\le t<T,
\end{equation}
so that we obtain the representation
\begin{equation} \label{eq:proof-u2}
    u_t^{(2)} = \frac{\Xi_s-Y_s}{T-s} + \int_{(s,t]} \frac{1}{T-r}\,d\Xi_r, \qquad s \le t<T.
\end{equation}
By the martingale isometry, Fubini's theorem, and~\eqref{eq:constrained-reachability}, we have
\begin{equation} \label{eq:proof-iso}
\E\int_s^T
\left(
\int_{(s,t]} \frac{1}{T-r}\,d\Xi_r
\right)^2dt
=
\int_{(s,T]} \frac{d\E[\Xi_r^2]}{T-r}
\le
\Lambda(a) < \infty.
\end{equation}
Hence, we can conclude that \(u^{(2)}\in L^2(\Omega\times[s,T])\). In particular, note that the concatenated control
\[
u_t
:=
u_t^{(1)}\mathbf{1}_{[0,s)}(t)+u_t^{(2)}\mathbf{1}_{[s,T)}(t)
\]
belongs to \(\sU^{c}\).

\medskip
\noindent\emph{The total cost.}
We start with estimating the cost on \([s,T]\), where we recall \(s\in[T-2a,T-a]\). First,
\[
|X_t-\xi_t|^2\le 2|X_t-\Xi_t|^2+2|\Xi_t-\xi_t|^2.
\]
Moreover, for \(s\le t<T\), using~\eqref{eq:proof-terminal-state3} and the constant $B$ defined in~\eqref{eq:proof-constrained-B1}, we obtain
\begin{align*}
\E[|\Xi_t-X_t|^2]
&\le
2\frac{(T-t)^2}{(T-s)^2}B
+
2(T-t)^2
\E\left[
\left(
\int_{(s,t]} \frac{1}{T-r}\,d\Xi_r
\right)^2
\right]\\
& \leq
2\frac{(T-t)^2}{(T-s)^2}B
+
2 (T-t)
\int_{(s,t]} \frac{1}{T-r} \,d\mathbb{E}[\Xi_r^2]
\\
& \le
2\frac{(T-t)^2}{(T-s)^2}B
+
2(T-t)\Lambda(a),
\end{align*}
where we used the fact that $r \leq t$, which implies $\frac{T-t}{T-r}\leq 1$ and hence $(\frac{T-t}{T-r})^2 \leq \frac{T-t}{T-r}$. Consequently,
\begin{align*}
\E\int_s^T |X_t-\xi_t|^2\,dt
&\le
2\,\E\int_s^T |X_t-\Xi_t|^2\,dt
+
2\,\Theta(a)\\
&\le
2\int_s^T
\left(
2\frac{(T-t)^2}{(T-s)^2}B
+
2(T-t)\Lambda(a)
\right)\,dt
+
2\,\Theta(a)\\
&=
\frac{4}{3}(T-s)B+2(T-s)^2\Lambda(a)+2\Theta(a)\\
&\le
\frac{8a}{3}B+8a^2\Lambda(a)+2\Theta(a),
\end{align*}
because \(T-s\le 2a\). Hence,
\begin{equation}\label{eq:constrained-tail-tracking}
\frac12\,\E\int_s^T |X_t-\xi_t|^2\,dt
\le
\frac{4a}{3}B+4a^2\Lambda(a)+\Theta(a).
\end{equation}
Next, using~\eqref{eq:proof-u2} and~\eqref{eq:proof-iso} together with the definition of~$B$, we get the upper bound 
\begin{align*}
\E\int_s^T (u_t^{(2)})^2\,dt
&\le
\frac{2}{T-s}B
+
2\,\E\int_s^T
\left(
\int_{(s,t]} \frac{1}{T-r}\,d\Xi_r
\right)^2dt\\
&\le
\frac{2}{a}B+2\Lambda(a),
\end{align*}
since \(T-s\ge a\). Thus, we have
\begin{equation}\label{eq:constrained-tail-trading}
\frac{\kappa}{2}\,\E\int_s^T (u_t^{(2)})^2\,dt
\le
aB+a^2\Lambda(a).
\end{equation}
Combining \eqref{eq:constrained-tail-tracking} and
\eqref{eq:constrained-tail-trading}, the cost on the terminal
interval $[s,T]$ is bounded by
\[
\frac{7a}{3}B+5a^2\Lambda(a)+\Theta(a).
\]
Using \eqref{eq:proof-constrained-B2}, this yields
\[
\E\left[
\frac12\int_s^T (X_t-\xi_t)^2\,dt
+
\frac{\kappa}{2}\int_s^T (u_t^{(2)})^2\,dt
\right]
\le
\frac{14}{3}\Omega(a)
+
\frac{17}{3}\Theta(a)
+
5a^2\Lambda(a).
\]
Finally, adding the contribution from \eqref{eq:constrained-stage1} on $[0,s]$, we obtain
\[
J_\kappa(u)
\le
\frac{17}{3}\Omega(a)
+
\frac{17}{3}\Theta(a)
+
5a^2\Lambda(a)
\le
6\bigl(\Omega(a)+\Theta(a)+a^2\Lambda(a)\bigr).
\]
Since the constructed control satisfies \(u\in\sU^{c}\), this proves \eqref{eq:constrained-master}.
\end{proof}

\begin{remark}\label{rem:constrained-interpretation}

The bound \eqref{eq:constrained-master} separates the constrained cost into
three pieces: the same contribution \(\Omega(a)\) as in the
unconstrained problem, the terminal compatibility defect \(\Theta(a)\), and the
learning term \(a^2\Lambda(a)\). Under the general reachability assumption~\eqref{eq:constrained-reachability}, we have
\(\Lambda(a)\to0\) as \(a\downarrow0\), and hence, in general,
\[
a^2\Lambda(a)=o(a)=o(\sqrt{\kappa})
\qquad\text{when } a=\sqrt{\kappa}.
\]
Thus, at the generic \(\sqrt{\kappa}\) scale, the new effect of the terminal constraint $X^u_T = \Xi_T$ $\mathbb{P}$-a.s.\ is governed by the short-horizon mismatch between the running target $\xi$ and the terminal-target martingale $\Xi$ measured by \(\Theta(a)\).
\end{remark}

The next corollary is analogous to~\cref{cor:rates} in the unconstrained case, and again establishes $\sqrt{\kappa}$ as the generic rate.

\begin{corollary}\label{cor:constrained-rates}
Under the assumptions of \cref{thm:constrained-master}, assume that there
exist constants \(L_1,L_2\ge0\) and exponents \(\alpha,\beta>0\) such that
\begin{equation}\label{eq:constrained-omega-rate}
\omega_\xi(h)\le L_1 h^\alpha,
\qquad
h>0,
\end{equation}
and
\begin{equation}\label{eq:constrained-Theta-rate}
\Theta(a)\le L_2 a^\beta,
\qquad
0<a\le T/2.
\end{equation}
Then, for every \(\kappa\in(0,T^2/4]\),
\begin{equation}\label{eq:constrained-rate-bound}
v^{c}(\kappa)
\le
6\sqrt{\kappa}\,\E[(x-\xi_0)^2]
+
12L_1\,\Gamma(\alpha+1)\,\kappa^{\alpha/2}
+
6L_2\,\kappa^{\beta/2}
+
6\kappa\,\Lambda(\sqrt{\kappa}).
\end{equation}
Consequently,
\[
v^{c}(\kappa)
=
O\!\left(\kappa^{(1\wedge\alpha\wedge\beta)/2}\right),
\qquad
\kappa\downarrow0,
\]
and in particular \(v^{c}(\kappa)=O(\sqrt{\kappa})\) if \(\alpha,\beta\ge 1\).

If, in addition, \(x=\xi_0\) almost surely, then the sharper estimate
\begin{equation}\label{eq:constrained-rate-bound-initial-match}
v^{c}(\kappa)
\le
6L_1\,\Gamma(\alpha+1)\,\kappa^{\alpha/2}
+
6L_2\,\kappa^{\beta/2}
+
6\kappa\,\Lambda(\sqrt{\kappa})
\end{equation}
holds. In particular,
\[
v^{c}(\kappa)
=
O\!\left(\kappa^{(\alpha\wedge\beta\wedge2)/2}\right),
\qquad
\kappa\downarrow0,
\]
and \(v^{c}(\kappa)=o(\sqrt{\kappa})\) if \(\alpha,\beta>1\).
\end{corollary}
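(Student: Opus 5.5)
The plan is to insert the assumed power bounds into \cref{thm:constrained-master}, treating each of the three terms $\Omega(a)$, $\Theta(a)$, $a^2\Lambda(a)$ separately with $a=\sqrt{\kappa}\in(0,T/2]$; the restriction $\kappa\le T^2/4$ guarantees this range. For $\Omega(a)$, I would repeat the Gamma-function computation from the proof of \cref{cor:rates}. The change of variables $h=ar$ together with \eqref{eq:constrained-omega-rate} gives
\[
\frac{2}{a}\int_0^\infty e^{-h/a}\omega_\xi(h)\,dh\le 2L_1\Gamma(\alpha+1)a^{\alpha},
\]
and hence
\[
\Omega(a)\le a\,\E[(x-\xi_0)^2]+2L_1\Gamma(\alpha+1)\kappa^{\alpha/2}.
\]
By \eqref{eq:constrained-Theta-rate} we have $\Theta(a)\le L_2\kappa^{\beta/2}$, and $a^2\Lambda(a)=\kappa\Lambda(\sqrt\kappa)$. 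Multiplying by $6$ then gives \eqref{eq:constrained-rate-bound} directly.

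For the $O$-statement I would use that $\Lambda(a)\le\int_0^T d\E[\Xi_t^2]/(T-t)<\infty$ by \eqref{eq:constrained-reachability}; in fact $\Lambda(a)\to0$ as $a\downarrow0$ by dominated convergence, since the integrand is integrable and the domain $[T-2a,T]$ shrinks to a point without an atom at $T$. Therefore the last term is $O(\kappa)$, indeed $o(\kappa)$. For $\kappa\le1$ every power $\kappa^{\gamma/2}$ with $\gamma\ge 1\wedge\alpha\wedge\beta$ is bounded by $\kappa^{(1\wedge\alpha\wedge\beta)/2}$, which yields the claimed order, and $O(\sqrt\kappa)$ when $\alpha,\beta\ge1$.

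For the case $x=\xi_0$ a.s., the point requiring care is that the constant $6\cdot 2L_1$ improves to $6L_1$. Applying \eqref{eq:constrained-master} naively does not give this, so I would instead rerun the myopic phase of the proof of \cref{thm:constrained-master} using \cref{rem:initial-match}. When $x=\xi_0$, the error representation \eqref{eq:error} lets Jensen apply directly, so
\[
\E\int_0^{T-a}e_t^2\,dt\le \frac1a\int_0^\infty e^{-h/a}\omega_\xi(h)\,dh=:\tilde\Omega(a).
\]
That proof uses $\Omega(a)$ only through this estimate, namely in \eqref{eq:constrained-stage1}, in the choice of $s$, and in \eqref{eq:proof-B}. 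So the whole argument goes through with $\Omega$ replaced by $\tilde\Omega$, giving
\[
v^c(\kappa)\le6\bigl(\tilde\Omega(a)+\Theta(a)+a^2\Lambda(a)\bigr).
\]
Since $\tilde\Omega(a)\le L_1\Gamma(\alpha+1)\kappa^{\alpha/2}$, this is \eqref{eq:constrained-rate-bound-initial-match}.

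The remaining order claims follow as before. The $\kappa\Lambda(\sqrt\kappa)$ term is $O(\kappa)$ and accounts for the cap at $2$ in $\alpha\wedge\beta\wedge2$. If $\alpha,\beta>1$, each term is $o(\sqrt\kappa)$, using $\kappa=o(\sqrt\kappa)$ for the last term. I expect no real obstacle beyond checking that the constrained proof indeed depends on $\Omega$ only through the bound on $\E\int e_t^2$.
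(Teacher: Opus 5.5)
Your proposal is correct and follows the paper's proof essentially step for step. You insert the Gamma-function bound on $\Omega(a)$ into \eqref{eq:constrained-master}, use $\Lambda(a)\to0$ from reachability for the order claims, and, for $x=\xi_0$, rerun the myopic phase with $\Omega(a)$ replaced by $\frac1a\int_0^\infty e^{-h/a}\omega_\xi(h)\,dh$ via \cref{rem:initial-match}, exactly as the paper does.
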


\begin{proof}
From \eqref{eq:constrained-Omega} and \eqref{eq:constrained-omega-rate},
with the change of variables \(h=ar\),
\[
\frac{1}{a}\int_0^\infty e^{-h/a}\omega_\xi(h)\,dh
\le
L_1a^\alpha\int_0^\infty e^{-r}r^\alpha\,dr
=
L_1\Gamma(\alpha+1)a^\alpha.
\]
Therefore
\[
\Omega(a)
\le
a\,\E[(x-\xi_0)^2]
+
2L_1\Gamma(\alpha+1)a^\alpha.
\]
Inserting this estimate and \eqref{eq:constrained-Theta-rate} into
\eqref{eq:constrained-master}, with \(a=\sqrt{\kappa}\), gives
\eqref{eq:constrained-rate-bound}.

By \eqref{eq:constrained-reachability}, we have
\(\Lambda(a)\to0\) as \(a\downarrow0\). Hence
\[
\kappa\Lambda(\sqrt{\kappa})=o(\kappa)
=
O\!\left(\kappa^{(1\wedge\alpha\wedge\beta)/2}\right), \quad \kappa \downarrow 0,
\]
and the first asymptotic estimate follows.

If \(x=\xi_0\) a.s., the initial displacement term in the myopic phase
vanishes. More precisely, in the proof of \cref{thm:constrained-master}, using
the sharper estimate from \cref{rem:initial-match} gives the same bound as
\eqref{eq:constrained-master} with \(\Omega(a)\) replaced by
\[
\Omega_0(a):=
\frac1a\int_0^\infty e^{-h/a}\omega_\xi(h)\,dh.
\]
The estimate above gives
\[
\Omega_0(a)\le L_1\Gamma(\alpha+1)a^\alpha.
\]
Substituting this into \eqref{eq:constrained-master}, again with
\(a=\sqrt{\kappa}\), proves
\eqref{eq:constrained-rate-bound-initial-match}, 
and the remaining claims follow.
\end{proof}

\begin{remark}[Lower bound]
 Since the unconstrained value function $v$ is trivially a lower bound for the constrained value function $v^c$, the lower bounds in~\cref{prop:general-lower-bound} immediately extend to the constrained case.
\end{remark}

\subsection{Examples and special cases}\label{sec:constrained-examples}

We first record a direct consequence of~\cref{thm:constrained-master} for target processes $\xi$ with finite $L^2$ time-translation seminorm, which includes square-integrable semimartingales by~\cref{prop:semimart}. Then, we isolate two complementary benchmarks: a mismatch between a running target $\xi$ and a deterministic terminal position $\Xi_T = z \in \mathbb{R}$, which is controlled by $\Theta(a)$, and the exact continuous-martingale case furnished by \cite[Theorem~3.2, formula~(10)]{bank.al.17} with sharp rate $\sqrt{\kappa}$.

\begin{corollary}\label{cor:constrained-matching}
In the setting of \cref{thm:constrained-master}, assume that
\(\mysnorm<\infty\). Then, for every \(\kappa\in(0,T^2/4]\),
\begin{equation}\label{eq:constrained-matching-bound}
v^{c}(\kappa)
\le
6\sqrt{\kappa}\,\bigl(\E[(x-\xi_0)^2]+2\mysnorm\bigr)
+
6\Theta(\sqrt{\kappa})
+
6\kappa\,\Lambda(\sqrt{\kappa}).
\end{equation}
In particular, if
\begin{equation}\label{eq:constrained-Theta-Oa}
\Theta(a)=O(a),
\qquad
a\downarrow0,
\end{equation}
then
\[
v^{c}(\kappa)=O(\sqrt{\kappa}),
\qquad
\kappa\downarrow0.
\]
\end{corollary}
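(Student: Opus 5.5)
The bound~\eqref{eq:constrained-matching-bound} follows from \cref{thm:constrained-master} once we bound $\Omega(a)$ using the finiteness of $\mysnorm$.

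Fix $\kappa\in(0,T^2/4]$ and set $a:=\sqrt{\kappa}$, so that $0<a\le T/2$ and \cref{thm:constrained-master} applies. By definition~\eqref{eq:seminorm}, we have $\omega_\xi(h)\le \mysnorm\, h$ for all $h>0$. In other words, \eqref{eq:constrained-omega-rate} holds with $L_1=\mysnorm$ and $\alpha=1$.

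Inserting this into~\eqref{eq:constrained-Omega}, the change of variables $h=ar$ from the proof of \cref{cor:constrained-rates} (equivalently, $\Gamma(2)=1$) gives
\[
\frac{2}{a}\int_0^\infty e^{-h/a}\omega_\xi(h)\,dh\le 2\mysnorm\, a,
\]
and hence
\[
\Omega(a)\le a\bigl(\E[(x-\xi_0)^2]+2\mysnorm\bigr).
\]
Substituting this into~\eqref{eq:constrained-master} and using $a^2\Lambda(a)=\kappa\Lambda(\sqrt{\kappa})$ yields exactly~\eqref{eq:constrained-matching-bound}.

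For the asymptotic statement, assume~\eqref{eq:constrained-Theta-Oa}, i.e.\ $\Theta(\sqrt\kappa)\le C\sqrt\kappa$ for all small $\kappa$. As noted in \cref{rem:constrained-interpretation} and the proof of \cref{cor:constrained-rates}, the reachability condition~\eqref{eq:constrained-reachability} implies $\Lambda(a)\to0$ as $a\downarrow0$. This holds because $\Lambda(a)$ is the tail integral of a finite measure over intervals shrinking to $\{T\}$, and that measure has no atom at $T$. Consequently $\kappa\Lambda(\sqrt\kappa)=o(\kappa)=O(\sqrt\kappa)$. All three terms in~\eqref{eq:constrained-matching-bound} are therefore $O(\sqrt\kappa)$, which gives $v^c(\kappa)=O(\sqrt\kappa)$.

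There is no real obstacle here. The only point needing care is that the seminorm bound $\omega_\xi(h)\le\mysnorm\,h$ must hold for all $h>0$, including $h\ge T$, and it does because the supremum in~\eqref{eq:seminorm} is taken over all $h>0$.
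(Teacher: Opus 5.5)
Your proof is correct and follows the paper's argument essentially step for step. You use $\omega_\xi(h)\le\mysnorm\,h$ to get $\Omega(a)\le a\bigl(\E[(x-\xi_0)^2]+2\mysnorm\bigr)$, substitute this into \cref{thm:constrained-master} with $a=\sqrt{\kappa}$, and deduce the $O(\sqrt{\kappa})$ rate from $\Theta(a)=O(a)$ together with $\Lambda(a)\to0$ under reachability.
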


\begin{proof}
By the definition of the seminorm $\mysnorm$ in~\eqref{eq:seminorm}, we have
\(
\omega_\xi(h)\le \mysnorm \, h
\)
for all 
\(
h>0.
\)
Hence \eqref{eq:constrained-Omega} yields
\[
\Omega(a)
\le
a\,\bigl(\E[(x-\xi_0)^2]+2 \mysnorm \bigr),
\qquad
0<a\le T/2.
\]
Insert this estimate into \cref{thm:constrained-master} with
\(a=\sqrt{\kappa}\) to obtain \eqref{eq:constrained-matching-bound}. If
\eqref{eq:constrained-Theta-Oa} holds, then
\(
\Theta(\sqrt{\kappa})=O(\sqrt{\kappa}),
\)
and \cref{rem:constrained-interpretation} yields
\(
\kappa\,\Lambda(\sqrt{\kappa})=o(\sqrt{\kappa}).
\)
Therefore \(v^{c}(\kappa)=O(\sqrt{\kappa})\).
\end{proof}

\begin{remark}\label{rem:constrained-Theta-sufficient}
A simple but quite general sufficient condition for \eqref{eq:constrained-Theta-Oa} is the local uniform
\(L^2\)-boundedness
\begin{equation}\label{eq:constrained-local-L2-boundedness}
\sup_{t\in[T-\delta,T]}\E|\xi_t-\Xi_t|^2<\infty
\end{equation}
for some \(\delta>0\). Indeed, for \(0<a\le \delta/2\),
\[
\Theta(a)
=
\E\int_{T-2a}^T |\xi_t-\Xi_t|^2\,dt
\le
2a\sup_{t\in[T-\delta,T]}\E|\xi_t-\Xi_t|^2.
\]
In particular, observe that exact agreement of \(\xi\) and \(\Xi\) on a terminal time interval is not
needed for the generic \(\sqrt{\kappa}\) rate.
\end{remark}

\begin{example}[Deterministic terminal target]\label{ex:constrained-deterministic}
Let \(\Xi_T=z\in\R\) be deterministic. Then \(\Xi_t\equiv z\) for all $t \in [0,T]$, so
\[
\Lambda(a)=0,
\qquad
\Theta(a)=\E\int_{T-2a}^T |\xi_t-z|^2\,dt.
\]
Hence \cref{thm:constrained-master} reduces to
\[
v^{c}(\kappa)\le 6\bigl(\Omega(a)+\Theta(a)\bigr),
\qquad
a=\sqrt{\kappa}.
\]
Assume, moreover, that \(\xi\) is deterministic, piecewise $H^1(0,T)$, and continuous at \(T\) with \(\xi_T=z\). Then \(\mysnorm<\infty\) by
\cref{prop:semimart} and 
\begin{equation*}
    \Omega(a) \le a\,\bigl((x-\xi_0)^2 + 2 \mysnorm \bigr).   
\end{equation*}
If, in addition, \(\xi\in H^1(T-\delta,T)\) for some \(\delta>0\), then for \(a\le \delta/2\), applying Jensen yields
\[
\Theta(a)
=
\int_{T-2a}^T |\xi_t-z|^2\,dt
\leq \int_{T-2a}^T (T-t) \left( \int_t^T \vert \dot \xi_s \vert^2 ds \right) dt
\le
2a^2\int_{T-2a}^T |\dot\xi_t|^2\,dt
=
o(a^2),
\]
since \(\dot\xi\in L^2(T-\delta,T)\).
Consequently,
\[
v^{c}(\kappa)=O(\sqrt{\kappa}),
\qquad
\kappa\downarrow0.
\]
If one further has \(\xi\in H^1(0,T)\) and \(x=\xi_0\), then the control \(u_t=\dot \xi_t\) belongs to \(\sU^{c}\) and yields the sharper upper bound
\[
v^{c}(\kappa)\le \frac{\kappa}{2}\int_0^T \dot\xi_t^2\,dt.
\]
\end{example}

\begin{example}[Continuous martingale running and terminal targets]\label{rem:constrained-martingale-exact}
Assume that
\[
\xi_t=\Xi_t=\Xi_0+M_t,
\qquad
0\le t\le T,
\]
where \(\Xi_0\in\R\) is deterministic, \(M\) is a continuous
square-integrable martingale with \(M_0=0\), and
\eqref{eq:constrained-reachability} holds. In this case, the exact constrained value formula
of Bank et al.~\cite[Theorem~3.2, formula~(10)]{bank.al.17} yields
\begin{equation}\label{eq:constrained-martingale-exact}
v^{c}(\kappa)
=
\frac{\sqrt{\kappa}}{2}\coth\!\left(\frac{T}{\sqrt{\kappa}}\right)(x-\Xi_0)^2
+
\frac12\,\E\int_0^T \sqrt{\kappa}\,
\coth\!\left(\frac{T-t}{\sqrt{\kappa}}\right)\,d\langle M\rangle_t.
\end{equation}
Then
\begin{equation}\label{eq:constrained-martingale-asymptotic}
v^{c}(\kappa)
=
\frac{\sqrt{\kappa}}{2}\Bigl((x-\Xi_0)^2+\E[\langle M\rangle_T]\Bigr)
+
o(\sqrt{\kappa}),
\qquad
\kappa\downarrow0,
\end{equation}
so that the generic \(\sqrt{\kappa}\) rate is sharp whenever 
$\xi$ is not identically equal to $x$.

To justify \eqref{eq:constrained-martingale-asymptotic}, first note that $\coth(T/\sqrt{\kappa}) = 1 + o(1)$ as $\kappa \downarrow 0$ and hence
\begin{equation} \label{eq:coth}
    \sqrt{\kappa}\,\coth\!\left(\frac{T}{\sqrt{\kappa}}\right) =
\sqrt{\kappa}+o(\sqrt{\kappa}), \qquad \kappa \downarrow 0.
\end{equation}
Next, for \(0\le t<T\),
\[
0\le
\sqrt{\kappa}\,\coth\!\left(\frac{T-t}{\sqrt{\kappa}}\right)-\sqrt{\kappa}
=
\frac{2\sqrt{\kappa}}{e^{2(T-t)/\sqrt{\kappa}}-1}
\le
\frac{\kappa}{T-t},
\]
because \(e^{2x}-1\ge 2x\) for \(x\ge0\). Since
\(
d\E[\Xi_t^2]=d\E[\langle M\rangle_t],
\)
the reachability condition \eqref{eq:constrained-reachability} gives
\[
\int_0^T \frac{d\E[\langle M\rangle_t]}{T-t}<\infty.
\]
Therefore
\[
0\le
\E\int_0^T \Biggl(
\sqrt{\kappa}\,\coth\!\left(\frac{T-t}{\sqrt{\kappa}}\right)-\sqrt{\kappa}
\Biggr)\,d\langle M\rangle_t
\le
\kappa\int_0^T \frac{d\E[\langle M\rangle_t]}{T-t}
=
o(\sqrt{\kappa}),
\]
and hence
\begin{align*}
    \E\int_0^T \sqrt{\kappa}\,\coth\!\left(\frac{T-t}{\sqrt{\kappa}}\right)\,d\langle M\rangle_t = & \, \E \int_0^T \Biggl( \sqrt{\kappa}\,\coth\!\left(\frac{T-t}{\sqrt{\kappa}}\right)-\sqrt{\kappa} \Biggr)\,d\langle M\rangle_t + \sqrt{\kappa} \, \E [\langle M \rangle_T] \\
    = & \, \sqrt{\kappa} \, \E [\langle M \rangle_T] + o(\sqrt{\kappa}),
\end{align*}
which proves \eqref{eq:constrained-martingale-asymptotic} together with~\eqref{eq:coth}.
\end{example}

\section{An Obizhaeva--Wang model and its solution}\label{sec:ow}

In the remaining two sections, we apply our general tracking results to the optimal execution problem that motivated it. Specifically, we consider a generalized Obizhaeva–Wang model of transient price impact with a random target position at the terminal time and semimartingale execution strategies. In this section, we first solve the resulting singular stochastic control problem using a Hilbert-space approach. Then, in~\cref{sec:ow-asymptotics}, we introduce a regularized version of the problem with instantaneous quadratic costs imposed on the trading rate, and study its small-cost asymptotics. We will show that the asymptotic analysis reduces to a constrained tracking problem of the optimal unregularized strategy with terminal state constraint. Consequently, the theory from~\cref{sec:constrained} will allow us to quantify the asymptotic error rate when approximating the optimal singular solution in our original Obizhaeva–Wang model with an absolutely continuous regularized strategy.

We fix the time horizon $T\in(0,\infty)$ and let $(\Omega,\cF,(\cF_{t})_{t\in[0,T]},\mathbb{P})$ be a filtered probability space satisfying the usual assumptions. By convention, $\cF_{0-}$ is the augmentation of the trivial $\sigma$-field. Let $\beta,\lambda: [0,T] \to (0,\infty)$ be deterministic functions with the following properties.

\begin{assumption} \label{ass:beta-lambda}
\begin{enumerate}
    \item $\beta$ is c\`adl\`ag and of finite variation.
    \item $\lambda$ is absolutely continuous and bounded away from zero, and its a.e.\ derivative admits a c\`adl\`ag version $\dot\lambda$ of finite variation. We set $\lambda_{0-} := \lambda_0$ and define 
    \begin{equation*}
        \gamma_t:=\log(\lambda_t), \qquad 0 \leq t \leq T.  
    \end{equation*}
    Then, $\gamma$ is also absolutely continuous and its a.e.~derivative admits the c\`adl\`ag finite-variation version
    \begin{equation} \label{eq:gamma-dot}
   \dot\gamma_t := \frac{\dot\lambda_t}{\lambda_t}, \qquad 0 \leq t \leq T.
    \end{equation}
    \item $2\beta + \dot\gamma$ is bounded away from zero, i.e., there exists $c>0$ such that $2\beta_t+\dot\gamma_t >c$ for every $t \in [0,T]$.
\end{enumerate}
\end{assumption}

We recall that (iii) ensures absence of price manipulation in the model. For later use, we also define the measurable functions
\begin{equation} \label{eq:eta-theta}
    \eta_t := \dfrac{\beta_t+\dot\gamma_t}{2\beta_t+\dot\gamma_t}, \qquad \theta_t := \int_0^t \dfrac{(\beta_s+\dot\gamma_s)^2}{\lambda_s(2\beta_s+\dot\gamma_s)} \,ds, \qquad 0 \leq t \leq T.
\end{equation}

Note that, under~\cref{ass:beta-lambda}, the functions $\beta, \lambda, \dot\lambda, \dot\gamma$ are all bounded on $[0,T]$. Moreover, the function $\eta$ is c\`adl\`ag and of finite variation on $[0,T]$ (in particular, it is bounded), and $\theta,\dot\theta$ are bounded on $[0,T]$.

Let $\cS^2$ denote the set of c\`adl\`ag semimartingales $S$ with $\E [\sup_{0\le t\le T}|S_t|^2 ]<\infty$ and let $\Xi_T\in L^2(\cF_T)$ be the given target position at the terminal time $T$. The set of admissible execution strategies is given by
\[
  \cQ:=\{Q\in\cS^2:\, Q_{0-}=0,\ Q_T=\Xi_T\}
\]
with $Q_{t}$ representing the cumulative number of shares bought or sold by time $t$. For any strategy $Q\in\cQ$, we define the impact process $Y=Y^Q$ as the unique solution of
\begin{equation}\label{eq:OWreversion}
  dY_t=-\beta_t Y_t\,dt+\lambda_t\,dQ_t, \qquad Y_{0-}=y,
\end{equation}
where $y\in\R$ is the given initial impact and is fixed throughout. The impact process $Y$ is interpreted as the deviation process, driven by $Q$, from an unaffected price process $S \in \cS^2$, in the sense that marginal orders are effectively executed at the price $S_-+Y_-$. The two illiquidity parameters in the model, the resilience $\beta$ and the depth (or push factor) $\lambda$, are allowed to vary deterministically over time, similar to~\cite{FruthSchonebornUrusov.13}. 
 
Given the unaffected asset price process $S$, we are interested in minimizing the execution cost,
\begin{equation}\label{eq:exec-cost}
  \inf_{Q\in\cQ} \E\left[\int_0^T Y_{t-}\,dQ_t+\frac12[Y,Q]_T-\int_0^T Q_{t-}\,dS_t\right],
\end{equation}
subject to the terminal constraint $Q_T=\Xi_T$ specifying the total quantity to be executed. Throughout, note that the integral $\int_a^b := \int_{[a,b]}$ is defined over the closed interval, and the quadratic covariation includes jumps at zero. Assuming that $S$ is a sufficiently regular martingale such that the last integral in~\eqref{eq:exec-cost} is a martingale, the optimization reduces to
\begin{equation}\label{eq:ow-objective}
  \inf_{Q\in\cQ} \E\left[\int_0^T Y_{t-}\,dQ_t+\frac12[Y,Q]_T\right].
\end{equation}
This is the form of the problem typically used in the literature, and we take this formulation as our primitive. 
Observe that, despite the absence of $S$, the problem is inherently stochastic due to the random terminal condition $\Xi_T$. This differs from standard optimal execution
problems, where the order size is known upfront. The random target position can be thought of as arising from a stochastic order flow, which is anticipated and executed over time, akin to the setup studied in Nutz, Webster, and Zhao~\cite{NutzWebsterZhao.26}. Its presence also necessitates the consideration of general c\`adl\`ag semimartingale execution strategies $Q \in \cQ$. A detailed discussion and derivation of the cost functionals in~\eqref{eq:exec-cost} and~\eqref{eq:ow-objective} can be found in~\cite{AckermannKruseUrusov.21cont}. 

The next theorem summarizes the optimal execution strategy for~\eqref{eq:ow-objective}.

\begin{theorem}\label{th:OWsolution}
Suppose~\cref{ass:beta-lambda} is satisfied. Let $\Xi_t:=\E[\Xi_T\mid\cF_t]$ for
$t\in\{0-\}\cup[0,T]$, and define the square-integrable martingale
$M=(M_t)_{t\in\{0-\}\cup[0,T]}$ by
\begin{align}
  M_{0-} &:= \frac{\lambda_T}{1+\lambda_T\theta_T}
  \left(\Xi_{0-}+\frac{y}{\lambda_{0}}\right), \label{eq:defMstart}\\
  M_t &:= M_{0-}+\int_0^t
  \frac{\lambda_T}{1+\lambda_T(\theta_T-\theta_s)}\,d\Xi_s,
  \qquad t\in[0,T]. \label{eq:defM}
\end{align}
Equivalently, for $t \in [0,T]$,
\begin{equation}\label{eq:defMalt}
  M_t=
  \frac{\lambda_T}{1+\lambda_T\theta_T}
  \left(\Xi_0+\frac{y}{\lambda_{0}}\right)
  +\int_{(0,t]}
  \frac{\lambda_T}{1+\lambda_T(\theta_T-\theta_s)}\,d\Xi_s.
\end{equation}
Then
\begin{equation} \label{def:Ystar}
  Y_t^*:=\eta_tM_t \quad\text{for } t\in[0,T),
  \qquad
  Y_T^*:=M_T  
\end{equation}
is the price impact process of the unique optimal strategy for \eqref{eq:ow-objective}. The optimal strategy is the c\`adl\`ag semimartingale $Q^* \in \mathcal{Q}$ satisfying $Q^*_{0-} = 0$, $Q^*_{T}=\Xi_{T}$, and
\begin{equation} \label{eq:optQfinal}
\begin{aligned}
Q^*_t = & \, \left( \Xi_{0-} + \frac{y}{\lambda_{0}} \right) \left[ \frac{\lambda_T}{1+\lambda_T \theta_T} \left( \frac{\eta_t}{\lambda_t} + \theta_t \right) - 1 \right] + \Xi_{0-} \\
& \, + \int_0^t \frac{\lambda_T}{1+\lambda_T(\theta_T-\theta_s)} \left( \frac{\eta_t}{\lambda_t} + \theta_t - \theta_s \right) d\Xi_s, \qquad 0 \leq t < T.
\end{aligned}
\end{equation}
Its initial and terminal block trades are given by
\begin{equation} \label{eq:optQjumps_gen} 
\begin{aligned}
\Delta Q_0^* = & \, \frac{\eta_0}{\lambda_0} \frac{\lambda_T}{1+\lambda_T \theta_T} \left(  \Xi_0 + \frac{y}{\lambda_0} \right) - \frac{y}{\lambda_0}, \\ 
\Delta Q_T^* = & \, \frac{1}{\lambda_T} \left( \frac{\beta_{T-}}{2\beta_{T-} + \dot{\gamma}_{T-}} M_T + \eta_{T-} \Delta M_T \right).
\end{aligned} 
\end{equation}
Finally, the optimal execution cost~\eqref{eq:ow-objective} is given by
\begin{equation} \label{eq:optimal-cost}
    \frac12 \, \left(\E\left[\int_0^T \dot\theta_t M_t^2\,dt\right]
    +\frac{\E[M_T^2]}{\lambda_T} -\frac{y^2}{\lambda_0} \right).
\end{equation}
\end{theorem}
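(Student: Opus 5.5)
We would prove \cref{th:OWsolution} by a Hilbert-space (completion of squares) argument, working with the impact process $Y=Y^Q$ instead of $Q$. Since $\lambda$ is absolutely continuous, \eqref{eq:OWreversion} is inverted as
\[
dQ_t=\frac{1}{\lambda_t}\bigl(dY_t+\beta_tY_t\,dt\bigr),
\]
so $Q$ is an affine function of $Y$. The constraints $Q_{0-}=0$, $Q_T=\Xi_T$ become a linear constraint on $Y$ and the initial condition $Y_{0-}=y$. Writing $Z_t:=Y_t/\lambda_t$, we have
\[
dZ_t=\frac{dY_t}{\lambda_t}-\dot\gamma_tZ_t\,dt,
\qquad
dQ_t=dZ_t+(\beta_t+\dot\gamma_t)Z_t\,dt .
\]
The terminal constraint then reads
\[
Z_T+\int_0^T(\beta_t+\dot\gamma_t)Z_t\,dt=\Xi_T+\frac{y}{\lambda_0}.
\]
This produces the combination $\Xi+y/\lambda_0$ appearing in \eqref{eq:defMstart}.

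\emph{Step 1: rewrite the cost as a quadratic form in $Y$.} Substitute $dQ=\lambda^{-1}(dY+\beta Y\,dt)$ into $\int Y_{t-}\,dQ_t+\tfrac12[Y,Q]_T$. Using $[Y,Q]=\int\lambda^{-1}d[Y]$ and Itô's formula for $Y^2/\lambda$, the terms $Y_-\,dY+\tfrac12\,d[Y]$ combine into $\tfrac12\,d(Y^2)$. Integrating by parts against $1/\lambda$ produces $\tfrac12\dot\gamma\,Y^2/\lambda\,dt$. We expect
\[
J_0(Q)=\frac12\,\E\Bigl[\frac{Y_T^2}{\lambda_T}-\frac{y^2}{\lambda_0}+\int_0^T\frac{2\beta_t+\dot\gamma_t}{\lambda_t}Y_t^2\,dt\Bigr].
\]
This is a strictly convex quadratic form by \cref{ass:beta-lambda}(iii). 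We introduce the Hilbert space $\cH$ of pairs $(Y_\cdot,Y_T)$ with norm $\E\int_0^T\frac{2\beta+\dot\gamma}{\lambda}Y^2\,dt+\E[Y_T^2/\lambda_T]$, up to the factor $\tfrac12$. The problem becomes: minimize $\|Y\|_\cH^2$ over the affine set cut out by the linear constraint above, written in terms of $Y$. Martingale terms must be shown to have zero expectation, which uses $Q\in\cS^2$ and boundedness of the coefficients.

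\emph{Step 2: identify the minimizer by orthogonality.} The constraint is linear and $\cF_T$-measurable. The minimizer of a norm on an affine subspace is characterized by $\langle Y^*,D\rangle_\cH=0$ for every admissible direction $D$, i.e. every $D=Y^{Q_1}-Y^{Q_2}$ with zero constraint residual. We verify that the candidate $Y^*=\eta M$ on $[0,T)$, $Y^*_T=M_T$ satisfies this. Since $M$ is a martingale,
\[
\E\int_0^T\frac{2\beta+\dot\gamma}{\lambda}\,\eta M\,D\,dt=\E\int_0^T\frac{\beta+\dot\gamma}{\lambda}M\,D\,dt .
\]
Adding $\E[M_TD_T/\lambda_T]$ and conditioning, this should equal $\E[M_T\cdot(\text{linear constraint functional of }D)]=0$. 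We check this via the $Z$-representation and the product rule for $M\,dZ$, using $\eta\lambda^{-1}(\beta+\dot\gamma)\cdots$ algebra. This is where the formulas for $\eta$ and $\theta$ arise.

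\emph{Step 3: verify feasibility.} We must check that the $Q^*$ recovered from $Y^*$ belongs to $\cQ$, in particular that $Q^*_T=\Xi_T$. Compute $Q^*$ from $Y^*$ through the inverse relation, giving \eqref{eq:optQfinal}. The constraint is satisfied precisely because the coefficient $\lambda_T/(1+\lambda_T(\theta_T-\theta_s))$ solves the associated backward ODE; this is checked via stochastic Fubini for $\int_0^T\theta$-weighted integrals of $d\Xi$. The jumps \eqref{eq:optQjumps_gen} are read off from $\Delta Y^*_0=\eta_0M_0-y$ and $\Delta Y^*_T=M_T-\eta_{T-}M_{T-}$. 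Uniqueness follows from strict convexity. The cost \eqref{eq:optimal-cost} follows by plugging $Y^*$ into the form of Step 1 and using the identity $\frac{2\beta+\dot\gamma}{\lambda}\eta^2=\dot\theta+\ldots$. More precisely, the orthogonality relation applied with $D=Y^*$ reduces $\E\int\frac{2\beta+\dot\gamma}{\lambda}\eta^2M^2$ to $\E\int\dot\theta M^2$.

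The main obstacle will be Step 1: the Itô/integration-by-parts bookkeeping with jumps at $0$ and $T$, and closed-interval integrals with the convention $\lambda_{0-}=\lambda_0$. This is combined with justifying that all local-martingale terms are true martingales, which requires only $\cS^2$ integrability. Step 2 is algebraic but delicate; we would first guess the adjoint relation by formally differentiating the Lagrangian with multiplier $M_T$ and then verify it.
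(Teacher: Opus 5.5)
Your proposal is correct and follows essentially the paper's route: the cost is rewritten as $\langle Y,Y\rangle_{\cH}-y^2/(2\lambda_0)$ (\cref{le:reformulation}), feasibility of $Y^*=\eta M$ comes from integrating $d(a_tM_t)=d\Xi_t-\dot\theta_tM_t\,dt$ with $a_t=\lambda_T^{-1}+\theta_T-\theta_t$, and optimality and uniqueness follow from $\langle Y^*,J\rangle_{\cH}=\tfrac12\E[M_T\,G(J)]=0$ whenever $G(J)=0$ (tower property), together with c\`adl\`ag paths. Two small corrections: Step~1 produces no martingale terms at all, since It\^o's formula for $Y^2$ and integration by parts against $1/\lambda$ are pathwise, and the cost formula needs no orthogonality argument (which could not be applied to $D=Y^*$ anyway, since $G(Y^*)=\Xi_T+y/\lambda_0\neq0$ in general) because $\frac{2\beta_t+\dot\gamma_t}{\lambda_t}\eta_t^2=\dot\theta_t$ holds exactly.
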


\begin{remark}[Special cases]
    Our~\cref{th:OWsolution} generalizes some existing results in the literature. In the case where the terminal position $\Xi_T = x > 0$ is deterministic, equations~\eqref{eq:optQfinal}–\eqref{eq:optimal-cost} retrieve the optimal strategy and cost derived in Fruth et al.~\cite[Theorem 8.4]{FruthSchonebornUrusov.13}. In the case where $\beta > 0$ and $\lambda > 0$ are constants, our formulas~\eqref{eq:optQfinal}–\eqref{eq:optQjumps_gen} reduce to the ones derived in Ackermann et al.~\cite[Example 5.3]{AckermannKruseUrusov.22} for the special case of a Brownian filtration setting. In particular, this includes the original formulas of Obizhaeva and Wang~\cite[Proposition 2]{ObizhaevaWang.13} with deterministic target position and $y=0$.
\end{remark}

\begin{remark}
The optimal solution in~\cref{th:OWsolution} has some intuitive properties. First, note that the optimal strategy $Q^*$ is characterized by making the controlled deviation process $Y^*$ in~\eqref{def:Ystar} a martingale on $(0,T)$ up to scaling by the deterministic function $\eta$, which is in turn determined by the time-varying illiquidity parameters $\beta$ and $\lambda$. This generalizes an observation from~\cite{ObizhaevaWang.13} for the deterministic Obizhaeva--Wang model with constant $\beta, \lambda$. There, the optimal strategy keeps the deviation process flat over $(0,T)$ by offsetting incurred price impact $\lambda$ against resilience $\beta$. Here, loosely speaking, any execution strategy $Q \in \cQ$ that induces a drift in the rescaled deviation process $Y^Q/\eta$ in~\eqref{eq:OWreversion} cannot be optimal. It can similarly be interpreted as optimally counterbalancing, now in expectation, the illiquidity effects while trading toward the random target~$\Xi_T$. Second, as also observed in~\cite[Example 5.3]{AckermannKruseUrusov.22}, the optimal execution strategy on $(0,T)$ in~\eqref{eq:optQfinal} decomposes into two additive parts: (i) the optimal deterministic (not necessarily absolutely continuous) execution strategy for known order sizes, where the unknown terminal position $\Xi_T$ is simply replaced by its expected value $\Xi_{0-} =\E[\Xi_T\mid\cF_{0-}]$; (ii) random fluctuations around the deterministic strategy, driven by the c\`adl\`ag terminal-target martingale $\Xi$, which incorporates updates about the terminal position $\Xi_T$. In particular, note that jumps in the martingale $\Xi$ during the execution interval $(0,T)$ trigger block trades in the execution strategy~$Q^*$.
\end{remark}

\subsection{A Hilbert-space proof of \cref{th:OWsolution}}

The rest of the section is devoted to proving \cref{th:OWsolution}. As observed in \cite[Lemma~8.6]{FruthSchonebornUrusov.13}, the equation for the transient price impact \eqref{eq:OWreversion} can be inverted pathwise. More precisely, fixing $Q_{0-}=0$ and $Y_{0-}=y$, one has
\begin{equation}\label{eq:OWreversionInverse}
  dQ_t=\frac{\beta_t}{\lambda_t}Y_t\,dt+\frac{1}{\lambda_t}\,dY_t, \qquad Q_{0-}=0,
\end{equation}
and hence, using integration by parts, the representation
\begin{equation}\label{eq:Qexplicit}
  Q_t=\frac{Y_t}{\lambda_t}-\frac{y}{\lambda_0}+\int_0^t \frac{\beta_s+\dot\gamma_s}{\lambda_s}Y_s\,ds, \qquad 0 \leq t \leq T.
\end{equation}

It will be beneficial to cast our control problem~\eqref{eq:ow-objective} purely in terms of the impact process~$Y$.

\begin{lemma}\label{le:reformulation}
Let $Q\in\cQ$ and let $Y$ be the associated impact process. Then
\begin{align}
  \E\left[\int_0^T Y_{t-}\,dQ_t+\frac12[Y,Q]_T\right]
  &= \E\left[\frac12\left(\frac{Y_T^2}{\lambda_T}-\frac{y^2}{\lambda_0}\right)
      +\int_0^T \frac{2\beta_t+\dot\gamma_t}{2\lambda_t}Y_t^2\,dt\right] \nonumber\\
  &= \langle Y,Y\rangle_{\cH}-\frac{y^2}{2\lambda_0}, \label{eq:costReformulation}
\end{align}
where
\begin{equation}\label{eq:defInnerProd}
  \langle Y,Z\rangle_{\cH}
  := \E\left[\int_0^T \frac{2\beta_t+\dot\gamma_t}{2\lambda_t}Y_tZ_t\,dt
     + \frac{1}{2\lambda_T}Y_TZ_T\right].
\end{equation}
Moreover, the terminal constraint $Q_T=\Xi_T$ is equivalent to
\begin{equation}\label{eq:terminalConstraintNew}
  \frac{Y_T}{\lambda_T}-\frac{y}{\lambda_0}+\int_0^T \frac{\beta_t+\dot\gamma_t}{\lambda_t}Y_t\,dt=\Xi_T.
\end{equation}
\end{lemma}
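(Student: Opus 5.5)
The plan is to rewrite the cost entirely in terms of $Y$ by substituting $dQ$ from the pathwise inversion \eqref{eq:OWreversionInverse}, and then to apply It\^o's product rule to $Y^2/\lambda$. The terminal constraint \eqref{eq:terminalConstraintNew} is immediate from \eqref{eq:Qexplicit} evaluated at $t=T$ together with $Q_T=\Xi_T$, so all the work lies in the cost identity.

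\textbf{Step 1 (cost in terms of $Y$).} From \eqref{eq:OWreversionInverse}, $dQ_t=\frac{\beta_t}{\lambda_t}Y_t\,dt+\frac1{\lambda_t}dY_t$. Hence
\[
\int_0^T Y_{t-}\,dQ_t=\int_0^T\frac{\beta_t}{\lambda_t}Y_t^2\,dt+\int_0^T\frac{Y_{t-}}{\lambda_t}\,dY_t,
\]
where I replace $Y_{t-}$ by $Y_t$ in the $dt$-integral, since $Y$ has at most countably many jumps. Next, $[Y,Q]_T=\int_0^T\frac1{\lambda_t}\,d[Y,Y]_t$ because the $dt$-part of $Q$ does not contribute to the covariation. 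Both integrals are over $[0,T]$, so a jump at time $0$ (relative to $Y_{0-}=y$) is included.

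\textbf{Step 2 (It\^o on $Y^2/\lambda$).} Since $1/\lambda$ is absolutely continuous with derivative $-\dot\gamma/\lambda$, integration by parts on $[0,T]$ starting from $0-$ gives
\[
\frac{Y_T^2}{\lambda_T}-\frac{y^2}{\lambda_0}=\int_0^T\frac{2Y_{t-}}{\lambda_t}\,dY_t+\int_0^T\frac{1}{\lambda_t}\,d[Y,Y]_t-\int_0^T\frac{\dot\gamma_t}{\lambda_t}Y_t^2\,dt.
\]
Solving this for $\int \frac{Y_-}{\lambda}dY+\frac12\int\frac1\lambda d[Y,Y]$ and inserting the result into Step 1 yields, pathwise,
\[
\int_0^T Y_{t-}\,dQ_t+\tfrac12[Y,Q]_T=\tfrac12\Bigl(\frac{Y_T^2}{\lambda_T}-\frac{y^2}{\lambda_0}\Bigr)+\int_0^T\frac{2\beta_t+\dot\gamma_t}{2\lambda_t}Y_t^2\,dt.
\]
Taking expectations gives the first equality, and comparing with \eqref{eq:defInnerProd} gives the second. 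The identity holds pathwise, so no martingale term has to be discarded; this is a convenient feature of the argument.

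\textbf{Step 3 (integrability).} The main obstacle is to make sure that every term is finite and that the expectations are well defined. Because $Q\in\cS^2$ and $\beta,\lambda$ are deterministic, with $\beta$ of finite variation and $\lambda$ bounded away from zero, the variation of constants formula
\[
Y_t=e^{-\int_0^t\beta}\Bigl(y+\int_0^t e^{\int_0^s\beta}\lambda_s\,dQ_s\Bigr),
\]
after an integration by parts to remove $dQ$, shows that $\sup_t|Y_t|\in L^2$. Consequently $\E\int_0^T Y_t^2\,dt<\infty$ and $Y_T\in L^2$, the right-hand side is integrable, and hence so is the left-hand side. I will also check that the definitions of $\dot\gamma$ and $\lambda_{0-}=\lambda_0$ make the boundary term at $0-$ come out exactly as $y^2/\lambda_0$.
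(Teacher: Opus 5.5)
Your proposal is correct and matches the paper's proof essentially line by line. Like the paper, you substitute $dQ$ from \eqref{eq:OWreversionInverse}, use $[Y,Q]_T=\int_0^T\lambda_t^{-1}\,d[Y,Y]_t$, apply It\^o's product rule to $Y^2/\lambda$ pathwise, and read off the terminal constraint from \eqref{eq:Qexplicit} at $t=T$; your Step~3 integrability check via the variation-of-constants formula is a useful addition that the paper leaves implicit.
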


\begin{proof}
Using \eqref{eq:OWreversionInverse}, the identity $Y_{t-}=Y_t$ $dt$-a.e., It\^o's formula, and integration by parts yields
\begin{align*}
  \int_0^T Y_{t-}\,dQ_t+\frac12[Y,Q]_T
  &= \int_0^T Y_{t-}\left(\frac{1}{\lambda_t}\,dY_t+\frac{\beta_t}{\lambda_t}Y_t\,dt\right)
     +\frac12\int_0^T \frac{1}{\lambda_t}\,d[Y,Y]_t\\
  &= \int_0^T \frac{1}{\lambda_t}\left(Y_{t-}\,dY_t+\frac12\,d[Y,Y]_t\right)
     +\int_0^T \frac{\beta_t}{\lambda_t}Y_t^2\,dt\\
  &= \frac12\int_0^T \frac{1}{\lambda_t}\,d(Y_t^2)
     +\int_0^T \frac{\beta_t}{\lambda_t}Y_t^2\,dt\\
  &= \frac12\left(\frac{Y_T^2}{\lambda_T}-\frac{y^2}{\lambda_0}
     -\int_0^T Y_t^2\,d\!\left(\frac{1}{\lambda_t}\right)\right)
     +\int_0^T \frac{\beta_t}{\lambda_t}Y_t^2\,dt\\
  &= \frac12\left(\frac{Y_T^2}{\lambda_T}-\frac{y^2}{\lambda_0}\right)
     +\int_0^T \frac{\dot\lambda_t}{2\lambda_t^2}Y_t^2\,dt
     +\int_0^T \frac{\beta_t}{\lambda_t}Y_t^2\,dt,
\end{align*}
which is \eqref{eq:costReformulation}. The terminal constraint is \eqref{eq:Qexplicit} at time $T$.
\end{proof}

Let $\cR$ be the progressive $\sigma$-field on $\Omega \times [0,T]$ and define the finite measures
\[
  \mu(dt) := \frac{2\beta_t+\dot\gamma_t}{2\lambda_t}\,dt + \frac{1}{2\lambda_T}\delta_T(dt),
  \qquad
  dR(\omega,t):=\mathbb{P}(d\omega)\,\mu(dt).
\]
Then
\[
  \cH:=L^2(\Omega\times[0,T],\cR,R)
\]
is the Hilbert space of all progressively measurable processes $Y=(Y_t)_{t \in [0,T]}$ with finite norm
\begin{equation} \label{eq:H-norm}
    \|Y\|_{\cH} := \sqrt{\langle Y,Y\rangle_{\cH}}   
\end{equation}
induced by the inner product defined in \eqref{eq:defInnerProd}. We also introduce the map $G:\cH\to L^2(\cF_T)$ defined as
\[
  G(Y):=\frac{Y_T}{\lambda_T}+\int_0^T \frac{\beta_t+\dot\gamma_t}{\lambda_t}Y_t\,dt,
  \qquad Y\in\cH.
\]

The next result shows that our control problem in~\eqref{eq:ow-objective} can be seen as a simple projection in the Hilbert space $\cH$ onto a suitable subspace, which is determined by the terminal constraint through the map $G$.

\begin{proposition}\label{pr:projection}
The map $G:\cH\to L^2(\cF_T)$ is linear, continuous, and
\[
  \cY:=\{Y\in\cH:\, G(Y)=\Xi_T+\tfrac{y}{\lambda_0}\}
\]
is a nonempty closed affine subspace of $\cH$. Consequently, the problem
\begin{equation} \label{eq:minimizer-projection}
  \inf_{Y\in\cY}\langle Y,Y\rangle_{\cH}
\end{equation}
admits a unique minimizer $Y^*\in\cY$. Writing
\[
  \cJ:=\ker G=\left\{J\in\cH:\,
  \frac{J_T}{\lambda_T}=-\int_0^T \frac{\beta_t+\dot\gamma_t}{\lambda_t}J_t\,dt\right\},
\]
this minimizer is characterized by the first-order condition
\begin{equation}\label{eq:FOC}
  \langle Y^*,J\rangle_{\cH}=0, \qquad J\in\cJ.
\end{equation}
\end{proposition}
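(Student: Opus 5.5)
The plan has four steps: continuity of $G$, nonemptiness of $\cY$, closedness of $\cY$, and then the standard Hilbert-space projection argument.

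\emph{Linearity and continuity of $G$.} Linearity is immediate from the definition. For continuity, set $c_0:=\inf_t \lambda_t>0$. By \cref{ass:beta-lambda}, the functions $\beta,\dot\gamma,1/\lambda$ are bounded and $2\beta+\dot\gamma>c$. By Cauchy--Schwarz in $t$,
\[
\E\Bigl|\int_0^T \tfrac{\beta_t+\dot\gamma_t}{\lambda_t}Y_t\,dt\Bigr|^2 \le T\,\sup_t\tfrac{(\beta_t+\dot\gamma_t)^2}{\lambda_t^2}\,\E\int_0^T Y_t^2\,dt .
\]
Using the lower bound $\frac{2\beta_t+\dot\gamma_t}{2\lambda_t}\ge \frac{c}{2\sup\lambda}$, the right-hand side is at most a constant times $\|Y\|_{\cH}^2$. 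Similarly, $\E[Y_T^2/\lambda_T^2]\le \frac{2}{\lambda_T}\|Y\|_{\cH}^2$. Hence $\|G(Y)\|_{L^2}\le C\|Y\|_{\cH}$.

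\emph{$\cY$ is nonempty.} This is the one step requiring a construction, and I expect it to be the main, though mild, obstacle. Because the atom of $\mu$ at $T$ lets us treat $Y_T$ as a free $\cF_T$-measurable coordinate, I would take $Y_t:=0$ for $t<T$ and $Y_T:=\lambda_T(\Xi_T+y/\lambda_0)$. This process is progressively measurable, since it is zero before $T$ and $\cF_T$-measurable at $T$. It lies in $\cH$ because $\Xi_T\in L^2$, and it satisfies $G(Y)=\Xi_T+y/\lambda_0$. One subtlety: if $\cY$ is meant to correspond to admissible strategies in $\cQ$, then \eqref{eq:Qexplicit} gives the semimartingale $Q$ with a terminal block trade. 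For the proposition itself, however, only membership in $\cH$ is needed.

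\emph{$\cY$ is closed and affine.} $\cY$ is the preimage of a single point under the continuous affine map $G$, so it is closed and affine. Its direction space is $\cJ=\ker G$, which is a closed linear subspace.

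\emph{Existence, uniqueness and the first-order condition.} By the Hilbert projection theorem, the nonempty closed convex set $\cY$ has a unique element of minimal norm, namely the projection of $0$ onto $\cY$. This gives the unique minimizer $Y^*$ of \eqref{eq:minimizer-projection}. For the characterization, note that $Y^*+\varepsilon J\in\cY$ for every $J\in\cJ$ and $\varepsilon\in\R$. Expanding $\|Y^*+\varepsilon J\|_{\cH}^2\ge\|Y^*\|_{\cH}^2$ gives $2\varepsilon\langle Y^*,J\rangle_{\cH}+\varepsilon^2\|J\|_{\cH}^2\ge0$ for all $\varepsilon$, which forces \eqref{eq:FOC}.

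Conversely, suppose $Y\in\cY$ satisfies \eqref{eq:FOC}. Any $Z\in\cY$ can be written as $Z=Y+J$ with $J\in\cJ$, and then
\[
\|Z\|_{\cH}^2=\|Y\|_{\cH}^2+\|J\|_{\cH}^2\ge\|Y\|_{\cH}^2,
\]
with equality only if $J=0$. So \eqref{eq:FOC} characterizes $Y^*$.
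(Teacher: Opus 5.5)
Your proof is correct and follows the paper's argument essentially step by step. You use the same witness $Y_t=\mathbf{1}_{\{T\}}(t)\lambda_T(\Xi_T+y/\lambda_0)$ for nonemptiness, closedness via the preimage of a point under $G$, and the Hilbert projection theorem. The differences are minor. In the continuity estimate, the paper applies a weighted Cauchy--Schwarz against $\frac{2\beta+\dot\gamma}{2\lambda}$ to get the constant $2\theta_T$, whereas you compare $dt$ with $\mu$ using $2\beta+\dot\gamma>c$ and the boundedness of $\lambda$. You also write out the variational derivation of the first-order condition, which the paper only cites.
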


\begin{proof}
Clearly, the map $G$ is linear. For $Y\in\cH$, Cauchy–Schwarz gives
\begin{align*}
  \E\left|\int_0^T \frac{\beta_t+\dot\gamma_t}{\lambda_t}Y_t\,dt\right|^2 = & \, \E\left|\int_0^T \frac{\beta_t+\dot\gamma_t}{\sqrt{\lambda_t} \sqrt{2\beta_t+\dot\gamma_t}} \frac{ \sqrt{2\beta_t+\dot\gamma_t}}{\sqrt{\lambda_t}} Y_t\,dt\right|^2 \\
  \le & \, 2\theta_T\,\E\left[\int_0^T \frac{2\beta_t+\dot\gamma_t}{2\lambda_t}Y_t^2\,dt\right] 
\le 2\theta_T \|Y\|_{\cH}^2.
\end{align*}
Moreover, 
\begin{equation*}
  \E\left[\left|\frac{Y_T}{\lambda_T}\right|^2\right] \le \frac{2}{\lambda_T}\,\|Y\|_{\cH}^2.
\end{equation*}
Hence, $\|G(Y)\|_{L^2(\cF_T)} \leq c \|Y\|_{\cH}$ for some constant $c > 0$, and $G$ is continuous. Therefore, $\cY=G^{-1}(\Xi_T+y/\lambda_0)$ is a closed affine subspace of $\cH$. It is nonempty because the element
\[
  Y_t^\circ:=\mathbf{1}_{\{T\}}(t)\lambda_T\left(\Xi_T+\frac{y}{\lambda_0}\right), \qquad 0 \leq t \leq T,
\]
belongs to $\cY$. Consequently, the projection theorem in Hilbert spaces yields the existence of a unique minimizer $Y^*\in\cY$, and the usual characterization of projections onto affine subspaces gives the condition in~\eqref{eq:FOC}.
\end{proof}

The next step is to derive an explicit formula for $Y^*$ from~\cref{pr:projection}.

\begin{proposition}\label{pr:minimizerH}
Suppose~\cref{ass:beta-lambda} is satisfied. Let $M$ be given by \eqref{eq:defMstart}--\eqref{eq:defM}. Then $M$ is a square-integrable c\`adl\`ag martingale, and the unique minimizer from \cref{pr:projection} is given by the square-integrable c\`adl\`ag semimartingale
\begin{equation} \label{eq:YstarL2}
    Y_t^*=\eta_t M_t, \qquad t\in[0,T), \qquad Y_T^*=M_T.
\end{equation}
\end{proposition}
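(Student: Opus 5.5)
The plan is to verify directly that the candidate $Y^*$ in \eqref{eq:YstarL2} lies in $\cY$ and satisfies the first-order condition \eqref{eq:FOC}. By \cref{pr:projection}, it is then the unique minimizer.

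\emph{Regularity.} Set $\varphi_s:=\lambda_T/(1+\lambda_T(\theta_T-\theta_s))$. Since $\theta$ is nondecreasing, $\theta_T-\theta_s\ge0$, so $0<\varphi_s\le\lambda_T$ is deterministic and bounded. Because $\Xi$ is a square-integrable c\`adl\`ag martingale, $M$ is a square-integrable c\`adl\`ag martingale and $\E[\sup_t M_t^2]<\infty$ by Doob. The function $\eta$ is c\`adl\`ag, of finite variation and bounded. Hence $Y^*$ is a square-integrable c\`adl\`ag semimartingale. The change of value at $T$ does not affect c\`adl\`ag paths on $[0,T)$, and at $T$ it is only a final jump. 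In particular, $Y^*\in\cH$ because $\mu$ is finite.

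\emph{First-order condition.} The key algebraic identity is $\frac{2\beta_t+\dot\gamma_t}{2\lambda_t}\eta_t=\frac{\beta_t+\dot\gamma_t}{2\lambda_t}$. With it, for $J\in\cJ$,
\[
\langle Y^*,J\rangle_{\cH}=\tfrac12\,\E\Bigl[\int_0^T\frac{\beta_t+\dot\gamma_t}{\lambda_t}M_tJ_t\,dt+\frac{M_TJ_T}{\lambda_T}\Bigr].
\]
Next I will use $M_t=\E[M_T\mid\cF_t]$ and the $\cF_t$-measurability of $J_t$ to replace $\E[M_tJ_t]$ by $\E[M_TJ_t]$ for a.e.\ $t$. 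This requires Fubini, justified by Cauchy--Schwarz together with $J\in\cH$ and the boundedness of $(\beta+\dot\gamma)/\lambda$ relative to $(2\beta+\dot\gamma)/\lambda$. The expression then becomes $\frac12\E[M_T\,G(J)]=0$, since $G(J)=0$. I expect this to be the main technical point: progressive measurability gives adaptedness, and the integrability must be checked carefully so that the exchange of $\E$ and $dt$ is legitimate.

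\emph{Constraint.} Since $\frac{\beta_t+\dot\gamma_t}{\lambda_t}\eta_t=\dot\theta_t$, we have $G(Y^*)=M_T/\lambda_T+\int_0^T\dot\theta_tM_t\,dt$. Insert $M_t=M_{0-}+\int_{[0,t]}\varphi_s\,d\Xi_s$, which includes the jump $\Xi_0-\Xi_{0-}$ at time zero. Then apply stochastic Fubini, which is legitimate because the integrand is deterministic and bounded, to get
\[
\int_0^T\dot\theta_tM_t\,dt=\theta_TM_{0-}+\int_{[0,T]}(\theta_T-\theta_s)\varphi_s\,d\Xi_s.
\]
Combining terms gives
\[
G(Y^*)=M_{0-}\Bigl(\tfrac1{\lambda_T}+\theta_T\Bigr)+\int_{[0,T]}\varphi_s\Bigl(\tfrac1{\lambda_T}+\theta_T-\theta_s\Bigr)d\Xi_s.
\]
By \eqref{eq:defMstart} and the definition of $\varphi$, this equals $\Xi_{0-}+y/\lambda_0+(\Xi_T-\Xi_{0-})=\Xi_T+y/\lambda_0$. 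Hence $Y^*\in\cY$, and together with \eqref{eq:FOC}, \cref{pr:projection} identifies $Y^*$ as the unique minimizer.
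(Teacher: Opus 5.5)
Your proposal is correct and follows the same route as the paper. Both verify directly that $Y^*\in\cY$ and that the first-order condition \eqref{eq:FOC} holds, using the martingale property of $M$ together with $G(J)=0$.

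The only difference is cosmetic. You check the constraint by applying stochastic Fubini to $\int_0^T\dot\theta_tM_t\,dt$ at $t=T$. The paper instead integrates $d(a_tM_t)=d\Xi_t-\dot\theta_tM_t\,dt$ with $a_t=1/\lambda_T+\theta_T-\theta_t$, which yields the identity \eqref{eq:barxiCond} for every $t$ rather than only at $T$.
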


\begin{proof}
First, observe that~\eqref{eq:defM} defines a square-integrable martingale on $[0,T]$. Indeed, 
the deterministic integrand
\[
  \frac{\lambda_T}{1+\lambda_T(\theta_T-\theta_t)}, \qquad 0 \leq t \leq T,
\]
is bounded since $0\le \theta_T-\theta_t\le \theta_T$ by the definition of $\theta$ in~\eqref{eq:eta-theta}. Moreover, since $\eta$ is deterministic, bounded, c\`adl\`ag, and of finite variation by~\cref{ass:beta-lambda}, the process $Y^*$ defined in~\eqref{eq:YstarL2} is c\`adl\`ag and a square-integrable semimartingale.

Next, we argue that $Y^*$ is an element of $\cY$. First, since
\[
  \frac{2\beta_t+\dot\gamma_t}{2\lambda_t}(Y^*_t)^2 = \frac{2\beta_t+\dot\gamma_t}{2\lambda_t}(\eta_t M_t)^2
  = \frac12\,\dot\theta_t M_t^2
  \quad\text{a.e.},
\]
we obtain that
\[
  \|Y^*\|_{\cH}^2
  = \frac12\,\E\left[\int_0^T \dot\theta_t M_t^2\,dt\right]
    +\frac{1}{2\lambda_T}\E[M_T^2]
  <\infty,
\]
because $\dot\theta$ is bounded by~\cref{ass:beta-lambda}, and hence $Y^* \in \cH$. 

Regarding the condition $G(Y^*) = \Xi_T + y/\lambda_0$, we introduce the function
\[
  a_t:=\frac{1+\lambda_T(\theta_T-\theta_t)}{\lambda_T}
  = \frac{1}{\lambda_T}+\theta_T-\theta_t, \qquad 0 \leq t \leq T.
\]
Clearly, $a$ is continuous and of finite variation with $da_t=-\dot\theta_t\,dt$, and \eqref{eq:defM} is equivalent to
\[
  a_t\,dM_t = d\Xi_t.
\]
Thus, integration by parts yields
\begin{equation} \label{eq:aM}
  d(a_tM_t)=a_t\,dM_t+M_t\,da_t=d\Xi_t-\dot\theta_t M_t\,dt.
\end{equation}
Due to the definition of $M_{0-}$ in~\eqref{eq:defMstart}, we have $a_0M_{0-}=\Xi_{0-}+y/\lambda_0$, and hence, by integrating~\eqref{eq:aM}, we obtain for every $t\in[0,T]$ the representation
\begin{equation}\label{eq:barxiCond}
  \Xi_t
  = \left(\frac{1}{\lambda_T}+\theta_T-\theta_t\right)M_t
    -\frac{y}{\lambda_0}
    +\int_0^t \dot\theta_s M_s\,ds.
\end{equation}
In particular, at $t=T$, it holds that
\begin{equation}\label{eq:terminalM}
  \Xi_T
  = \frac{M_T}{\lambda_T}-\frac{y}{\lambda_0}+\int_0^T \dot\theta_s M_s\,ds,
\end{equation}
which implies 
\begin{equation} \label{eq:terminal}
\begin{aligned} 
    G(Y^*) = & \, \frac{Y^*_T}{\lambda_T}+\int_0^T \frac{\beta_t+\dot\gamma_t}{\lambda_t}Y^*_t\,dt = \frac{M_T}{\lambda_T}+\int_0^T \frac{\beta_t+\dot\gamma_t}{\lambda_t} \eta_t M_t\,dt \\
    = & \, \frac{M_T}{\lambda_T}+\int_0^T \dot\theta_s M_s\,ds = \Xi_T + \frac{y}{\lambda_0}, 
\end{aligned}
\end{equation}
so $Y^*\in\cY$.

It remains to verify the first-order condition in~\eqref{eq:FOC}. To this end, let $J\in\cJ$. By the representation of $Y^*$ in~\eqref{eq:YstarL2}, we get
\begin{equation} \label{eq:FOC1}
  \langle Y^*,J\rangle_{\cH}
  = \frac12\,\E\left[\int_0^T \frac{\beta_t+\dot\gamma_t}{\lambda_t}M_tJ_t\,dt
     +\frac{M_TJ_T}{\lambda_T}\right].  
\end{equation}
Since $J\in\ker G$, it satisfies
\[
  \frac{J_T}{\lambda_T}
  = -\int_0^T \frac{\beta_t+\dot\gamma_t}{\lambda_t}J_t\,dt.
\]
Plugging this back into~\eqref{eq:FOC1} yields
\begin{align*}
  \langle Y^*,J\rangle_{\cH}
  &= \frac12\,\E\left[\int_0^T \frac{\beta_t+\dot\gamma_t}{\lambda_t}M_tJ_t\,dt
     - M_T\int_0^T \frac{\beta_t+\dot\gamma_t}{\lambda_t}J_t\,dt\right]\\
  &= \frac12\,\E\left[\int_0^T \frac{\beta_t+\dot\gamma_t}{\lambda_t}
     \bigl(M_t-\E[M_T\mid\cF_t]\bigr)J_t\,dt\right]
   = 0,
\end{align*}
because $M$ is a martingale. Thus \eqref{eq:FOC} holds, so \cref{pr:projection} identifies $Y^*$ in~\eqref{eq:YstarL2} as the unique minimizer of~\eqref{eq:minimizer-projection}.
\end{proof}

We are now ready to prove~\cref{th:OWsolution}.

\begin{proof}[Proof of \cref{th:OWsolution}]
By~\cref{pr:minimizerH}, the process $Y^* \in \cY$ in~\eqref{eq:YstarL2} represents the unique minimizer of~\eqref{eq:minimizer-projection}. 
Moreover, formula \eqref{eq:Qexplicit} defines the associated c\`adl\`ag semimartingale inventory process $Q^*$ given by
\begin{equation}\label{eq:optQ}
  Q_t^*
  = \frac{Y^*_t}{\lambda_t} - \frac{y}{\lambda_0} + \int_0^t \frac{\beta_s + \dot\gamma_s}{\lambda_s} Y^*_s \, ds =
  \begin{cases}
  \displaystyle
  \frac{\eta_tM_t}{\lambda_t}
  -\frac{y}{\lambda_0}
  +\int_0^t \dot\theta_sM_s\,ds,
  & 0\le t<T,\\[2.2ex]
  \displaystyle
  \frac{M_T}{\lambda_T}
  -\frac{y}{\lambda_0}
  +\int_0^T \dot\theta_sM_s\,ds,
  & t=T,
  \end{cases}
\end{equation}
with $Q_{0-}^*=0$. Observe that $Q^* \in \cS^2$ by~\cref{pr:minimizerH}. Furthermore, since $Y^*\in\cY$, the condition in~\eqref{eq:terminal} yields $Q_T^*=\Xi_T$. Thus $Q^*\in\cQ$. 

Let $Q\in\cQ$ be arbitrary and let $Y$ be its impact process. By \cref{le:reformulation},
\[
  \E\left[\int_0^T Y_{t-}\,dQ_t+\frac12[Y,Q]_T\right]
  = \langle Y,Y\rangle_{\cH}-\frac{y^2}{2\lambda_0}
  \ge \langle Y^*,Y^*\rangle_{\cH}-\frac{y^2}{2\lambda_0},
\]
because $Y^*$ minimizes $\langle Y,Y\rangle_{\cH}$ over $\cY$. Hence $Q^*$ is optimal.

To prove uniqueness, let $\widetilde Q\in\cQ$ be another optimizer and $\widetilde Y$ its impact process. Then $\widetilde Y$ is another minimizer in $\cY$, so $\widetilde Y=Y^*$ in $\cH$. In particular, $\widetilde Y_T=Y_T^*$ a.s.\ and $\widetilde Y_t=Y_t^*$ for $\mathbb{P}\otimes dt$-a.e.\ $(\omega,t)\in\Omega\times[0,T)$. Since both processes are c\`adl\`ag, they are indistinguishable. Then \eqref{eq:Qexplicit} implies that $\widetilde Q$ and $Q^*$ are also indistinguishable.

To justify the representation of the optimal inventory $Q^*$ in~\eqref{eq:optQfinal}, we first insert~\eqref{eq:defM} into~\eqref{eq:optQ} to obtain
\begin{align*}
    Q_t^* = & \, \frac{\eta_t}{\lambda_t} \frac{\lambda_T}{1+\lambda_T\theta_T}
  \left(\Xi_{0-}+\frac{y}{\lambda_{0}}\right) + \frac{\eta_t}{\lambda_t} \int_0^t
  \frac{\lambda_T}{1+\lambda_T(\theta_T-\theta_s)}\,d\Xi_s \\
   & \, -\frac{y}{\lambda_0} + \int_0^t \dot\theta_s \frac{\lambda_T}{1+\lambda_T\theta_T}
  \left(\Xi_{0-}+\frac{y}{\lambda_{0}}\right) ds + \int_0^t \dot\theta_s \left( \int_0^s
  \frac{\lambda_T}{1+\lambda_T(\theta_T-\theta_r)}\,d\Xi_r \right) \, ds \\
  = & \, \left(\Xi_{0-}+\frac{y}{\lambda_{0}}\right) \left(\frac{\eta_t}{\lambda_t} \frac{\lambda_T}{1+\lambda_T\theta_T} + \frac{\lambda_T}{1+\lambda_T\theta_T} \theta_t -1 \right) + \Xi_{0-} \\
  & + \frac{\eta_t}{\lambda_t} \int_0^t
  \frac{\lambda_T}{1+\lambda_T(\theta_T-\theta_s)}\,d\Xi_s + \int_0^t \left( \int_0^s
  \frac{\lambda_T}{1+\lambda_T(\theta_T-\theta_r)}\,d\Xi_r \right) \, d\theta_s, \qquad 0 \leq t < T.
\end{align*}
Performing integration by parts in the last integral (note that $\theta$ is absolutely continuous) yields
\begin{equation*}
    \int_0^t \left( \int_0^s
  \frac{\lambda_T}{1+\lambda_T(\theta_T-\theta_r)}\,d\Xi_r \right) \, d\theta_s = \theta_t \int_0^t
  \frac{\lambda_T}{1+\lambda_T(\theta_T-\theta_s)}\,d\Xi_s - \int_0^t \theta_s \frac{\lambda_T}{1+\lambda_T(\theta_T-\theta_s)}\,d\Xi_s.
\end{equation*}
Inserting this back into the previous equation results in~\eqref{eq:optQfinal}. 

For the initial and final block trades in~\eqref{eq:optQjumps_gen}, we note that
\begin{align*}
    \Delta Q_0^* = \frac{1}{\lambda_0} \Delta Y_0^* = \frac{1}{\lambda_0} (Y_0^* - Y^*_{0-}) = \frac{1}{\lambda_0} (\eta_0 M_0 - y)
\end{align*}
and
\begin{align*}
    \Delta Q_T^* = & \, \frac{1}{\lambda_T} \Delta Y_T^* = \frac{1}{\lambda_T} (Y_T^* - Y^*_{T-}) 
    =\frac{1}{\lambda_T} (M_T - \eta_{T-} M_{T-}) %
    =\frac{1}{\lambda_T} (M_T (1 - \eta_{T-}) + \eta_{T-} \Delta M_T).
\end{align*}
Finally, for the optimal cost in~\eqref{eq:optimal-cost}, we use~\cref{le:reformulation} and obtain
\begin{align*}
    \E\left[\int_0^T Y^*_{t-}\,dQ^*_t+\frac12[Y^*,Q^*]_T\right]
    = & \, \langle Y^*,Y^*\rangle_{\cH}-\frac{y^2}{2\lambda_0} \\
    = & \, \E\left[\int_0^T \frac{2\beta_t+\dot\gamma_t}{2\lambda_t}(Y^*_t)^2\,dt
    + \frac{1}{2\lambda_T}(Y^*_T)^2 \right] -\frac{y^2}{2\lambda_0} \\
    = & \, \frac12 \E\left[\int_0^T \dot \theta_t M_t^2\,dt \right]
    + \frac{1}{2\lambda_T} \E\left[ M_T^2 \right] - \frac{y^2}{2\lambda_0},
\end{align*}
which completes the proof.
\end{proof}

\section{Regularized execution and its approximation rate}\label{sec:ow-asymptotics}

As seen in~\cref{th:OWsolution}, the optimal strategy for the unregularized Obizhaeva–Wang model naturally has jumps. While block trades at the initial and terminal times arise even in the deterministic case, the terminal-target martingale $(\Xi_t)_{0 \leq t \leq T}$ in our setting typically also leads to non-zero quadratic variation during the execution interval $(0,T)$. A common approach to smooth the strategy is to use a regularized version of the model where an instantaneous quadratic cost penalizes fast trading and in particular imposes absolute continuity of the execution strategy. If we consider the instantaneous cost as an auxiliary regularizer to produce better-behaved strategies, this regularity comes at a price, namely, additional execution cost. %

While the standard approach would be to use the regularized optimal strategy as execution strategy, that optimizer is not explicitly available in our setting with stochastic terminal condition and time-varying coefficients. Below, we construct a \emph{simple, explicit, nearly optimal} strategy $Q^\eps$ which achieves a similar cost as the optimizer. Specifically, our main result in \cref{th:OW-rate} gives a non-asymptotic bound on the additional execution cost, which turns out to be of order $\sqrt{\eps}$ for both strategies. We further show in \cref{pr:OW-rate-sharp} that this order is sharp even in the benchmark deterministic case.

Let $J_0$ denote the objective of the unregularized problem in~\eqref{eq:ow-objective} from Section~\ref{sec:ow}, which can be rewritten as
\[
  J_0(Q):=\langle Y^Q,Y^Q\rangle_{\cH}-\frac{y^2}{2\lambda_0} 
\]
according to~\cref{le:reformulation}. We also introduce the corresponding value
\begin{equation} \label{eq:opt-unregularized}
  V(0):=\inf_{Q\in\cQ}J_0(Q)=J_0(Q^0)
\end{equation}
where \(Q^0:=Q^* \in \cQ\) denotes the unique optimizer from \cref{th:OWsolution} with price impact process \(Y^0:=Y^{Q^0}\). 

Next, we introduce the regularized problem. Let $\varepsilon > 0$ and let \(\cQ_{\mathrm{ac}} \subset \cQ\) denote the subclass of absolutely continuous strategies \(Q\) of the form
\[
  Q_t=\int_0^t u_s\,ds,\qquad 0\le t\le T,
\]
with progressively measurable \(u\) satisfying \(\E\int_0^T u_t^2\,dt<\infty\),
and \(Q_T=\Xi_T \in L^2(\cF_T) \). For any \(Q\in\cQ_{\mathrm{ac}}\) with impact process $Y^Q$ given by~\eqref{eq:OWreversion}, we introduce the objective $J_\eps$ by adding an instantaneous quadratic cost,
\begin{equation} \label{eq:objective-regularized}
  J_\eps(Q):=J_0(Q)+\eps\,\E\int_0^T \dot Q_t^2\,dt.
\end{equation}
We further denote the corresponding value by
\begin{equation} \label{eq:opt-regularized}
  V(\eps) := \inf_{Q\in\cQ_{\mathrm{ac}}}J_\eps(Q).
\end{equation}

Throughout this section, \cref{ass:beta-lambda} is in force. Moreover, for the square-integrable terminal-target martingale $\Xi_t =\E[\Xi_T\mid\cF_t]$, $t\in[0,T]$, we assume that the reachability condition in~\eqref{eq:constrained-reachability} from~\cref{sec:constrained} is satisfied and denote its value by
\begin{equation}\label{eq:XiReachability}
  \Lambda_{\Xi}:=\int_0^T \frac{d\E[\Xi_t^2]}{T-t}<\infty.
\end{equation}

\begin{remark}
\label{rem:regularized-optimizer-existence}
Under~\eqref{eq:XiReachability}, for every \(\eps>0\), the infimum
in~\eqref{eq:opt-regularized} is attained by a unique strategy
\(Q^{*,\eps}\in\cQ_{\mathrm{ac}}\). Indeed, identifying an absolutely
continuous strategy with its trading rate, the feasible set is a nonempty
closed affine subset of \(L^2(\Omega\times[0,T],\cR,\mathbb P\otimes dt)\), while
\(J_\eps\) is weakly lower semicontinuous, coercive, and strictly convex.
\end{remark}

We equip square-integrable c\`adl\`ag processes on $[0,T]$ with the norm
\begin{equation}\label{eq:norm}
  \|X\|_*^2:=\E\left[\int_0^T X_t^2\,dt+X_T^2\right]
\end{equation}
and set
\begin{equation}\label{eq:defCH}
  C_{\cH}^2
  :=\frac12\max\left\{
    \left\|\frac{2\beta+\dot\gamma}{\lambda}\right\|_{L^\infty(0,T)},
    \frac{1}{\lambda_T}
  \right\},
\end{equation}
so that
\begin{equation}\label{eq:CHbound}
  \|X\|_{\cH}\le C_{\cH}\|X\|_*
\end{equation}
for every square-integrable c\`adl\`ag process \(X\), where we recall that $\|\cdot\|_{\cH}$ is the norm in~\eqref{eq:H-norm}. 

We start by recording the following simple Lipschitz property.  

\begin{lemma}\label{le:Lipschitz}
Let \(Q,\widetilde Q\in\cQ\), and let \(Y^Q,Y^{\widetilde Q}\) be the associated
impact processes following~\eqref{eq:OWreversion}. Then
\begin{equation}\label{eq:impact-Lipschitz}
  \|Y^Q-Y^{\widetilde Q}\|_*
  \le L_Y \|Q-\widetilde Q\|_*,
\end{equation}
where
\begin{equation} \label{eq:defLY}
  L_Y:=\|\lambda\|_{L^\infty(0,T)}
       \sqrt{2+T(T+2)\|\beta+\dot\gamma\|_{L^\infty(0,T)}^2}.
\end{equation}
In particular, the map \(Q\mapsto Y^Q\) is affine and Lipschitz on
\((\cQ,\|\cdot\|_*)\).
\end{lemma}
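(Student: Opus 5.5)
Both impact processes start from the same initial value $Y_{0-}=y$, and \eqref{eq:OWreversion} is linear in $(Y,Q)$. So the difference $D:=Y^Q-Y^{\widetilde Q}$ solves
\[
dD_t=-\beta_tD_t\,dt+\lambda_t\,d\Delta_t,\qquad D_{0-}=0,\qquad \Delta:=Q-\widetilde Q,\quad \Delta_{0-}=0 .
\]
Affineness of $Q\mapsto Y^Q$ is immediate from this linearity, so the work is the Lipschitz bound. Rather than solve this SDE by variation of constants, which would produce a $dQ$-integral that is hard to control in $L^2$, I will invert the representation \eqref{eq:Qexplicit}. Applied to $D$ and $\Delta$, where the $y/\lambda_0$ terms cancel, it gives
\[
\Delta_t=\frac{D_t}{\lambda_t}+\int_0^t \frac{\beta_s+\dot\gamma_s}{\lambda_s}D_s\,ds .
\]
Setting $Z_t:=D_t/\lambda_t$ and $k_s:=\beta_s+\dot\gamma_s$, this is the pathwise linear Volterra equation
\[
Z_t=\Delta_t-\int_0^t k_sZ_s\,ds,
\]
whose explicit solution is
\[
Z_t=\Delta_t-\int_0^t e^{-\int_s^t k_r\,dr}\,k_s\Delta_s\,ds .
\]
This expresses $D=\lambda Z$ as a bounded operator applied to $\Delta$, pathwise.

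I expect the main obstacle to be the constant, since \eqref{eq:defLY} contains no exponential factor of the form $e^{T\|k\|_\infty}$. The exponential solution above does not obviously reproduce the stated $L_Y$. I therefore plan to avoid Gronwall and instead estimate directly from $Z_t=\Delta_t-\int_0^t k_sZ_s\,ds$, using the $\|\cdot\|_{\cH}$-type structure. The stated form $2+T(T+2)\|k\|^2$ suggests the following route.

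First, I use $(a+b)^2\le 2a^2+2b^2$ at each time $t$ together with Cauchy--Schwarz on $\int_0^t k_sZ_s\,ds$. This yields terms $2\Delta_t^2$ and $2T\|k\|^2\int_0^T Z_s^2\,ds$, integrated over $[0,T]$ and also evaluated at $t=T$.

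Second, to bound $\int_0^T Z^2$ in terms of $\Delta$ without an exponential, I will try the energy identity obtained by multiplying the ODE for $Z-\Delta$ by $Z$. If that fails, I will accept the Volterra solution and bound $\sup_{s\le t}e^{-\int_s^t k}$ using $k$ in place of $2\beta+\dot\gamma>0$. This is the step I would verify most carefully, and I would be prepared to end with a possibly different (but still explicit) constant.

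Finally, I assemble the bounds. Summing the integrated and terminal contributions, multiplying by $\|\lambda\|_\infty^2$ to pass from $Z$ back to $D$, and taking expectations gives $\|D\|_*^2\le L_Y^2\|\Delta\|_*^2$. The terminal term $D_T$ is handled by the same representation at $t=T$, which is where the $X_T^2$ component of $\|\cdot\|_*$ enters.
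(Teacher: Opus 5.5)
Your proposal has a genuine gap: it never reaches the constant $L_Y$ in the statement. Your Volterra formula is correct, and it is in fact already the paper's formula. Since $\dot\gamma=\dot\lambda/\lambda$, one has $e^{-\int_s^t k_r\,dr}=e^{-\int_s^t\beta_r\,dr}\,\lambda_s/\lambda_t$, so
\[
D_t=\lambda_tZ_t=\lambda_t\Delta_t-\int_0^t e^{-\int_s^t\beta_r\,dr}\,\lambda_s(\beta_s+\dot\gamma_s)\Delta_s\,ds .
\]
The paper gets exactly this by variation of constants followed by integration by parts against the continuous finite-variation factor $e^{\int_0^\cdot\beta}\lambda$. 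That step makes the $d\Delta$-integral you wanted to avoid disappear.

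You do not make this observation, and the routes you propose instead do not give the stated $L_Y$:
\begin{itemize}
\item Your first step applies Cauchy--Schwarz to $\int_0^t k_sZ_s\,ds$. It therefore bounds $\int_0^T Z^2$ by $2\int_0^T\Delta^2+2T^2\|k\|_{L^\infty(0,T)}^2\int_0^T Z^2$. This is circular and closes only via Gronwall (an exponential factor) or a smallness condition.
\item Your fallback, $e^{-\int_s^t k_r\,dr}\le1$, needs $k=\beta+\dot\gamma\ge0$. That is not implied by \cref{ass:beta-lambda}(iii). For example, $\beta\equiv1$ and $\lambda_t=e^{-3t/2}$ give $2\beta+\dot\gamma=1/2>0$ but $k\equiv-1/2$, hence $e^{-\int_s^t k}=e^{(t-s)/2}$.
\item The energy identity has no sign information to use, since $k$ can be negative; the coercive combination in this model is $2\beta+\dot\gamma$, not $k$.
\item Your assembly step bounds $Z$ first and then multiplies by $\|\lambda\|_{L^\infty(0,T)}$. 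This throws away the cancellation $\lambda_t\cdot\lambda_s/\lambda_t=\lambda_s$, which is what removes the $\dot\gamma$-part of the exponent.
\end{itemize}

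The missing step is to keep $\lambda_t$ attached to the kernel. Using only $\beta>0$, this gives
\[
|D_t|\le\|\lambda\|_{L^\infty(0,T)}|\Delta_t|+\|\lambda\|_{L^\infty(0,T)}\|\beta+\dot\gamma\|_{L^\infty(0,T)}\int_0^t|\Delta_s|\,ds .
\]
Then $(a+b)^2\le2a^2+2b^2$ together with $\int_0^T\bigl(\int_0^t|\Delta_s|\,ds\bigr)^2dt\le\frac{T^2}{2}\int_0^T\Delta_s^2\,ds$ handles the integrated part. At $t=T$, $\bigl(\int_0^T|\Delta_s|\,ds\bigr)^2\le T\int_0^T\Delta_s^2\,ds$ handles the terminal part. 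Summing gives $\|D\|_*^2\le\|\lambda\|_{L^\infty(0,T)}^2\bigl(2+T(T+2)\|\beta+\dot\gamma\|_{L^\infty(0,T)}^2\bigr)\|\Delta\|_*^2$.

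As written, your plan would at best give a Lipschitz bound with a different, exponential-type constant. That proves the qualitative ``in particular'' clause but not the lemma as stated.
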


\begin{proof}
Let
\[
  R_t:=Q_t-\widetilde Q_t,
  \qquad
  Z_t:=Y^Q_t-Y^{\widetilde Q}_t,
  \qquad
  B_t:=\int_0^t \beta_s\,ds, \qquad 0 \leq t \leq T,
\]
and introduce the constants
\[
  L_\lambda:=\|\lambda\|_{L^\infty(0,T)},
  \qquad
  L_b:=\|\beta+\dot\gamma\|_{L^\infty(0,T)}.
\]
Then \(R_{0-}=0\), \(Z_{0-}=0\), and \(Z = (Z_t)_{0 \leq t \leq T}\) satisfies
\[
  dZ_t=-\beta_t Z_t\,dt+\lambda_t\,dR_t, \qquad 0 \leq t \leq T,
\]
with explicit solution
\[
  Z_t=e^{-B_t}\int_0^t e^{B_s}\lambda_s\,dR_s \qquad 0 \leq t \leq T.
\]
Since \(e^B\lambda\) is continuous and of finite variation, integration by parts
yields
\[
  Z_t
  = \lambda_t R_t
    -\int_0^t e^{-(B_t-B_s)}\lambda_s(\beta_s+\dot\gamma_s)R_s\,ds.
\]
As \(\beta\ge0\), the exponential factor is bounded by one, and we get the upper bound
\[
  |Z_t|^2
  \le
  2L_\lambda^2 |R_t|^2
  +2L_\lambda^2L_b^2\left(\int_0^t |R_s|\,ds\right)^2.
\]
Next, using Cauchy–Schwarz and Fubini yields
\begin{align*}
  \int_0^T \left(\int_0^t |R_s|\,ds\right)^2dt
  &\le
  \int_0^T t\int_0^t R_s^2\,ds\,dt \\
  &=
  \frac12\int_0^T (T^2-s^2)R_s^2\,ds
  \le
  \frac{T^2}{2}\int_0^T R_s^2\,ds.
\end{align*}
Therefore
\[
  \E\int_0^T Z_t^2\,dt
  \le
  L_\lambda^2\bigl(2+T^2L_b^2\bigr)
  \E\int_0^T R_t^2\,dt.
\]
Likewise,
\[
  |Z_T|^2
  \le
  2L_\lambda^2|R_T|^2
  +2L_\lambda^2L_b^2T\int_0^T R_s^2\,ds.
\]
Adding the two bounds yields 
\eqref{eq:impact-Lipschitz}.
\end{proof}

The next lemma shows that the difference $V(\varepsilon) - V(0)$ can be controlled by the value of the constrained tracking problem of the optimal unregularized inventory $Q^0 \in \cQ$ by an absolutely continuous strategy $Q \in \cQ_{\mathrm{ac}}$ satisfying $Q_T = \Xi_T$.

\begin{lemma}\label{le:rate}
Let
\begin{equation}\label{eq:defC0}
  C_J := C_{\cH}^2L_Y^2.
\end{equation}
Then, for every \(Q\in\cQ\),
\begin{equation}\label{eq:J0-quadratic-growth}
  0\le J_0(Q)-J_0(Q^0)
  = \|Y^Q-Y^0\|_{\cH}^2
  \le C_J \,\E\int_0^T (Q_t-Q_t^0)^2\,dt.
\end{equation}
Consequently, for every \(Q\in\cQ_{\mathrm{ac}}\) and every \(\eps>0\),
\begin{equation}\label{eq:Veps-upper-general}
  0\le V(\eps)-V(0)
  \le
  C_J \,\E\int_0^T (Q_t-Q_t^0)^2\,dt
  +
  \eps\,\E\int_0^T \dot Q_t^2\,dt.
\end{equation}
\end{lemma}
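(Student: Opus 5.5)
The identity in \eqref{eq:J0-quadratic-growth} will come from the Hilbert-space structure of \cref{pr:projection}.

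For $Q\in\cQ$, the impact process $Y^Q$ lies in $\cY$, because the terminal constraint is equivalent to \eqref{eq:terminalConstraintNew}, i.e.\ $G(Y^Q)=\Xi_T+y/\lambda_0$. Likewise $Y^0=Y^*\in\cY$. The difference $J:=Y^Q-Y^0$ therefore lies in $\cJ=\ker G$. Expanding $\langle Y^0+J,Y^0+J\rangle_{\cH}$ and using the first-order condition \eqref{eq:FOC}, $\langle Y^0,J\rangle_{\cH}=0$, gives
\[
\langle Y^Q,Y^Q\rangle_{\cH}=\langle Y^0,Y^0\rangle_{\cH}+\|Y^Q-Y^0\|_{\cH}^2 .
\]
Subtracting $y^2/(2\lambda_0)$ on both sides and invoking \cref{le:reformulation} yields $J_0(Q)-J_0(Q^0)=\|Y^Q-Y^0\|_{\cH}^2\ge0$.

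For the upper bound I will chain \eqref{eq:CHbound} with \cref{le:Lipschitz}:
\[
\|Y^Q-Y^0\|_{\cH}^2\le C_{\cH}^2\|Y^Q-Y^0\|_*^2\le C_{\cH}^2L_Y^2\|Q-Q^0\|_*^2 .
\]
The norm $\|\cdot\|_*$ contains the terminal term $\E[(Q_T-Q^0_T)^2]$. This term vanishes because both strategies lie in $\cQ$, so $Q_T=Q^0_T=\Xi_T$ a.s. Hence $\|Q-Q^0\|_*^2=\E\int_0^T(Q_t-Q^0_t)^2\,dt$, and the constant is exactly $C_J$. This uses that both processes are square-integrable càdlàg, which holds since $\cQ\subset\cS^2$, so \eqref{eq:CHbound} applies.

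For \eqref{eq:Veps-upper-general}, the lower bound $V(\eps)\ge V(0)$ holds because $\cQ_{\mathrm{ac}}\subset\cQ$ and $J_\eps\ge J_0$. Hence $V(\eps)\ge\inf_{\cQ}J_0=V(0)$. For the upper bound, take any $Q\in\cQ_{\mathrm{ac}}$. Then $V(\eps)\le J_\eps(Q)=J_0(Q)+\eps\,\E\int_0^T\dot Q_t^2\,dt$. Subtracting $V(0)=J_0(Q^0)$ and applying \eqref{eq:J0-quadratic-growth} gives the claim.

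The only delicate point is the vanishing of the terminal term in $\|\cdot\|_*$. It relies on the terminal constraint being imposed on both strategies. Another point to check is that absolutely continuous strategies are genuinely in $\cS^2$: $\E\sup_t|Q_t|^2\le T\,\E\int_0^T u_t^2\,dt$, so they are. Everything else is routine.
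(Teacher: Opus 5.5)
Your proof is correct and follows the paper's argument essentially step by step. You place $Y^Q-Y^0$ in $\cJ$ and use the first-order condition \eqref{eq:FOC} to get the Pythagorean identity, then chain \eqref{eq:CHbound} with \cref{le:Lipschitz}, where the terminal term vanishes because $Q_T=Q^0_T$, and finally use $V(\eps)\le J_\eps(Q)$. Your explicit justification of $V(\eps)\ge V(0)$ via $\cQ_{\mathrm{ac}}\subset\cQ$ and $J_\eps\ge J_0$ fills in a step the paper leaves implicit.
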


\begin{proof}
Let \(Q\in\cQ\). Since \(Q_T=\Xi_T=Q_T^0\), the constraint identity
\eqref{eq:terminalConstraintNew} applied to \(Y^Q\) and \(Y^0\) shows that
\[
  G(Y^Q-Y^0)=0,
\]
that is, \(Y^Q-Y^0\in\cJ\). Therefore, the first-order condition
\eqref{eq:FOC} gives
\[
  \langle Y^0,Y^Q-Y^0\rangle_{\cH}=0.
\]
It follows that
\[
  J_0(Q)-J_0(Q^0)
  =
  \|Y^Q\|_{\cH}^2-\|Y^0\|_{\cH}^2
  =
  \|Y^Q-Y^0\|_{\cH}^2.
\]
Moreover, using \eqref{eq:CHbound}, \cref{le:Lipschitz}, and again
\(Q_T=Q_T^0\), we obtain
\[
  \|Y^Q-Y^0\|_{\cH}^2
  \le
  C_{\cH}^2\|Y^Q-Y^0\|_*^2
  \le
  C_J\|Q-Q^0\|_*^2
  =
  C_J\,\E\int_0^T (Q_t-Q_t^0)^2\,dt,
\]
which proves \eqref{eq:J0-quadratic-growth} for every \(Q\in\cQ\).

Finally, if \(Q\in\cQ_{\mathrm{ac}}\), then
\[
  V(\eps)-V(0)
  \le
  J_\eps(Q)-J_0(Q^0)
  =
  J_0(Q)-J_0(Q^0)
  +
  \eps\,\E\int_0^T \dot Q_t^2\,dt.
\]
Combining this with \eqref{eq:J0-quadratic-growth} yields
\eqref{eq:Veps-upper-general}.
\end{proof}

We now collect some properties of the unregularized optimal strategy $Q^0$ from \cref{th:OWsolution} to verify that it fits our framework developed in~\cref{sec:constrained}. Recall the definition of the processes $\eta = (\eta_t)_{0 \leq t \leq T}$ and $\theta = (\theta_t)_{0 \leq t \leq T}$ in~\eqref{eq:eta-theta}, as well as the $L^2$ time-translation modulus and associated seminorm from~\eqref{eq:omega} and~\eqref{eq:seminorm}.  

\begin{lemma}\label{le:q0-estimates}
Define
\[
  \alpha_t:=\frac{\eta_t}{\lambda_t},
  \qquad t\in[0,T).
\]
Then \(\alpha\) has finite variation on \([0,T)\), and we set
\[
  A_\alpha:=\|\alpha\|_{L^\infty(0,T)},
  \qquad
  V_\alpha:=|\alpha|_{T-}, %
  \qquad
  B_{\theta}:=\|\dot\theta\|_{L^\infty(0,T)},
  \qquad
  m_2:=\E[M_T^2],
\]
where $|\alpha|_{T-}$ denotes the total variation of $\alpha$ on $[0,T)$. Moreover, defining the constant
\[
  C_{0}:=2A_\alpha^2m_2+\frac{2y^2}{\lambda_0^2},
\]
we have
\begin{equation}\label{eq:q0-initial-bound}
  \E[(Q_0^0)^2]\le C_{0}.
\end{equation}
The $L^2$ time-translation modulus of $Q^0$ satisfies
\begin{equation}\label{eq:q0-omega-bound}
  \omega_{Q^0}(h)
  \le
  3m_2\bigl(A_\alpha^2+4V_\alpha^2+T^2B_\theta^2\bigr)h,
  \qquad 0<h\le T,
\end{equation}
and 
\begin{equation}\label{eq:q0-star-bound}
  [Q^0]^2_{\rm{tr}} \le 3m_2\bigl(A_\alpha^2+4V_\alpha^2+T^2B_\theta^2\bigr).
\end{equation}
Finally,
\begin{equation}\label{eq:q0-K0-bound}
  K_0:=\sup_{t\in[0,T)} \E|Q_t^0-\Xi_t|^2
  \le
  6m_2\bigl(A_\alpha^2+T^2B_\theta^2\bigr)
  +\frac{6y^2}{\lambda_0^2}
  +2\E[\Xi_T^2].
\end{equation}
\end{lemma}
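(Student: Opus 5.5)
The plan is to work throughout with the explicit representation \eqref{eq:optQ} of the optimizer. For \(0\le t<T\) it reads
\[
  Q^0_t=\alpha_tM_t-\frac{y}{\lambda_0}+\int_0^t\dot\theta_sM_s\,ds .
\]
I combine it with three facts about the square-integrable martingale \(M\) from \cref{pr:minimizerH}: \(\E[M_t^2]\le m_2\) for all \(t\), the map \(t\mapsto\E[M_t^2]\) is nondecreasing, and \(\E[(M_t-M_s)^2]=\E[M_t^2]-\E[M_s^2]\) for \(s\le t\). The finite variation of \(\alpha\) on \([0,T)\) follows from \cref{ass:beta-lambda}: \(\eta\) is càdlàg of finite variation, and \(1/\lambda\) is absolutely continuous with bounded derivative, being bounded away from zero. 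The product of two bounded finite-variation functions has finite variation.

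\emph{Initial bound.} At \(t=0\) the integral term vanishes, so \(Q^0_0=\alpha_0M_0-y/\lambda_0\). Using \((a+b)^2\le 2a^2+2b^2\) and \(\E[M_0^2]\le m_2\) gives \eqref{eq:q0-initial-bound}.

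\emph{Modulus.} Fix \(0<h\le T\) and set \(s:=(t-h)^+\). Since \(Q^0_T\) differs from the formula above only at the single point \(t=T\), which is Lebesgue-null, I may use the representation for all \(t\) in \(\omega_{Q^0}(h)\). I decompose
\[
  Q^0_t-Q^0_s=\alpha_t(M_t-M_s)+(\alpha_t-\alpha_s)M_s+\int_s^t\dot\theta_rM_r\,dr
\]
and apply \((a+b+c)^2\le 3(a^2+b^2+c^2)\). The three resulting terms are handled as follows.

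\begin{enumerate}
\item For the martingale increment, \(\E\int_0^T\alpha_t^2(M_t-M_s)^2\,dt\le A_\alpha^2\int_0^T(\E[M_t^2]-\E[M_s^2])\,dt\). The same telescoping as in \eqref{eq:martingale-increment} bounds the right-hand side by \(A_\alpha^2\,h\,m_2\).
\item For the variation term, \(\E[M_s^2]\le m_2\). Then \(\int_0^T|\alpha_t-\alpha_s|^2\,dt\le V_\alpha\int_0^T|\alpha_t-\alpha_s|\,dt\le hV_\alpha^2\), exactly as in \eqref{eq:finite-variation-increment}. The factor \(4\) leaves slack to absorb any boundary effect at \(t=0\) or near \(T\), where the variation is taken only on \([0,T)\).
\item For the drift term, Cauchy--Schwarz gives \(\bigl(\int_s^t\dot\theta_rM_r\,dr\bigr)^2\le hB_\theta^2\int_s^tM_r^2\,dr\). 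Its expectation is at most \(h^2B_\theta^2m_2\), and integrating over \([0,T]\) gives \(Th^2B_\theta^2m_2\le T^2hB_\theta^2m_2\), using \(h\le T\).
\end{enumerate}

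Collecting the three terms yields \eqref{eq:q0-omega-bound}. For \(h\ge T\) we have \(\omega_{Q^0}(h)=\omega_{Q^0}(T)\le CT\le Ch\), with \(C\) the constant in \eqref{eq:q0-omega-bound}, so \(\omega_{Q^0}(h)/h\le C\) for every \(h>0\). This gives \eqref{eq:q0-star-bound}.

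\emph{Bound on \(K_0\).} For \(t<T\) I bound crudely, \(\E|Q^0_t-\Xi_t|^2\le 2\E[(Q^0_t)^2]+2\E[\Xi_t^2]\). Jensen gives \(\E[\Xi_t^2]\le\E[\Xi_T^2]\). Applying the three-term inequality to the representation of \(Q^0_t\), together with Cauchy--Schwarz on the integral, gives \(\E[(Q^0_t)^2]\le 3\bigl(A_\alpha^2m_2+y^2/\lambda_0^2+T^2B_\theta^2m_2\bigr)\). Combining the two estimates gives \eqref{eq:q0-K0-bound}. An alternative route is the exact identity \(Q^0_t-\Xi_t=(\alpha_t-1/\lambda_T-\theta_T+\theta_t)M_t\) from \eqref{eq:barxiCond}, but it is not needed.

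The only delicate step is the variation term of the modulus, namely handling \(\alpha\) near \(t=0\) and \(t=T\) so that the relevant increments are controlled by \(V_\alpha\); everything else is routine.
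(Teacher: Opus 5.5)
Your proof is correct and follows the paper's argument essentially step by step. It uses the same representation of $Q^0$, the same three-term decomposition of $Q^0_t-Q^0_{(t-h)^+}$ with the same martingale and drift estimates, and the same splitting $|Q^0_t-\Xi_t|^2\le 2|Q^0_t|^2+2|\Xi_t|^2$ for $K_0$.

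The only difference is in the variation term. The paper bounds $M_s^2$ by $\sup_{u<T}M_u^2$ and applies Doob's inequality $\E[\sup_{u<T}M_u^2]\le 4m_2$; that is where the factor $4$ comes from. You instead use that $\alpha$ is deterministic together with $\E[M_s^2]\le m_2$, which is simpler and gives the constant $1$ in place of $4$. There is no boundary effect at $t=0$ or near $T$, so the slack you set aside for it is simply unused.
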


\begin{proof}
By~\cref{ass:beta-lambda}, \(\eta\) has finite variation on \([0,T)\) and \(\lambda^{-1}\) is
absolutely continuous. Hence, \(\alpha=\eta/\lambda\) has finite variation on
\([0,T)\) and we have $A_\alpha < \infty$, as well as $V_\alpha < \infty$. Moreover, $B_\theta < \infty$, again by~\cref{ass:beta-lambda}, and $m_2 < \infty$ by~\cref{pr:minimizerH}.

Next, recall the representation of \(Q^0\) from~\eqref{eq:optQ} given by
\begin{equation} \label{eq:Q0}
  Q_t^0 = \alpha_t M_t - \frac{y}{\lambda_0} + \int_0^t \dot\theta_s M_s\,ds, \qquad 0 \leq t < T.
\end{equation}
At \(t=0\), we have
\[
  Q_0^0=\alpha_0M_0-\frac{y}{\lambda_0},
\]
and therefore
\[
  \E[(Q_0^0)^2]
  \le
  2A_\alpha^2\E[M_0^2]+\frac{2y^2}{\lambda_0^2}
  \le
  2A_\alpha^2m_2+\frac{2y^2}{\lambda_0^2},
\]
which proves \eqref{eq:q0-initial-bound}.

Fix \(0<h\le T\) and write \(s:=(t-h)^+\). For \(t<T\), using~\eqref{eq:Q0} gives
\[
  Q_t^0-Q_s^0
  =
  \alpha_t(M_t-M_s)+(\alpha_t-\alpha_s)M_s+\int_s^t \dot\theta_r M_r\,dr.
\]
Hence, the estimate \((a+b+c)^2\le 3(a^2+b^2+c^2)\) yields
\begin{align*}
  \omega_{Q^0}(h) ={} & \E\int_0^T \left|Q^0_t-Q^0_{s}\right|^2\,dt
  \\ \leq{} &
  3A_\alpha^2\,\E\int_0^T |M_t-M_s|^2\,dt +3\,\E\!\left[\sup_{u\in[0,T)} M_u^2\right]
    \int_0^T |\alpha_t-\alpha_s|^2\,dt \\
  &+3B_\theta^2\,\E\int_0^T \left(\int_s^t |M_r|\,dr\right)^2dt.
\end{align*}
For the martingale term, due to the computations in~\eqref{eq:martingale-increment}, we have
\[
  \E\int_0^T |M_t-M_s|^2\,dt
  \le h\,\E[\langle M\rangle_T]
  \le h\,m_2.
\]
For the finite variation term, performing the same computations as in~\eqref{eq:finite-variation-increment} gives
\begin{align*}
  \int_0^T |\alpha_t-\alpha_s|^2\,dt \le hV_\alpha^2.
\end{align*}
Moreover, using Doob's inequality, we have
\[
  \E\!\left[\sup_{u\in[0,T)}M_u^2\right]\le 4m_2.
\]
Finally, Cauchy–Schwarz and Fubini imply, again similar to~\eqref{eq:finite-variation-increment}, the bound
\begin{align*}
  \int_0^T \left(\int_s^t |M_r|\,dr\right)^2dt
  &\le
  h\int_0^T\int_s^t M_r^2\,dr\,dt \le h^2\int_0^T M_r^2\,dr.
\end{align*}
Taking expectations and using \(h\le T\) as well as \(\E[M_r^2]\le m_2\), we get
\[
  \E\int_0^T \left(\int_s^t |M_r|\,dr\right)^2dt
  \le
  T^2hm_2.
\]
Combining the three estimates proves \eqref{eq:q0-omega-bound}. Since
\(\omega_{Q^0}(h)=\omega_{Q^0}(T)\) for every \(h\ge T\),
\eqref{eq:q0-star-bound} follows immediately.

For the claim in~\eqref{eq:q0-K0-bound}, using once more the representation in~\eqref{eq:Q0} as well as \((a+b+c)^2\le 3(a^2+b^2+c^2)\) gives
\[
  |Q_t^0|^2
  \le
  3A_\alpha^2M_t^2+\frac{3y^2}{\lambda_0^2}
  +3B_\theta^2T\int_0^T M_s^2\,ds,
  \qquad t<T.
\]
Taking expectations and using \(\E[M_t^2]\le m_2\) yields
\[
  \sup_{t\in[0,T)}\E|Q_t^0|^2
  \le
  3m_2\bigl(A_\alpha^2+T^2B_\theta^2\bigr)+\frac{3y^2}{\lambda_0^2}.
\]
Lastly, since \(\sup_{t\in[0,T]}\E[\Xi_t^2]\le \E[\Xi_T^2]\), we conclude that
\[
  K_0
  \le
  2\sup_{t\in[0,T)}\E|Q_t^0|^2
  +2\sup_{t\in[0,T]}\E[\Xi_t^2],
\]
which is \eqref{eq:q0-K0-bound}.
\end{proof}

The next result exhibits an absolutely continuous execution strategy $Q^\eps$ which is based on the unregularized optimizer $Q^0$, whose explicit form was derived in \cref{th:OWsolution}. Following the construction in the proof of \cref{thm:constrained-master}, $Q^\eps$
tracks $Q^0$ through an exponential filter with
relaxation scale $\sqrt{\eps}$ and, over the final interval of that length,
switches to a terminal bridge enforcing $Q_T^\eps=\Xi_T$. The resulting
feedback rule is explicit and does not require solving the regularized stochastic control problem.

\begin{lemma}[Explicit execution strategy]\label{le:tracking}
Fix \(0<\eps\le T^2/4\). Define
\begin{equation}\label{eq:Qeps-filter}
  \bar Q_t^\eps
  :=\frac{1}{\sqrt{\eps}}\int_0^t
  e^{-(t-r)/\sqrt{\eps}}Q_r^0\,dr,
  \qquad 0\le t<T,
\end{equation}
and
\begin{equation}\label{eq:Qeps-explicit}
  Q_t^\eps
  :=
  \begin{cases}
    \displaystyle
    \bar Q_t^\eps,
    & 0\le t\le T-\sqrt{\eps},\\[2.2ex]
    \displaystyle
    \frac{T-t}{\sqrt{\eps}}\,
    \bar Q_{T-\sqrt{\eps}}^\eps
    +(T-t)\int_{T-\sqrt{\eps}}^t
    \frac{\Xi_r}{(T-r)^2}\,dr,
    & T-\sqrt{\eps}<t<T,\\[2.2ex]
    \Xi_T,
    & t=T.
  \end{cases}
\end{equation}
Then \(Q^\eps\in\cQ_{\mathrm{ac}}\). Its trading rate is given by the
feedback rule
\begin{equation}\label{eq:Qeps-rate}
  \dot Q_t^\eps
  =
  \begin{cases}
    \displaystyle
    \frac{Q_t^0-Q_t^\eps}{\sqrt{\eps}},
    & 0\le t<T-\sqrt{\eps},\\[2ex]
    \displaystyle
    \frac{\Xi_t-Q_t^\eps}{T-t},
    & T-\sqrt{\eps}\le t<T,
  \end{cases}
  \qquad \mathbb P\otimes dt\text{-a.e.}
\end{equation}

Define
\begin{equation}\label{eq:defCtrTilde}
  \widetilde C_{\mathrm{tr}}
  :=2\left(
    C_{0}+2[Q^0]^2_{\rm{tr}}+8K_0+10\E[\Xi_T^2]
  \right)
\end{equation}
and
\begin{equation}\label{eq:defCtr}
  C_{\mathrm{tr}}
  :=\widetilde C_{\mathrm{tr}}+2T\Lambda_{\Xi}.
\end{equation}
Then
\begin{equation}\label{eq:tracking}
  \E\left[
    \int_0^T (Q_t^\eps-Q_t^0)^2\,dt
    +
    \eps\int_0^T (\dot Q_t^\eps)^2\,dt
  \right]
  \le
  \widetilde C_{\mathrm{tr}}\sqrt{\eps}
  +4\Lambda_{\Xi}\eps
  \le
  C_{\mathrm{tr}}\sqrt{\eps}.
\end{equation}
\end{lemma}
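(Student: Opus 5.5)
The plan is to recognize $Q^\eps$ as exactly the concatenated control built in the proof of \cref{thm:constrained-master}. We take target $\xi:=Q^0$, initial position $x:=0$, $\kappa:=\eps$ and $a:=\sqrt{\eps}\le T/2$, but we fix the switching time at $s:=T-a$ instead of choosing it by averaging. Note that $\E\bigl[\int_0^T (Q^\eps_t-Q^0_t)^2dt+\eps\int_0^T(\dot Q^\eps_t)^2dt\bigr]=2J_\eps(\dot Q^\eps)$ in the notation of \eqref{eq:objective}, so it suffices to bound the tracking cost of this control by half the right-hand side of \eqref{eq:tracking}.

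\emph{Structural claims.} On $[0,T-a]$, \eqref{eq:Qeps-filter} is the explicit solution \eqref{eq:Xexplicit} of \eqref{eq:benchmarkX} with $x=0$. Differentiating gives $\dot Q^\eps_t=(Q^0_t-Q^\eps_t)/a$, and $Q^\eps_0=0=Q^\eps_{0-}$. On $(T-a,T)$, \eqref{eq:Qeps-explicit} is the bridge \eqref{eq:proof-terminal-state1} started from $Y_s=\bar Q^\eps_{T-a}$, so it has rate $(\Xi_t-Q^\eps_t)/(T-t)$. The argument in the proof of \cref{thm:constrained-master} (martingale convergence together with $\Xi_T\in\cF_{T-}$, which follows from \eqref{eq:XiReachability}) gives $Q^\eps_t\to\Xi_T$ a.s. as $t\uparrow T$. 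Hence $Q^\eps$ is continuous with $Q^\eps_T=\Xi_T$. Square integrability of the rate follows from \eqref{eq:proof-u2}--\eqref{eq:proof-iso}, and $\E\sup_t|Q^\eps_t|^2\le T\,\E\int_0^T(\dot Q^\eps_t)^2dt<\infty$. Therefore $Q^\eps\in\cS^2$, and so $Q^\eps\in\cQ_{\mathrm{ac}}$ with feedback rule \eqref{eq:Qeps-rate}.

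\emph{Cost on $[0,T-a]$.} \Cref{thm:master}'s computation, with $x=0$ and the bounds $\E[(Q^0_0)^2]\le C_0$ and $\omega_{Q^0}(h)\le[Q^0]^2_{\rm tr}h$ from \cref{le:q0-estimates}, gives
\[
\E\int_0^{T-a}(e_t^2+\eps\,\dot{(Q^\eps_t)}^{2})\,dt\le 2\Omega(a)\le 2a\bigl(C_0+2[Q^0]^2_{\rm tr}\bigr).
\]
Here $e_t$ denotes the tracking error on this phase; the left side is at most twice the $\Omega$ bound since the cost equals $2\E\int e^2$.

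\emph{Cost on $[T-a,T]$.} Since $s$ is fixed, I will bound $B:=\E|\Xi_s-\bar Q^\eps_s|^2$ by a uniform quantity instead of using \eqref{eq:proof-constrained-B2}. Using \eqref{eq:error}-type notation with $\mu_a$, I decompose
\[
\Xi_s-\bar Q^\eps_s=e^{-s/a}\Xi_s+\int_0^s(\Xi_s-\Xi_{s-h})\,\mu_a(dh)+\int_0^s(\Xi_{s-h}-Q^0_{s-h})\,\mu_a(dh).
\]
Jensen together with $\E\Xi_r^2\le\E\Xi_T^2$ and the bound $K_0$ from \eqref{eq:q0-K0-bound} then yields $B\lesssim K_0+\E[\Xi_T^2]$. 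Similarly, the mismatch term satisfies $\E\int_s^T|Q^0_t-\Xi_t|^2dt\le aK_0$, and the learning term is controlled by $\int_{(s,T]}d\E[\Xi_r^2]/(T-r)\le\Lambda_\Xi$. Plugging these into the tail estimates \eqref{eq:constrained-tail-tracking}--\eqref{eq:constrained-tail-trading}, with $T-s=a$ exactly, gives a bound of the form $a\cdot c\,(K_0+\E\Xi_T^2)+4a^2\Lambda_\Xi$. After doubling, this produces the $8K_0+10\E[\Xi_T^2]$ and $4\Lambda_\Xi\eps$ terms of $\widetilde C_{\rm tr}$.

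Summing the two phases gives $\widetilde C_{\rm tr}\sqrt\eps+4\Lambda_\Xi\eps$. The final inequality in \eqref{eq:tracking} follows from $\sqrt\eps\le T/2$, so that $4\Lambda_\Xi\eps\le 2T\Lambda_\Xi\sqrt\eps$. The main obstacle I expect is the bookkeeping needed to land exactly on the stated constants, in particular obtaining the sharp splitting of $\Xi_s-\bar Q^\eps_s$ and of $|Q^\eps-Q^0|^2\le2|Q^\eps-\Xi|^2+2|\Xi-Q^0|^2$ on the bridge. If those constants do not come out exactly, I would first try three-term splittings with different weights before revisiting the decomposition itself.
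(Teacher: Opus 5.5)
Your proposal is correct and is essentially the paper's proof. It uses the same filter-plus-bridge construction with deterministic switch at $s=T-\sqrt{\eps}$, the first-phase bound $\sqrt{\eps}\,(C_0+2[Q^0]^2_{\rm tr})$ from \cref{thm:master} with $x=0$, and the tail integrals recomputed with $T-s=\sqrt{\eps}$, controlled by $K_0$ (mismatch) and $\Lambda_\Xi$ (learning).

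The only deviation is your bound on $B$, and it does land within the stated constants. Your kernel decomposition with $(r_1+r_2+r_3)^2\le 3(r_1^2+r_2^2+r_3^2)$ and Jensen gives $B\le 3K_0+3\E[\Xi_T^2]$. This is slightly sharper than the paper's $B\le 2\E[\Xi_s^2]+2\E[|\bar Q_s^\eps|^2]\le 4K_0+6\E[\Xi_T^2]$. Hence the doubled tail is at most $\sqrt{\eps}\,(12K_0+10\E[\Xi_T^2])+4\eps\Lambda_\Xi$, below the $\sqrt{\eps}\,(16K_0+20\E[\Xi_T^2])$ that $\widetilde C_{\mathrm{tr}}$ allows. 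One bookkeeping slip: the $4a^2\Lambda_\Xi$ you quote is already the post-doubling figure, since the undoubled tail carries $2a^2\Lambda_\Xi$.
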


\begin{proof}
For brevity, set $a:=\sqrt{\eps}$ and $s:=T-\sqrt{\eps}$. On \([0,s]\), differentiating
\eqref{eq:Qeps-filter} gives
\[
  \dot Q_t^\eps=\frac{Q_t^0-Q_t^\eps}{a}.
\]
Thus, the calculation in the proof of \cref{thm:master}, applied with target
\(Q^0\) and initial position zero, together with
\eqref{eq:q0-initial-bound} and \eqref{eq:q0-star-bound}, yields
\begin{equation}\label{eq:Qeps-first-phase}
  \E\left[
    \frac12\int_0^s (Q_t^\eps-Q_t^0)^2\,dt
    +\frac{a^2}{2}\int_0^s (\dot Q_t^\eps)^2\,dt
  \right]
  \le a\left(C_{0}+2[Q^0]^2_{\rm{tr}}\right).
\end{equation}

Set
\[
  B_\eps:=\E\bigl[|\Xi_s-\bar Q_s^\eps|^2\bigr].
\]
By Jensen's inequality,
\begin{align*}
  \E\bigl[|\bar Q_s^\eps|^2\bigr]
  &\le \frac1a\int_0^s e^{-(s-r)/a}\E[|Q_r^0|^2] \,dr
  \le \sup_{t\in[0,T)}\E[|Q_t^0|^2]
  \le 2K_0+2\E[\Xi_T^2],
\end{align*}
where the last inequality follows from
\(|Q_t^0|^2\le2|Q_t^0-\Xi_t|^2+2|\Xi_t|^2\). Consequently,
\begin{equation}\label{eq:Qeps-B-bound}
  B_\eps
  \le 2\E[\Xi_s^2]+2\E[|\bar Q_s^\eps|^2]
  \le 4K_0+6\E[\Xi_T^2].
\end{equation}

On \([s,T)\), the second branch of \eqref{eq:Qeps-explicit} solves the
second feedback equation in \eqref{eq:Qeps-rate}. Integration by parts gives
\begin{equation}\label{eq:Qeps-terminal-rate}
  \dot Q_t^\eps
  =
  \frac{\Xi_s-\bar Q_s^\eps}{a}
  +N_t,
  \qquad
  N_t:=\int_{(s,t]}\frac{1}{T-r}\,d\Xi_r,
  \qquad s\le t<T.
\end{equation}
The process \(N\) is progressively measurable, and the martingale isometry and
Fubini's theorem yield
\begin{equation}\label{eq:Qeps-N-bound}
  \E\int_s^T N_t^2\,dt
  =\int_{(s,T]}\frac{d\E[\Xi_r^2]}{T-r}
  \le\Lambda_{\Xi}.
\end{equation}
Together with \eqref{eq:Qeps-first-phase} and \eqref{eq:Qeps-B-bound}, this
shows that the trading rate in \eqref{eq:Qeps-rate} is progressively measurable and
square-integrable. Moreover, the argument following
\eqref{eq:proof-terminal-state2} shows that
\(Q_t^\eps\to\Xi_T\) almost surely as \(t\uparrow T\). Hence
\(Q^\eps\in\cQ_{\mathrm{ac}}\).

For \(s\le t<T\), \eqref{eq:Qeps-rate} and
\eqref{eq:Qeps-terminal-rate} imply
\begin{align*}
  \E[|\Xi_t-Q_t^\eps|^2]
  &\le
  2\frac{(T-t)^2}{a^2}B_\eps
  +2(T-t)\Lambda_{\Xi}.
\end{align*}
Indeed, for \(r\le t\),
\((T-t)^2/(T-r)^2\le (T-t)/(T-r)\), and the second term follows from the
martingale isometry. Integrating over \([s,T]\) and using the definition of
\(K_0\), we obtain
\begin{equation}\label{eq:Qeps-terminal-tracking}
  \frac12\E\int_s^T (Q_t^\eps-Q_t^0)^2\,dt
  \le \frac{2a}{3}B_\eps+a^2\Lambda_{\Xi}+aK_0.
\end{equation}
Similarly, \eqref{eq:Qeps-terminal-rate} and
\eqref{eq:Qeps-N-bound} give
\begin{equation}\label{eq:Qeps-terminal-trading}
  \frac{a^2}{2}\E\int_s^T (\dot Q_t^\eps)^2\,dt
  \le aB_\eps+a^2\Lambda_{\Xi}.
\end{equation}
Combining \eqref{eq:Qeps-terminal-tracking} and
\eqref{eq:Qeps-terminal-trading} with \eqref{eq:Qeps-B-bound} yields
\begin{align*}
  \E\left[
    \frac12\int_s^T (Q_t^\eps-Q_t^0)^2\,dt
    +\frac{a^2}{2}\int_s^T (\dot Q_t^\eps)^2\,dt
  \right]
  &\le
  \frac{5a}{3}B_\eps+aK_0+2a^2\Lambda_{\Xi}\\
  & \le
  a\left(8K_0+10\E[\Xi_T^2]\right)
  +2a^2\Lambda_{\Xi}.
\end{align*}
Combining this estimate with \eqref{eq:Qeps-first-phase} and multiplying by
two gives
\[
  \E\left[
    \int_0^T (Q_t^\eps-Q_t^0)^2\,dt
    +a^2\int_0^T (\dot Q_t^\eps)^2\,dt
  \right]
  \le
  \widetilde C_{\mathrm{tr}}a+4a^2\Lambda_{\Xi}.
\]
Since \(a=\sqrt\eps\) and \(a\le T/2\), this proves
\eqref{eq:tracking}. The finiteness of the constants follows from
\cref{le:q0-estimates} and \eqref{eq:XiReachability}.
\end{proof}

\begin{remark}[On \(Q^\eps\)]
The above construction of \(Q^\eps\) deviates slightly from the proof of
\cref{thm:constrained-master}. There, under the general assumptions of the
constrained tracking problem, the switching time is selected by an averaging
argument from the interval
\([T-2\sqrt{\eps},T-\sqrt{\eps}]\). Here, the additional uniform estimate
\(
  \sup_{t<T}\E|Q_t^0-\Xi_t|^2<\infty
\)
from \cref{le:q0-estimates} allows us to use the deterministic
switching time \(T-\sqrt{\eps}\) and hence make the resulting strategy
\eqref{eq:Qeps-explicit} fully explicit.

An example is shown in Figure~\ref{fig:OW-strategies}. We remark that the slope of the terminal bridge could be changed by modifying the switching time to \(T-c\sqrt{\eps}\) for some $c>0$.
\end{remark}

We can now state our main theorem. Its bounds in~\eqref{eq:OW-subopt} and~\eqref{eq:OW-rate-competitor} apply to the explicit strategy $Q^\eps$ from~\eqref{eq:Qeps-explicit}, not merely to the optimal value, and show that the excess impact cost $J_0(Q^\eps)-J_0(Q^0)$ and the total regularized cost above $V(0)$ are both $O(\sqrt{\eps})$.

\begin{theorem}[Main result]\label{th:OW-rate}
For every \(0<\eps\le T^2/4\), the explicit strategy \(Q^\eps\in\cQ_{\mathrm{ac}}\) from \eqref{eq:Qeps-explicit} satisfies
\begin{align}
  0\le J_0(Q^\eps)-J_0(Q^0)
  &\le
  C_J\bigl(\widetilde C_{\mathrm{tr}}\sqrt{\eps}+4\Lambda_{\Xi}\eps\bigr),
  \label{eq:OW-subopt}\\
  0\le J_\eps(Q^\eps)-V(0)
  &\le
  (1+C_J)\bigl(\widetilde C_{\mathrm{tr}}\sqrt{\eps}+4\Lambda_{\Xi}\eps\bigr).
  \label{eq:OW-rate-competitor}
\end{align}
As a consequence, the value functions satisfy
\begin{equation}\label{eq:OW-rate}
  0\le V(\eps)-V(0)
  \le
  (1+C_J)\bigl(\widetilde C_{\mathrm{tr}}\sqrt{\eps}+4\Lambda_{\Xi}\eps\bigr)
\end{equation}
and in particular
\[
  V(\eps)-V(0)=O(\sqrt{\eps}),
  \qquad
  \eps\downarrow0.
\]
\end{theorem}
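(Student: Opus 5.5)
The plan is to combine \cref{le:rate} with the tracking estimate of \cref{le:tracking}; essentially no new analysis is needed. By \cref{le:tracking}, $Q^\eps\in\cQ_{\mathrm{ac}}\subset\cQ$. Hence \eqref{eq:J0-quadratic-growth} of \cref{le:rate} applies with $Q=Q^\eps$ and gives
\[
0\le J_0(Q^\eps)-J_0(Q^0)=\|Y^{Q^\eps}-Y^0\|_{\cH}^2\le C_J\,\E\int_0^T (Q^\eps_t-Q^0_t)^2\,dt .
\]
The right-hand side is bounded by dropping the nonnegative trading-rate term in \eqref{eq:tracking}. This yields $C_J(\widetilde C_{\mathrm{tr}}\sqrt\eps+4\Lambda_\Xi\eps)$, which is \eqref{eq:OW-subopt}.

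For \eqref{eq:OW-rate-competitor}, I would write
\[
J_\eps(Q^\eps)-V(0)=\bigl(J_0(Q^\eps)-J_0(Q^0)\bigr)+\eps\,\E\int_0^T(\dot Q^\eps_t)^2\,dt,
\]
using $V(0)=J_0(Q^0)$ from \eqref{eq:opt-unregularized}. Nonnegativity is immediate because both summands are nonnegative. For the upper bound, the first summand is at most $C_J\,\E\int_0^T(Q^\eps_t-Q^0_t)^2\,dt$, and the second equals $\eps\,\E\int_0^T(\dot Q^\eps_t)^2\,dt$. Each of these two quantities is individually bounded by the full left-hand side of \eqref{eq:tracking}, namely $\widetilde C_{\mathrm{tr}}\sqrt\eps+4\Lambda_\Xi\eps$. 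Adding gives the factor $(1+C_J)$. One could even obtain $\max(1,C_J)$, but the stated $1+C_J$ suffices.

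Finally, \eqref{eq:OW-rate} follows from $V(0)\le V(\eps)\le J_\eps(Q^\eps)$. The upper inequality holds because $Q^\eps$ is admissible for \eqref{eq:opt-regularized}. The lower inequality holds because $\cQ_{\mathrm{ac}}\subset\cQ$ and $J_\eps\ge J_0$, as already recorded in \eqref{eq:Veps-upper-general}. The $O(\sqrt\eps)$ statement then follows since $\eps\le\sqrt\eps\,T/2$ for $\eps\le T^2/4$, so that $4\Lambda_\Xi\eps\le 2T\Lambda_\Xi\sqrt\eps$.

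There is no real obstacle here, since all the work is in \cref{le:rate,le:tracking}. The only point to check is that $Q^\eps$ genuinely lies in $\cQ$, so that the first-order condition argument of \cref{le:rate} applies: it must be a square-integrable semimartingale with $Q^\eps_T=\Xi_T=Q^0_T$. This membership was established in \cref{le:tracking}.
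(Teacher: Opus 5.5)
Your proposal is correct and follows the paper's proof: you apply the quadratic-growth identity \eqref{eq:J0-quadratic-growth} of \cref{le:rate} to $Q^\eps$, bound the impact-cost term and the regularization term separately by the full left-hand side of \eqref{eq:tracking}, and conclude with $V(0)\le V(\eps)\le J_\eps(Q^\eps)$. Your side remark that the factor $1+C_J$ could be sharpened to $\max(1,C_J)$ is also valid.
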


\begin{proof}
Let \(Q^\eps\) be the explicit strategy from \eqref{eq:Qeps-explicit}. Since
\[
  \E\int_0^T (Q_t^\eps-Q_t^0)^2\,dt
  \le
  \E\left[
    \int_0^T (Q_t^\eps-Q_t^0)^2\,dt
    +
    \eps\int_0^T (\dot Q_t^\eps)^2\,dt
  \right],
\]
\cref{le:rate,le:tracking} give
\[
  0\le J_0(Q^\eps)-J_0(Q^0)
  \le
  C_J\bigl(\widetilde C_{\mathrm{tr}}\sqrt{\eps}+4\Lambda_{\Xi}\eps\bigr),
\]
which is \eqref{eq:OW-subopt}. Adding the regularization term and noting that likewise
\[
  \eps \, \E\left[
    \int_0^T (\dot Q_t^\eps)^2\,dt
  \right]
  \le
  \E\left[
    \int_0^T (Q_t^\eps-Q_t^0)^2\,dt
    +
    \eps\int_0^T (\dot Q_t^\eps)^2\,dt
  \right]
\]
yields
\eqref{eq:OW-rate-competitor}. Since \(V(\eps)\le J_\eps(Q^\eps)\),
\eqref{eq:OW-rate} follows.
\end{proof}

Recall that convergence of $Q^\eps$ to $Q^0$  in $L^2$ was already shown in~\eqref{eq:tracking}. The following complements that result with the convergence of the true optimizer~$Q^{*,\eps}$. 

\begin{corollary}[Convergence of the regularized optimizer]
\label{cor:regularized-optimizer}
Let \(Q^{*,\eps}\) be the regularized optimizer from
\cref{rem:regularized-optimizer-existence}, and let
\(Y^{*,\eps}:=Y^{Q^{*,\eps}}\). Define
\[
  \underline c_{\cH}
  :=
  \inf_{t\in[0,T]}
  \frac{2\beta_t+\dot\gamma_t}{2\lambda_t}
  >0
\qquad\text{and}\qquad
  C_{\mathrm{inv}}
  :=
  \frac{
    2\|\lambda^{-1}\|_{L^\infty(0,T)}^2
    +
    T^2
    \left\|
      \frac{\beta+\dot\gamma}{\lambda}
    \right\|_{L^\infty(0,T)}^2
  }{
    \underline c_{\cH}
  }.
\]
Then, for every \(0<\eps\le T^2/4\),
\begin{equation}\label{eq:regularized-optimizer-impact}
  0
  \le
  J_0(Q^{*,\eps})-J_0(Q^0)
  =
  \|Y^{*,\eps}-Y^0\|_{\cH}^2
  \le
  (1+C_J)
  \bigl(
    \widetilde C_{\mathrm{tr}}\sqrt{\eps}
    +4\Lambda_\Xi\eps
  \bigr),
\end{equation}
and
\begin{equation}\label{eq:regularized-optimizer-inventory}
  \E\int_0^T
  |Q_t^{*,\eps}-Q_t^0|^2\,dt
  \le
  C_{\mathrm{inv}}(1+C_J)
  \bigl(
    \widetilde C_{\mathrm{tr}}\sqrt{\eps}
    +4\Lambda_\Xi\eps
  \bigr).
\end{equation}
In particular,
\[
  J_0(Q^{*,\eps})-J_0(Q^0)
  =
  O(\sqrt{\eps}),
  \qquad
  \E\int_0^T
  |Q_t^{*,\eps}-Q_t^0|^2\,dt
  =
  O(\sqrt{\eps}),
  \qquad
  \eps\downarrow0.
\]
\end{corollary}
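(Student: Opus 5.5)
The plan has two parts: compare costs to get \eqref{eq:regularized-optimizer-impact}, then invert the impact dynamics to get \eqref{eq:regularized-optimizer-inventory}.

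\emph{Impact bound.} Since \(Q^{*,\eps}\in\cQ_{\mathrm{ac}}\subset\cQ\), \cref{le:rate} applies with \(Q=Q^{*,\eps}\). It gives \(0\le J_0(Q^{*,\eps})-J_0(Q^0)=\|Y^{*,\eps}-Y^0\|_{\cH}^2\). For the upper bound I would drop the nonnegative regularization term and use optimality of \(Q^{*,\eps}\):
\[
J_0(Q^{*,\eps})\le J_\eps(Q^{*,\eps})=V(\eps).
\]
Hence
\[
J_0(Q^{*,\eps})-J_0(Q^0)\le V(\eps)-V(0),
\]
and \eqref{eq:OW-rate} from \cref{th:OW-rate} yields \eqref{eq:regularized-optimizer-impact} directly.

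\emph{Inventory bound.} Here the goal is to bound \(\E\int_0^T|Q_t-Q_t^0|^2dt\) by a constant times \(\|Y^Q-Y^0\|_{\cH}^2\), i.e.\ a reverse of \cref{le:Lipschitz}. Set \(R:=Q^{*,\eps}-Q^0\) and \(Z:=Y^{*,\eps}-Y^0\). Both strategies start at \(Q_{0-}=0\) with the same initial impact \(y\), so subtracting the representations \eqref{eq:Qexplicit} gives
\[
R_t=\frac{Z_t}{\lambda_t}+\int_0^t\frac{\beta_s+\dot\gamma_s}{\lambda_s}Z_s\,ds .
\]
Then \((a+b)^2\le 2a^2+2b^2\) and Cauchy--Schwarz, \(\bigl(\int_0^t f\bigr)^2\le t\int_0^t f^2\le T\int_0^T f^2\), give pathwise
\[
\int_0^T R_t^2\,dt\le 2\|\lambda^{-1}\|_\infty^2\int_0^T Z_t^2\,dt+T^2\Bigl\|\tfrac{\beta+\dot\gamma}{\lambda}\Bigr\|_\infty^2\int_0^T Z_t^2\,dt .
\]
Since the density of \(\mu\) on \([0,T)\) is at least \(\underline c_{\cH}\), we have \(\E\int_0^T Z_t^2\,dt\le \underline c_{\cH}^{-1}\|Z\|_{\cH}^2\); the atom at \(T\) is simply discarded. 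This gives \(\E\int_0^T R_t^2\,dt\le C_{\mathrm{inv}}\|Z\|_{\cH}^2\), and combining with the first step proves \eqref{eq:regularized-optimizer-inventory}.

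Two details need checking. The first is that \(\underline c_{\cH}>0\): this holds because \(2\beta+\dot\gamma>c\) and \(\lambda\) is bounded. The second is that the factor \(T^2\) in \(C_{\mathrm{inv}}\) matches the Cauchy--Schwarz step, which it does by the display above. The main point to verify is the inversion formula, namely that the \(y/\lambda_0\) terms cancel. They do, because both impact processes have the same initial value \(Y_{0-}=y\). The \(O(\sqrt\eps)\) statements then follow immediately.
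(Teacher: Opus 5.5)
Your argument is the paper's proof. You bound the impact via $J_0(Q^{*,\eps})\le J_\eps(Q^{*,\eps})=V(\eps)$ together with \eqref{eq:J0-quadratic-growth} and \eqref{eq:OW-rate}. You then get the inventory bound by subtracting the two instances of \eqref{eq:Qexplicit} and using $\E\int_0^T Z_t^2\,dt\le \underline c_{\cH}^{-1}\|Z\|_{\cH}^2$.

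The one step that does not check out is the detail you say you verified. Integrate the chain $\bigl(\int_0^t f\bigr)^2\le t\int_0^t f^2\le T\int_0^T f^2$ over $t\in[0,T]$ and multiply by the factor $2$ from $(a+b)^2\le 2a^2+2b^2$. This gives the coefficient $2T^2\bigl\|\tfrac{\beta+\dot\gamma}{\lambda}\bigr\|_{L^\infty(0,T)}^2$, not $T^2\bigl\|\tfrac{\beta+\dot\gamma}{\lambda}\bigr\|_{L^\infty(0,T)}^2$. So your display, and hence the stated $C_{\mathrm{inv}}$, does not follow from the inequality you cite.

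To get the constant as stated, stop at $t\int_0^t Z_s^2\,ds$ and apply Fubini: $\int_0^T t\int_0^t Z_s^2\,ds\,dt=\tfrac12\int_0^T (T^2-s^2)Z_s^2\,ds\le \tfrac{T^2}{2}\int_0^T Z_s^2\,ds$. This is what the paper does, and the same computation appears in the proof of \cref{le:Lipschitz}. The $O(\sqrt{\eps})$ conclusions are unaffected either way.
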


\begin{proof}
Since \(Q^{*,\eps}\) minimizes \(J_\eps\) over
\(\cQ_{\mathrm{ac}}\), we have
\[
  J_0(Q^{*,\eps})-J_0(Q^0)
  \le
  J_\eps(Q^{*,\eps})-J_0(Q^0)
  =
  V(\eps)-V(0).
\]
The identity in~\eqref{eq:J0-quadratic-growth} and the value bound
\eqref{eq:OW-rate} therefore yield
\eqref{eq:regularized-optimizer-impact}.

It remains to control the inventory difference. More generally, let
\(Q\in\cQ\), and set
\[
  R_t:=Q_t-Q_t^0,
  \qquad
  Z_t:=Y_t^Q-Y_t^0.
\]
Subtracting the two instances of~\eqref{eq:Qexplicit} gives
\[
  R_t
  =
  \frac{Z_t}{\lambda_t}
  +
  \int_0^t
  \frac{\beta_s+\dot\gamma_s}{\lambda_s}Z_s\,ds,
  \qquad 0\le t\le T.
\]
Consequently, by Cauchy--Schwarz and Fubini,
\begin{align*}
  \E\int_0^T R_t^2\,dt
  &\le
  2\|\lambda^{-1}\|_{L^\infty(0,T)}^2
  \E\int_0^T Z_t^2\,dt\\
  &\quad
  +
  2
  \left\|
    \frac{\beta+\dot\gamma}{\lambda}
  \right\|_{L^\infty(0,T)}^2
  \E\int_0^T
  t\int_0^t Z_s^2\,ds\,dt\\
  &\le
  \left(
    2\|\lambda^{-1}\|_{L^\infty(0,T)}^2
    +
    T^2
    \left\|
      \frac{\beta+\dot\gamma}{\lambda}
    \right\|_{L^\infty(0,T)}^2
  \right)
  \E\int_0^T Z_t^2\,dt.
\end{align*}
Moreover, by the definition of the \(\cH\)-norm,
\[
  \E\int_0^T Z_t^2\,dt
  \le
  \frac{1}{\underline c_{\cH}}
  \|Z\|_{\cH}^2.
\]
Thus,
\[
  \E\int_0^T (Q_t-Q_t^0)^2\,dt
  \le
  C_{\mathrm{inv}}
  \|Y^Q-Y^0\|_{\cH}^2.
\]
Applying this estimate to \(Q=Q^{*,\eps}\) and using
\eqref{eq:regularized-optimizer-impact} proves
\eqref{eq:regularized-optimizer-inventory}.
\end{proof}

Our final result shows that the generic rate $\sqrt{\varepsilon}$ is sharp in the benchmark case where $\beta, \lambda > 0$ are constants, the terminal target position $\Xi_T$ is deterministic, and $y=0$. 

\begin{proposition}[Sharpness]\label{pr:OW-rate-sharp}
Suppose $y=0$, that \(\beta,\lambda>0\) are constants, and that the target position $\Xi_T$ is deterministic. Then, 
\[
V(\varepsilon)
=
\frac{\lambda \Xi_T^2}{\beta T + 2}
+
\frac{2\sqrt{\beta\lambda}\,\Xi_T^2}
{(\beta T+2)^2}\sqrt{\varepsilon}
+
O(\varepsilon),
\qquad \varepsilon\downarrow0.
\]
Consequently, if \(\Xi_T\neq0\), then
\[
V(\varepsilon)-V(0)
\sim
\frac{2\sqrt{\beta\lambda}\,\Xi_T^2}
{(\beta T+2)^2}\sqrt{\varepsilon}, \qquad \varepsilon\downarrow0, 
\]
and hence the rate \(O(\sqrt{\varepsilon})\) is sharp.
\end{proposition}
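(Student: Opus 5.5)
The plan is to reduce to a deterministic one-dimensional variational problem, solve it exactly with constant coefficients, and expand the closed-form value in $\sqrt{\eps}$. We already have $V(0)$: with $y=0$, constant $\beta,\lambda$, we get $\dot\gamma=0$, $\eta\equiv\frac12$ and $\theta_t=\beta t/(2\lambda)$. Since $\Xi$ is constant, $M\equiv M_{0-}=\lambda\Xi_T/(1+\beta T/2)$. Formula \eqref{eq:optimal-cost} then gives $V(0)=\frac12 M^2(\beta T/(2\lambda)+1/\lambda)=\lambda\Xi_T^2/(\beta T+2)$. It therefore remains to identify the $\sqrt{\eps}$ coefficient of $V(\eps)$ with an $O(\eps)$ remainder.

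\emph{Step 1 (reduction to deterministic controls).} The set $\cQ_{\mathrm{ac}}$ is convex and $Q\mapsto Y^Q$ is affine. By \cref{le:reformulation}, $J_\eps$ is a convex quadratic functional of the rate $u$. Since $\Xi_T$ is deterministic, replacing $u_t$ by $\E[u_t]$ keeps $Q_T=\Xi_T$, and Jensen's inequality shows that $J_\eps$ does not increase. Hence $V(\eps)$ equals the infimum over deterministic $u\in L^2(0,T)$, and by \cref{rem:regularized-optimizer-existence} the optimizer is deterministic.

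\emph{Step 2 (formulation in $Y$ and Euler--Lagrange).} Absolute continuity of $Q$ gives $Y_0=y=0$, $Y$ absolutely continuous, and $\dot Q=(\beta Y+\dot Y)/\lambda$ by \eqref{eq:OWreversionInverse}. The problem then reads
\[
  \min_Y\; \int_0^T \frac{\beta}{\lambda}Y_t^2\,dt+\frac{Y_T^2}{2\lambda}+\frac{\eps}{\lambda^2}\int_0^T(\beta Y_t+\dot Y_t)^2\,dt
  \quad\text{s.t.}\quad \frac{Y_T}{\lambda}+\frac{\beta}{\lambda}\int_0^TY_t\,dt=\Xi_T,\ Y_0=0.
\]
I will introduce a multiplier $\nu$ for the linear constraint. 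The Euler--Lagrange equation is $\ddot Y=k^2Y-c\nu$ with $k^2:=\beta^2+\beta\lambda/\eps$ and an explicit constant $c$. The natural boundary condition at $T$ (from the free endpoint) is a Robin-type relation between $Y_T$, $\dot Y_T$ and $\nu$. The solution is $Y_t=m+A\cosh(kt)+B\sinh(kt)$, with $m,A,B,\nu$ fixed by $Y_0=0$, the Robin condition and the constraint. The value then follows from the first-order identity $V(\eps)=\nu\,\Xi_T\cdot(\text{const})$, or equivalently by direct integration of hyperbolic functions.

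\emph{Step 3 (asymptotics).} Since $k\sim\sqrt{\beta\lambda/\eps}\to\infty$, the solution is a plateau $m_\eps$ with two boundary layers of width $1/k$: one at $0$, where the layer replaces the initial block trade from $0$ to $M/2$, and one at $T$, where it replaces the terminal jump from $M/2$ to $M$. All terms $e^{-kT}$ are exponentially small. Expanding $\coth(kT)=1+O(e^{-2kT})$ and $1/k=\sqrt{\eps/(\beta\lambda)}\,(1+O(\eps))$, the value becomes a rational function of $1/k$ up to exponentially small errors. A Taylor expansion to first order in $1/k$ should then give
\[
V(0)+\frac{2\sqrt{\beta\lambda}\,\Xi_T^2}{(\beta T+2)^2}\sqrt{\eps}+O(\eps).
\]
As a sanity check, each layer costs roughly $(\text{jump})^2\sqrt{\eps}\cdot\sqrt{\beta/\lambda}/\lambda$-type quantities in the penalty and the impact terms. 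Both jumps have size $M/2=\lambda\Xi_T/(\beta T+2)$, which is consistent with the predicted factor $\Xi_T^2/(\beta T+2)^2$. The remaining claims follow immediately: if $\Xi_T\neq0$ the coefficient is positive, so $V(\eps)-V(0)\sim$ that multiple of $\sqrt\eps$, and the upper bound of \cref{th:OW-rate} cannot be improved in order.

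The main obstacle is Step 3, the bookkeeping in solving for $(m,A,B,\nu)$ and expanding without sign or factor errors. To control it, I would first solve the linear system symbolically in terms of $\rho:=1/k$ after dropping exponentially small terms. I would then verify that $\rho\to0$ reproduces $V(0)$, and only afterwards extract the linear term in $\rho$. Step 1 is a minor technical point.
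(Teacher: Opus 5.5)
Your proposal is correct. Its skeleton matches the paper's: an exact solution, the identity expressing $V(\eps)$ through the Lagrange multiplier, and an expansion in $1/k$ with exponentially small corrections. The difference lies in how you obtain the exact solution. The paper does not solve the regularized problem itself. It imports the closed-form optimizer of Chen--Horst--Tran and their first-order condition $2\eps u_t+\lambda\int_0^T e^{-\beta|t-s|}u_s\,ds+\mu_\eps=0$, an integral equation in the trading rate. It then reads off $V(\eps)=-\mu_\eps\Xi_T/2=(\lambda/\beta+\eps)\,d_\eps\,\Xi_T$ and expands $d_\eps$. You instead use the paper's own reformulation in the impact variable (\cref{le:reformulation}). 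There the nonlocal kernel becomes a local quadratic form, so the first-order condition is the ODE $\ddot Y=k^2Y-c\nu$ with a Robin condition at $T$. Your route is self-contained and makes the two boundary layers explicit, at the cost of doing the linear algebra yourself; the paper's route is shorter given the citation. Your Step~1 also makes explicit a point the paper leaves tacit: the infimum over progressively measurable rates coincides with the deterministic value computed in the citation.

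Your Step~3 is only asserted, but the bookkeeping does close. Work with the basis $e^{-kt},e^{-k(T-t)}$ rather than $\cosh,\sinh$. Then all coefficients are of order one, and $e^{-kT}$ enters only as a perturbation of a nondegenerate linear system. Write $Y_t=m+A'e^{-kt}+B'e^{-k(T-t)}$, $\rho:=1/k$, $m=\frac{\beta\lambda}{2\eps k^2}\nu$, and discard the $O(e^{-kT})$ terms. Then $Y_0=0$ gives $A'=-m$, the Robin condition gives $B'=m\frac{1-\beta\rho}{1+\beta\rho}$, and the constraint gives $m=\lambda\Xi_T/(\beta T+2-2\beta\rho)$. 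Hence $V(\eps)=\nu\Xi_T/2=m\Xi_T/(1-\beta^2\rho^2)$. This agrees with the paper, whose $d_\eps$ equals $\beta\Xi_T/(\beta T+2-2\beta/k)$ up to exponentially small terms. It expands to $\frac{\lambda\Xi_T^2}{\beta T+2}+\frac{2\beta\lambda\Xi_T^2}{(\beta T+2)^2}\rho+O(\rho^2)$, and since $\rho=\sqrt{\eps/(\beta\lambda)}+O(\eps^{3/2})$, this is the claim.

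Finally, you need neither a multiplier-existence nor a regularity argument to certify that your Euler--Lagrange solution is the minimizer. Suppose the candidate satisfies the weak first-order condition against all variations vanishing at $0$. Then any constraint-preserving perturbation $\phi$ changes the homogeneous quadratic objective by exactly the objective's value at $\phi$, which is nonnegative.
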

\begin{proof}
Set \(D:=\beta T+2\). We first consider the unregularized case
\(\varepsilon=0\). Under the stated assumptions, this is the original
Obizhaeva--Wang model~\cite{ObizhaevaWang.13}. By
\cite[Proposition~2]{ObizhaevaWang.13}, the optimal strategy is
\[
  dQ_t^0
  =
  \frac{\Xi_T}{D}\,\delta_0(dt)
  +\frac{\beta\Xi_T}{D}\,dt
  +\frac{\Xi_T}{D}\,\delta_T(dt),
  \qquad 0\le t\le T,
\]
and its cost is
\begin{equation}\label{eq:value-unreg}
  V(0)=\frac{\lambda\Xi_T^2}{D}.
\end{equation}

We next use the explicit solution of the regularized problem
\eqref{eq:opt-regularized} obtained by Chen, Horst, and
Tran~\cite[Theorem~2.1]{ChenHorstTran.25}. Define
\[
  k_\varepsilon
  :=
  \sqrt{\beta^2+\frac{\beta\lambda}{\varepsilon}},
  \qquad
  K_\varepsilon:=\frac{k_\varepsilon T}{2},
\]
and
\[
  a_\varepsilon
  :=
  \frac{\lambda}{\varepsilon}\sinh(K_\varepsilon),
  \qquad
  b_\varepsilon
  :=
  \beta k_\varepsilon\cosh(K_\varepsilon)
  +k_\varepsilon^2\sinh(K_\varepsilon),
  \qquad
  c_\varepsilon:=\frac{\lambda}{\varepsilon}.
\]
Then the optimal inventory and trading rate are
\[
  Q_t^{*,\varepsilon}
  =
  \Xi_T
  \frac{
    a_\varepsilon+b_\varepsilon t
    +c_\varepsilon\sinh\!\left(k_\varepsilon(t-T/2)\right)
  }{
    2a_\varepsilon+b_\varepsilon T
  },
  \qquad 0\le t\le T,
\]
and
\[
  u_t^{*,\varepsilon}
  =
  d_\varepsilon
  +
  \frac{\Xi_Tc_\varepsilon k_\varepsilon}
       {2a_\varepsilon+b_\varepsilon T}
  \cosh\!\left(k_\varepsilon(t-T/2)\right),
  \qquad 0\le t<T,
\]
where
\begin{equation}\label{def:d-eps}
  d_\varepsilon
  :=
  \frac{\Xi_Tb_\varepsilon}{2a_\varepsilon+b_\varepsilon T}.
\end{equation}

The optimal value is not stated explicitly in
\cite{ChenHorstTran.25}, but it follows from their proof. Indeed,
\cite[Appendix~6.1]{ChenHorstTran.25} gives the first-order condition
\[
  2\varepsilon u_t^{*,\varepsilon}
  +
  \lambda\int_0^T e^{-\beta|t-s|}
  u_s^{*,\varepsilon}\,ds
  +
  \mu_\varepsilon
  =
  0
\]
for some Lagrange multiplier \(\mu_\varepsilon\), together with
\[
  \mu_\varepsilon
  =
  -\frac{2\varepsilon k_\varepsilon^2}{\beta^2}d_\varepsilon.
\]
For an absolutely continuous strategy with trading rate \(u\), the objective
in the present constant-coefficient setting can be written as
\[
  J_\varepsilon(Q)
  =
  \E\left[
    \frac{\lambda}{2}
    \int_0^T\int_0^T
    e^{-\beta|t-s|}u_tu_s\,ds\,dt
    +
    \varepsilon\int_0^T u_t^2\,dt
  \right].
\]
Multiplying the first-order condition by \(u_t^{*,\varepsilon}\), integrating
over \([0,T]\), and using
\(\int_0^T u_t^{*,\varepsilon}\,dt=\Xi_T\), we therefore obtain
\begin{equation}\label{eq:value-reg}
  V(\varepsilon)
  =
  J_\varepsilon(Q^{*,\varepsilon})
  =
  -\frac{\mu_\varepsilon\Xi_T}{2}
  =
  \frac{\varepsilon k_\varepsilon^2}{\beta^2}
  d_\varepsilon\Xi_T.
\end{equation}

We now expand \(d_\varepsilon\). From \eqref{def:d-eps},
\begin{equation}\label{def:d-eps2}
  d_\varepsilon
  =
  \Xi_T
  \frac{
    k_\varepsilon^2
    +\beta k_\varepsilon\coth(K_\varepsilon)
  }{
    T\bigl(k_\varepsilon^2
    +\beta k_\varepsilon\coth(K_\varepsilon)\bigr)
    +\dfrac{2}{\beta}(k_\varepsilon^2-\beta^2)
  }.
\end{equation}
Since \(k_\varepsilon\to\infty\) and
\(
  \coth(K_\varepsilon)
  =
  1+O(e^{-k_\varepsilon T}),
\)
we obtain
\[
  k_\varepsilon^2
  +\beta k_\varepsilon\coth(K_\varepsilon)
  =
  k_\varepsilon^2+\beta k_\varepsilon
  +O(k_\varepsilon^{-2}),
\]
and
\[
  T\bigl(k_\varepsilon^2
  +\beta k_\varepsilon\coth(K_\varepsilon)\bigr)
  +\frac{2}{\beta}(k_\varepsilon^2-\beta^2)
  =
  \frac{D}{\beta}k_\varepsilon^2
  +\beta T k_\varepsilon
  -2\beta
  +O(k_\varepsilon^{-2}).
\]
Hence
\[
  d_\varepsilon
  =
  \frac{\beta\Xi_T}{D}
  \frac{
    1+\dfrac{\beta}{k_\varepsilon}
    +O(k_\varepsilon^{-2})
  }{
    1+\dfrac{\beta^2T}{Dk_\varepsilon}
    +O(k_\varepsilon^{-2})
  }.
\]
Using \((1+x)^{-1}=1-x+O(x^2)\) and
\(1-\beta T/D=2/D\), we conclude that
\begin{equation}\label{eq:d-eps-asymp}
  d_\varepsilon
  =
  \frac{\beta\Xi_T}{D}
  +
  \frac{2\beta^2\Xi_T}{D^2}\frac{1}{k_\varepsilon}
  +
  O(k_\varepsilon^{-2}),
  \qquad
  \varepsilon\downarrow0.
\end{equation}

The remaining factor in \eqref{eq:value-reg} satisfies
\begin{equation}\label{eq:k-eps-exact}
  \frac{\varepsilon k_\varepsilon^2}{\beta^2}
  =
  \frac{\lambda}{\beta}+\varepsilon.
\end{equation}
Moreover,
\[
  \frac{1}{k_\varepsilon}
  =
  \sqrt{\frac{\varepsilon}{\beta\lambda}}
  \left(
    1+\frac{\beta\varepsilon}{\lambda}
  \right)^{-1/2}
  =
  \sqrt{\frac{\varepsilon}{\beta\lambda}}
  +O(\varepsilon^{3/2}),
\]
and therefore
\(
  \frac{1}{k_\varepsilon^2}=O(\varepsilon) .%
\)
Substituting \eqref{eq:d-eps-asymp} and
\eqref{eq:k-eps-exact} into \eqref{eq:value-reg} gives
\begin{align*}
  V(\varepsilon)
  &=
  \left(\frac{\lambda}{\beta}+\varepsilon\right)
  \left(
    \frac{\beta\Xi_T}{D}
    +
    \frac{2\beta^2\Xi_T}{D^2}\frac{1}{k_\varepsilon}
    +
    O(k_\varepsilon^{-2})
  \right)\Xi_T\\
  &=
  \frac{\lambda\Xi_T^2}{D}
  +
  \frac{2\lambda\beta\Xi_T^2}{D^2}
  \frac{1}{k_\varepsilon}
  +
  O(\varepsilon)
  =
  \frac{\lambda\Xi_T^2}{D}
  +
  \frac{2\sqrt{\beta\lambda}\,\Xi_T^2}{D^2}
  \sqrt{\varepsilon}
  +
  O(\varepsilon).
\end{align*}
Together with \eqref{eq:value-unreg}, this proves the claimed expansion. If
\(\Xi_T\neq0\), the coefficient of \(\sqrt{\varepsilon}\) is strictly
positive, so the asserted asymptotic equivalence and sharpness follow.
\end{proof}

Figure~\ref{fig:OW-strategies} illustrates the two regularizations in
the constant-coefficient benchmark of \cref{pr:OW-rate-sharp}. As
\(\varepsilon\) decreases, both the simplified strategy \(Q^\varepsilon\) and
the regularized optimizer \(Q^{*,\varepsilon}\) approach the singular
unregularized optimizer \(Q^0\).

\begin{figure}[t]
\centering
\includegraphics[width=0.7\textwidth]{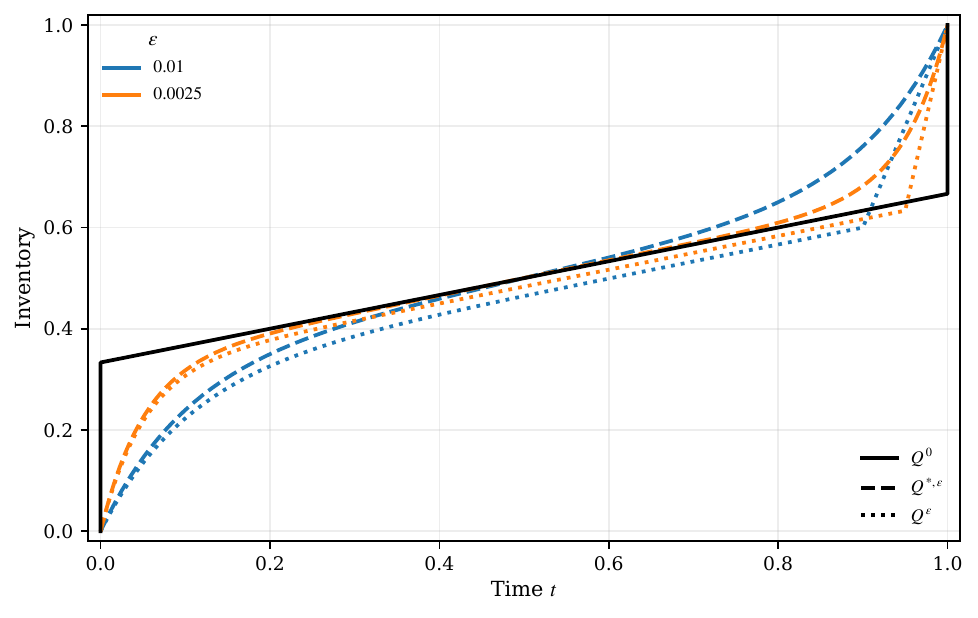}
\caption{Comparison of the unregularized optimal strategy \(Q^0\), the explicit
strategy \(Q^\varepsilon\) from~\eqref{eq:Qeps-explicit}, and the regularized
optimal strategy \(Q^{*,\varepsilon}\) in the constant-coefficient benchmark
with \(T=1\), \(\beta=\lambda=1\), \(y=0\), and deterministic terminal target
\(\Xi_T=1\). The colored dotted and dashed
lines show \(Q^\varepsilon\) and \(Q^{*,\varepsilon}\), respectively, for
\(\varepsilon\in\{0.01,0.0025\}\).}
\label{fig:OW-strategies}
\end{figure}

\bibliographystyle{abbrv}
\bibliography{cite}

@Preamble{"\newcommand{\dummy}[1]{}"}

@book{CarteaJaimungalPenalva.15,
  title={Algorithmic and High-Frequency Trading},
  author={Cartea, {\'A}. and Jaimungal, S. and Penalva, J.},
  year={2015},
  publisher={Cambridge University Press}
}

@book{Webster.23,
  title={Handbook of Price Impact Modeling},
  author={Webster, Kevin},
  year={2023},
  publisher={CRC Press},
  address={Boca Raton, FL},
}

@article{annkirchner.kruse.15,
  AUTHOR   = {Ankirchner, Stefan and Kruse, Thomas},
  TITLE    = {Optimal position targeting with stochastic linear--quadratic costs},
  JOURNAL  = {Banach Center Publ.},
  FJOURNAL = {Banach Center Publications},
  VOLUME   = {104},
  NUMBER   = {1},
  PAGES    = {9--24},
  YEAR     = {2015},
  DOI      = {10.4064/bc104-0-1},
  URL      = {https://doi.org/10.4064/bc104-0-1},
}

@article{BankVoss.18,
  AUTHOR   = {Bank, Peter and Vo{\ss}, Moritz},
  TITLE    = {Linear quadratic stochastic control problems with stochastic terminal constraint},
  JOURNAL  = {SIAM J. Control Optim.},
  FJOURNAL = {SIAM Journal on Control and Optimization},
  VOLUME   = {56},
  NUMBER   = {2},
  PAGES    = {672--699},
  YEAR     = {2018},
  DOI      = {10.1137/16M1104597},
  URL      = {https://doi.org/10.1137/16M1104597},
}

@article{DolinskyGottesmanGurelGurevich.20,
  AUTHOR   = {Dolinsky, Yan and Gottesman, Benjamin and Gurel-Gurevich, Ori},
  TITLE    = {A note on costs minimization with stochastic target constraints},
  JOURNAL  = {Electron. Commun. Probab.},
  FJOURNAL = {Electronic Communications in Probability},
  VOLUME   = {25},
  NUMBER   = {11},
  PAGES    = {1--12},
  YEAR     = {2020},
  DOI      = {10.1214/20-ECP295},
  URL      = {https://doi.org/10.1214/20-ECP295},
}

@article{GraeweHorst.17,
  AUTHOR   = {Graewe, Paulwin and Horst, Ulrich},
  TITLE    = {Optimal trade execution with instantaneous price impact and stochastic resilience},
  JOURNAL  = {SIAM J. Control Optim.},
  FJOURNAL = {SIAM Journal on Control and Optimization},
  VOLUME   = {55},
  NUMBER   = {6},
  PAGES    = {3707--3725},
  YEAR     = {2017},
  DOI      = {10.1137/16M1105463},
  URL      = {https://doi.org/10.1137/16M1105463},
}

@article{bank.al.17,
  AUTHOR   = {Bank, Peter and Soner, H. Mete and Vo{\ss}, Moritz},
  TITLE    = {Hedging with temporary price impact},
  JOURNAL  = {Math. Financ. Econ.},
  FJOURNAL = {Mathematics and Financial Economics},
  VOLUME   = {11},
  NUMBER   = {2},
  PAGES    = {215--239},
  YEAR     = {2017},
  DOI      = {10.1007/s11579-016-0178-4},
  URL      = {https://doi.org/10.1007/s11579-016-0178-4},
}

@article{ObizhaevaWang.13,
  AUTHOR   = {Obizhaeva, Anna A. and Wang, Jiang},
  TITLE    = {Optimal trading strategy and supply/demand dynamics},
  JOURNAL  = {J. Financial Mark.},
  FJOURNAL = {Journal of Financial Markets},
  VOLUME   = {16},
  NUMBER   = {1},
  PAGES    = {1--32},
  YEAR     = {2013},
  DOI      = {10.1016/j.finmar.2012.09.001},
  URL      = {https://doi.org/10.1016/j.finmar.2012.09.001},
}

@article{BankFruth.14,
  AUTHOR   = {Bank, Peter and Fruth, Antje},
  TITLE    = {Optimal order scheduling for deterministic liquidity patterns},
  JOURNAL  = {SIAM J. Financial Math.},
  FJOURNAL = {SIAM Journal on Financial Mathematics},
  VOLUME   = {5},
  NUMBER   = {1},
  PAGES    = {137--152},
  YEAR     = {2014},
  DOI      = {10.1137/120897511},
  URL      = {https://doi.org/10.1137/120897511},
}

@article{AlfonsiAcevedo.14,
  AUTHOR   = {Alfonsi, Aur{\'e}lien and Acevedo, Jos{\'e} Infante},
  TITLE    = {Optimal execution and price manipulations in time-varying limit order books},
  JOURNAL  = {Appl. Math. Finance},
  FJOURNAL = {Applied Mathematical Finance},
  VOLUME   = {21},
  NUMBER   = {3},
  PAGES    = {201--237},
  YEAR     = {2014},
  DOI      = {10.1080/1350486X.2013.845471},
  URL      = {https://doi.org/10.1080/1350486X.2013.845471},
}

@article{FruthSchonebornUrusov.13,
  AUTHOR   = {Fruth, Antje and Sch\"{o}neborn, Torsten and Urusov, Mikhail},
  TITLE    = {Optimal trade execution and price manipulation in order books with time-varying liquidity},
  JOURNAL  = {Math. Finance},
  FJOURNAL = {Mathematical Finance. An International Journal of Mathematics, Statistics and Financial Economics},
  VOLUME   = {24},
  NUMBER   = {4},
  PAGES    = {651--695},
  YEAR     = {2014},
  DOI      = {10.1111/mafi.12022},
  URL      = {https://doi.org/10.1111/mafi.12022},
}

@article{FruthSchonebornUrusov.19,
  AUTHOR   = {Fruth, Antje and Sch\"{o}neborn, Torsten and Urusov, Mikhail},
  TITLE    = {Optimal trade execution in order books with stochastic liquidity},
  JOURNAL  = {Math. Finance},
  FJOURNAL = {Mathematical Finance. An International Journal of Mathematics, Statistics and Financial Economics},
  VOLUME   = {29},
  NUMBER   = {2},
  PAGES    = {507--541},
  YEAR     = {2019},
  DOI      = {10.1111/mafi.12180},
  URL      = {https://doi.org/10.1111/mafi.12180},
}

@article{AckermannKruseUrusov.21disc,
  AUTHOR   = {Ackermann, Julia and Kruse, Thomas and Urusov, Mikhail},
  TITLE    = {Optimal trade execution in an order book model with stochastic liquidity parameters},
  JOURNAL  = {SIAM J. Financial Math.},
  FJOURNAL = {SIAM Journal on Financial Mathematics},
  VOLUME   = {12},
  NUMBER   = {2},
  PAGES    = {788--822},
  YEAR     = {2021},
  DOI      = {10.1137/20M135409X},
  URL      = {https://doi.org/10.1137/20M135409X},
}

@article{AckermannKruseUrusov.21cont,
  AUTHOR   = {Ackermann, Julia and Kruse, Thomas and Urusov, Mikhail},
  TITLE    = {C\`adl\`ag semimartingale strategies for optimal trade execution in stochastic order book models},
  JOURNAL  = {Finance Stoch.},
  FJOURNAL = {Finance and Stochastics},
  VOLUME   = {25},
  NUMBER   = {4},
  PAGES    = {757--810},
  YEAR     = {2021},
  DOI      = {10.1007/s00780-021-00464-5},
  URL      = {https://doi.org/10.1007/s00780-021-00464-5},
}

@article{AckermannKruseUrusov.22,
  AUTHOR   = {Ackermann, Julia and Kruse, Thomas and Urusov, Mikhail},
  TITLE    = {Reducing {O}bizhaeva--{W}ang-type trade execution problems to {LQ} stochastic control problems},
  JOURNAL  = {Finance Stoch.},
  FJOURNAL = {Finance and Stochastics},
  VOLUME   = {28},
  NUMBER   = {3},
  PAGES    = {813--863},
  YEAR     = {2024},
  DOI      = {10.1007/s00780-024-00537-1},
  URL      = {https://doi.org/10.1007/s00780-024-00537-1},
}

@article{NutzWebsterZhao.26,
    author = {Nutz, Marcel and Webster, Kevin and Zhao, Long},
    title = {{Unwinding Stochastic Order Flow: When to Warehouse Trades}},
    journal  = {Math. Finance},
    fjournal = {Mathematical Finance. An International Journal of Mathematics, Statistics and Financial Economics},
    volume = {36},
    number = {3},
    pages = {500--527},
    doi = {10.1111/mafi.70019},
    url = {https://onlinelibrary.wiley.com/doi/abs/10.1111/mafi.70019},
    year = {2026}
}

@article{ChenHorstTran.25,
    author = {Chen, Ying and Horst, Ulrich and Tran, Hoang Hai},
    title = {Optimal Trade Execution Strategy and Implementation with Deterministic Market Impact Parameters},
    journal = {Appl. Math. Finance},
    volume = {32},
    number = {1},
    pages = {1--29},
    year = {2025},
    publisher = {Routledge},
    doi = {10.1080/1350486X.2025.2537932},
    url = {https://doi.org/10.1080/1350486X.2025.2537932}
}

@article{HorstKivman.24,
    author = {Horst, Ulrich and Kivman, Evgueni},
    title = {Optimal trade execution under small market impact and portfolio liquidation with semimartingale strategies},
    journal  = {Finance Stoch.},
    fjournal = {Finance and Stochastics},
    volume = {28},
    number = {3},
    pages = {759--812},
    year = {2024},
    publisher = {Springer},
    doi = {10.1007/s00780-024-00536-2},
    url = {https://doi.org/10.1007/s00780-024-00536-2}
}

@article{CaiRosenbaumTankov.17,
    author = {Cai, Jiatu and Rosenbaum, Mathieu and Tankov, Peter},
    title = {{Asymptotic lower bounds for optimal tracking: A linear programming approach}},
    volume = {27},
    journal = {Ann. Appl. Probab.},
    fjournal = {The Annals of Applied Probability},
    number = {4},
    publisher = {Institute of Mathematical Statistics},
    pages = {2455--2514},
    year = {2017},
    doi = {10.1214/16-AAP1264},
    URL = {https://doi.org/10.1214/16-AAP1264}
}

@article{CaiRosenbaumTankov.17feedback,
    author = {Cai, Jiatu and Rosenbaum, Mathieu and Tankov, Peter},
  title   = {Asymptotic Optimal Tracking: Feedback Strategies},
  journal = {Stochastics},
  volume  = {89},
  number  = {6--7},
  pages   = {943--966},
  year    = {2017},
  doi     = {10.1080/17442508.2017.1285304}
}

@article{DolinskyiDolinsky.24,
    author = {Dolinskyi, Leonid and Dolinsky, Yan},
    title = {Optimal liquidation with high risk aversion and small linear price impact},
    volume = {47},
    journal = {Decis. Econ. Finance},
    fjournal = {Decisions in Economics and Finance},
    number = {1},
    publisher = {Springer},
    pages = {183--198},
    year = {2024},
    doi = {10.1007/s10203-024-00435-3},
    URL = {https://doi.org/10.1007/s10203-024-00435-3}
}

\end{document}